\newcommand{\remove}[1]{}
\newcommand{\cM}{{\mathcal{M}}}
\newcommand{\f}{\mathcal{F}}
\newcommand{\rbtl}{\mathcal{F}_{\mathrm{BIT}}^{[n],\rho_r,\rho_w}}  
\newcommand{\bt}{\mathcal{F}_{\mathrm{BIT}}^{[n]}} 
\newcommand{\rbtlr}{\mathcal{F}_{\mathrm{BIT}}^{[n],\rho_r,1}}  
\newcommand{\AO}{\mathcal{F}_{\mathrm{AO}}^{[n],\rho_r,1}} 
\begin{document}
\title{Non-Malleable Codes with Leakage and Applications to Secure Communication }
\author{Fuchun Lin, Reihaneh Safavi-Naini, Mahdi Cheraghchi and Huaxiong Wang}

\maketitle
\begin{abstract} Non-malleable codes are randomized codes 
that protect  coded messages against modification by functions in a tampering  function class. These codes are motivated by providing tamper resilience in applications where 
 a cryptographic secret  is stored in a tamperable storage device and the protection goal is to ensure that the adversary cannot benefit from their  tamperings with the device. In this paper we consider  non-malleable codes for protection of secure communication against active physical layer adversaries. 
We define a class of functions that closely  model tampering of communication by adversaries who can eavesdrop on a constant fraction of the transmitted  codeword, and use this information to select a vector of  tampering functions that will be applied to a second constant fraction of  codeword components (possibly overlapping with the first set).  We derive rate  bounds   for non-malleable codes for this function class and give two modular constructions. The first construction adapts  and provides new analysis for an existing construction in the new setting. The second construction uses a new  approach that results in an explicit construction of non-malleable codes.   We  show applications of our results in securing message communication against active physical layer adversaries in two settings: wiretap II with active adversaries and Secure Message Transmission (SMT) in networks. We discuss our results and directions for future work.

\remove{
Non-malleble codes are randomised codes that protect against a class of tampering functions. These codes have been mainly considered for protection of cryptographic secrets when the adversary tampers with the hardware and storage. In this paper we consider a class of function that models tampering of physical layer adversaries. The function class consists of functions that represent an adversary who eavesdrop on a constant fraction of a codeword and uses this information to select a vector of bit tampering functions that will be applied to another fraction of the codeword components. We derive rate upper bounds and a rate lower bound for non-malleable codes for this class of functions, and give two constructions. We finally show two applications of these non-malleable codes to active adversary wiretap channel and network secure message transmission. 
}

\end{abstract}
\section{Introduction}
Non-Malleable codes  (NM-codes) \cite{DzPiWi} provide protection against active adversaries who can tamper with   coded messages
 using a function from a family $\mathcal{F}$, of tampering functions.
 NM-codes were motivated by providing tamper resilience in  cryptographic applications such as protection of secret keys that are stored in tamperable storage devices (e.g. smart cards) that can be subjected to physical manipulations that  would affect  the values of the  stored secret.
NM-codes ensure the basic security requirement that
the   tampering (using functions from the function class  $\f$) cannot 
 be used to generate related  cryptographic values (e.g. a digital signature for  related keys). 
Roughly speaking, 
a coding scheme  $(\mathsf{Enc}, \mathsf{Dec})$ provides  non-malleability with respect to the tampering family $\mathcal{F}$ if for any  $f\in\mathcal{F}$, a  codeword $\mathbf{c}$ that encodes a message $\mathbf{m}$, the decoding of 
$f( \mathbf{c})$ 
  results in either the original message $\mathbf{m}$,
or a value $\tilde{\mathbf{m}}$ that is unrelated to $\mathbf{m}$, and the probability of which of the two happens is independent of $\mathbf{m}$. 
 This property in the application scenario above will ensure  that the tampering with the device (stored codeword of the key) will result in either an unchanged key, or a key that is unrelated to the original key (and hence an unrelated digital signature). 
 A slightly stronger notion is  {\em strong non-malleability} that  effectively requires that  the   decoded message $\tilde{\mathbf{m}}$   of a modified codeword $f(\mathsf{\mathbf{c}}) =\tilde{\mathbf{c}}$, where $\tilde{\mathbf{c}}  \neq \mathbf{c}$, 
 be independent of  $\mathbf{m}$.  
 %
NM-codes have found other applications in computational cryptography, including construction of non-malleable commitment \cite{Majicrypto,MajiTCC}, and domain extension for public key encryption systems \cite{public key}. NM-codes have been studied in both information-theoretic and computational setting. 
In this paper we consider information-theoretic setting.

%
  Traditional protection goals
against tampering of codewords are error \textit{correction} and error \textit{detection}: correction allows the original message to be recovered, and detection allows the decoder to detect that the message has been modified. 
These protections are  achieved 
  for the class of additive functions    with a bound on the number of  tampered codeword components   (a codeword $\mathbf{c}$ is tampered to $\tilde{\mathbf{c}}=\mathbf{c}\oplus\mathbf{e}$ and $\mathsf{wt}(\mathbf{e})$ is bounded). 
NM-codes  can  provide protection against much more powerful adversaries with access to  much larger function families  
 by using randomised coding schemes  and weakening the protection goal to only ensuring that the adversary cannot benefit  by manipulating a particular message. 
\remove{
tampering would be limited to ``random and so the goal is to provide protection against as large tampering family as possible.

Non-malleable codes consider a weaker protection goal: roughly speaking, 
a coding scheme (Enc;Dec) is non-malleable with respect to the tampering class $\mathcal{F}$, if for any $f\in\mathcal{F}$, decoding $f(\mathsf{Enc}(\mathbf{m}))$ results in either the original message $\mathbf{m}$
or a value $\tilde{\mathbf{m}}$ that is unrelated to $\mathbf{m}$, and the probability of which of the two happens is independent of $\mathbf{m}$.

Non-malleable codes have been motivated by cryptographic applications such as protecting a secret key that is stored in a small tamperable storage systems (e.g. smart cards) that can be subject to physical manipulations that will affect the value of the secret. The goal is to ensure that the tampering will not compromise security by generating cryptographic outputs (e.g. a digital signature) for a related key. Storing the key in Non-malleably coded form in the device, ensures that the tampering with the device (codeword) will result in either an unchanged key, or a key that is unrelated to the original key. More recently non-malleable codes have found other applications in computational cryptography, for example construction of non-malleable commitment \cite{Majicrypto,MajiTCC}, and domain extension for public key encryption systems \cite{public key}. Non-malleable codes have been studied in information-theoretic and computational setting (e.g. \cite{Liucomputational,optimalcomputational,Lmoreext}). In this paper we consider information-theoretic setting.

Study of non-malleable codes starts with existence questions (whether a code exists for a given family), 
and then efficiency bounds and concrete constructions. 
}
Storage efficiency of non-malleable codes is measured by the {\em rate } of these codes,  given by   the ratio of the message length to the codeword length. The highest achievable rate of  coding schemes for a function family is the {\em capacity }of the coding scheme for the  family.
 
 The ultimate goal of NM-codes is to construct high rate  codes with efficient (computational complexity) encoding and decoding algorithms  for large families of functions. 
In \cite{ChGu0} it is proved that if $|\mathcal{F}|\leq 2^{2^{n\alpha}}$, then the capacity is lower-bounded by 
 $1-\alpha$. The proof uses a probabilistic construction  
of codes that achieves this rate but the code obtained is inefficient ( the construction uses a greedy algorithm). 
Efficient (i.e. polynomial time) non-malleable codes have been constructed \cite{efficientNMC} when the size of the family is 
  $|\mathcal{F}|<2^{p(n)}$ for some polynomial $p$. 
A widely studied family of NM-codes is the bit-wise independent NM-codes that is defined with respect to the bit-wise independent tampering family $\bt$: for binary codewords of length $n$, the tampering function is represented by a vector of $n$ independently chosen functions $(f_1,\cdots,f_n)$, where $f_i$ is a binary   tampering function belonging to $\mathcal{F}_{\mathrm{BIT}}^{[1]} = \{ \mathsf{Set0}, \mathsf{Set1}, \mathsf{Keep}, \mathsf{Flip}\}$, where $\mathsf{Set0}$ and $\mathsf{Set1}$  set the value of the bit to $0$ and $1$, respectively, and  $ \mathsf{Keep}$ and $ \mathsf{Flip}$ will keep and flip the bit, respectively.  
 Non-malleable codes for protection against (simultaneous)  independent bit-wise tampering and
permutation have been proposed in \cite{Majicrypto,MajiTCC}. 
A second widely studied function family  is called {\em $C$-split state model } where for a constant $C$,  the codeword consists of 
$C$ blocks, and each block is tampered independently.  A number of  constructions of these codes  for $C = 2$ in \cite{onebit,additivecombinatorics,NMreduction} and other values of $C$, for example $C = 10$ in \cite{10split},  have been proposed.
All these  function families are naturally suited to the motivating scenario of protecting a stored secret against tampering of the device, and protection approaches that are based on splitting the secret and storing  each part on a different hardware (with the assumption that
they are not all accessible to the adversary). 

In this paper we consider non-malleable codes for protection of communication against physical layer adversaries who tamper with  transmitted codewords. Physical layer security has been pioneered by Wyner \cite{Wyner} who showed message transmission with perfect information-theoretic secrecy and without a shared secret  key is possible  if the adversary does not have full view of the codeword. This incomplete view may be due to the random noise in the adversary's  channel, or  their limited reception and access to the codeword. 
In Wyner's original wiretap model  \cite{Wyner} the eavesdropper's view of the channel is partially obstructed by the noise,  and in Ozarow and Wyner's wiretap II model \cite{WtII}, the eavesdropper can
select the codeword components that they want to eavesdrop, subject to an upper bound on the number of such components.
In both models the adversary is a {\em passive eavesdropping adversary.}
\remove{
Wiretap model has found wide interest in theory and
practice \cite{Matthieu} with thousands of papers in multiple disciplines (e.g. mathematics,
computer science and electrical engineering) on models, foundations and actual implementations. The recent strengthening of security of wiretap codes to the level of cryptographic security (semantic security) \cite{BTV}, has  made these models a strong contender for quantum-safe communication. 

Wyner considered two wiretap models. In the original wiretap model \cite{Wyner} the eavesdropper's view of the channel is partially obscured by the noise. In wiretap II \cite{WtII}, the eavesdropper
selects a subset of components for eavesdropping. In both cases however the adversary is a passive eavesdropping adversary. (The receiver's transmission in the former case is corrupted by the random noise, and in the latter case is noise free.)
}
 Our goal is to  provide protection against active physical layer adversaries that are modelled by a function family.
\subsection*{Our work}
We study NM-codes for a class of  functions that closely reflect capabilities of physical layer adversaries.
We consider adversaries  who have access to directional antennas and advanced  transceivers, and can selectively read (eavesdrop) and tamper with different parts of a codeword. 
The adversary can choose an index set $S_r$ of codeword components to eavesdrop, and an index set $S_w$ of codeword components to tamper with, and the tampering is bitwise (component-wise). The choice of $S_r$ and $S_w$ is adaptive and for each component, taking into account all previous choices that had been made and codeword component values accessed, until that point. The sizes of the two sets are bounded by
$|S_r| \leq n\rho_r$ and $|S_w| \leq n\rho_w$, for two constants $0\leq \rho_r,\rho_w\leq 1$.
We model these adversaries, when the codewords  are binary, by a function family  denoted by $\rbtl$.
The size of this family depends on the actual values of $\rho_r,\rho_w$ and 
is at least $(2^{n\rho_w})^{2^{n\rho_r}}$, which is exponentially larger than the size of $\bt $ 
($4^n$). (This latter class   can be seen as a special case of the former  when $\rho_r=0$ and $\rho_w=1$.)
This is because the eavesdropping set of the adversary allows them to choose their tampering functions depending on the read components of the codeword.
Thus each tampering function $f_i$ will be 
 a function of $\alpha = \mathbf{c}_{S_r}$, that is, the read value of the  codeword $\mathbf{c}$ on the $S_r$ positions.
We obtain rate bounds and give constructions for this class of functions. We also give applications of our results  in two types of communication settings: a wiretap II channel with active adversaries  and secure message transmission in networks.

\vspace{1mm}
\noindent
\textbf{Rate bounds.} 
Storage efficiency of  NM-codes  for a function family 
is measured using the rate of the codes for the family.
We give two sets of results, depending on the non-malleability notion.
For strong non-malleability, 
we  prove capacity of 
 non-malleable codes with respect to $\rbtl$  is 
 $1-\rho_r$. 
 The proof is by deriving an upper bound and a lower bound on the rate of these codes.
  The proof of the upper bound  (Lemma \ref{lem: strong is privacy})  is by proving that    strong non-malleability  with respect to $\rbtl$ implies  indistinguishibility security of the code against an 
  adversary who can eavesdrop  $\rho_r$ fraction of codeword components (wiretap II adversary), and then using rate upper bound of wiretap II codes for this adversary to obtain the upper bound for NM-codes.
The lower bound uses 
 \cite[Theorem 3.1]{ChGu0}  and reduces to finding  an upper bound on the number of functions in $\rbtl$. 

For (default) non-malleability, we prove that if  $\rho_r\leq\rho_w$,  the capacity  of the coding scheme is $1-\rho_r$.
  The rate lower bound follows from the rate lower bound for strong non-malleability,  as it is proved \cite[Theorem 3.1]{DzPiWi} that the latter codes also provide default non-malleability.  To prove the upper bound we 
 build on a result from \cite[Theorem 5.3]{ChGu0} that was proved  for the $C$-split state model. Our proof   requires $\rho_r\leq\rho_w$.
   When $\rho_r> \rho_w$, we show that the rate of NM-codes with respect to $\rbtl$ can exceed $1-\rho_r$.  We leave the upper bound (and hence capacity)  for this case as an  open question.

\vspace{1mm}
\noindent
\textbf{Constructions.} 
We give two constructions.  The first one is based on a modular construction that had been proposed for the function family $\bt$  \cite{DzPiWi}.
We construct a new proof that shows  that with appropriate choice of parameters, one can obtain non-malleability against our new class of tampering functions where the choice of the tampering functions depends on the read values.
The second construction uses a novel approach that relies on a new (not used in the context of NM-codes)  building block and using the security notion of indistinguishability security.

\vspace{1mm}
\noindent
\textit{Construction 1} uses an Algebraic Manipulation Detection (AMD) code \cite{AMD} and a
Linear Error Correcting Secret Sharing (LECSS) \cite{DzPiWi}: the encoding of a message $\mathbf{m}$ is given by LECSS(AMD($\mathbf{m}$)). AMD codes protect against additive errors of oblivious  adversaries (the codeword is not seen by the adversary). A $(t,d)$-LECSS has $t$-uniformity
(every $t$ components is $t$-wise independent, and each bit is uniformly distributed), and  minimum (Hamming) distance $d$. To prove non-malleability, for each  function $f$  we construct a probability distribution 
$\mathcal{D}_f$ that for all  
messages $\mathbf{m}$,  can be used to simulate the decoding of the tampered codeword. 
 The distribution $\mathcal{D}_f$ is obtained by averaging a set of distributions, one for each  read value 
 of the eavesdropped part of the codeword.
 We  borrow techniques from \cite{DzPiWi} and extend them to cater for the new much larger function class.
Theorem \ref{th: basic construction} shows that for $\rbtl$ function class,  to achieve the   level of security  that is provided by a $ (t,d)$-LECSS for the function class $\bt$ (same as $\rho_r=0$ and $\rho_w=1$), 
we need a $(t',d')$-LECSS with $t'=t+n\rho_r$ and $d'=(1-\rho_r)d$. 
That is we need to increase $t$-uniformity of  LECSS to $t' = t+n\rho_r$, but the minimum distance can be reduced.  There is no known construction of LECSS that meets 
the requirements of the construction in \cite{DzPiWi} or our construction,  and so it is unclear if the new set of parameters is harder (or easier) to achieve in concrete constructions.

\vspace{1mm}
\noindent
\textit{Construction 2} uses a novel approach  to the construction of non-malleable codes in
the sense that 
 instead of relying on the $t$-uniformity of LECSS, uses indistinguishability  security of  wiretap II codes.
  The 
  construction uses an AMD code and a linear wiretap II code with indistinguishability security WT: the encoding of a message $\mathbf{m}$ is given by WT(AMD($\mathbf{m}$)). 
  Wiretap II codes are randomised codes that provide indistinguishability security against an eavesdropping adversary that can 
  adaptively eavesdrop a fraction of codeword components. The indistinguishability security is defined as follows:
for 
$|S|\leq n \rho$, 
and 
any two messages $\mathbf{m}_0$ and $\mathbf{m}_1$, 
SD(Enc($\mathbf{m}_0)_{S}$;Enc($\mathbf{m}_1)_{S}$) $\leq \varepsilon$. 
Theorem \ref{con: cute bit'}  shows  
that using a  wiretap II code for   
$\rho=\frac{1+\rho_r}{2}$ and security parameter $\varepsilon$ and an AMD code with error parameter $\delta$, 
results in an NM-code with security parameter $\delta+ 2\epsilon$.
An important advantage of this construction is that there are  explicit constructions for linear 
wiretap II codes 
satisfying $\rho=\frac{1+\rho_r}{2}$ that   use cosets of linear error correcting codes \cite{Markus}, and so we  obtain  an explicit construction of NM-codes with respect to  $\rbtl$ function class using error correcting codes (and using efficient  AMD code construction in \cite{AMD} that has flexible parameters). 
%
A by-product of this construction is an explicit construction of non-malleable codes for $\bt$ 
using error correcting codes. To our knowledge this is the first and the only known direct construction of non-malleable codes for this function family. The code exists for all $n =2^h-1$ and $h\geq 5$.

\vspace{1mm}
\noindent
\textbf{Applications.} 
We motivated the function class  $\rbtl$  by considering physical layer adversaries 
who can  eavesdrop the communication and then choose their tampering functions accordingly. 
The function class also models  adversaries in storage systems \cite{AMD,DzPiWi,ICITS} where the storage is partially leaked to the adversary. 
In the following we apply our results to two  physical layer communication security scenarios that have been widely studied. Before outlining our results,  we  discuss applicability of  \textit{non-malleability as a protection goal in communication security.}

In  the basic physical layer security setting (e.g. wiretap models),  Alice wants to send a message to Bob  and protection is against an eavesdropping adversary.  Using  NM-codes allows protection against  active adversaries with access to a family of tampering functions for which traditional error correction and detection is not possible.  This protection is desirable in cases such as key agreement protocols  where 
the eavesdropper's goal is to influence the shared key.

The protection through NM-code for securing message transmission in the above setting however, does not allow Bob to know if the received message is the  one sent by Alice, or an unrelated one that is the result of tampering. 
\remove{
In this setting Alice and Bob are trusted.

 of secure message transmission  We consider two scenarios: (a) Alice and Bob are trusted and protection is against an outsider adversary; and (b) collusion attack where Alice (or Bob) is dishonest and would like to modify the message without being detected at the point of transmission (reception).
In (a) using non-malleable codes ensures tampering results in a message that is unrelated to the sent one, at the receiver. This is the protection achievable in traditional non-malleable codes. The protection however does not allow the receiver to know if the received message is the sent one, or an unrelated one that is the result of tampering. 
\textcolor{red}{In Section 2 we show that by using structured message spaces, one can also achieve (limited) error detection.} \textcolor{blue}{COMMENT: I think you want to argue a ``AMD style'' result, which says if the offset $\Delta$ is independently added to the codeword, AMD decoder can detect the change. But here it is $\mathbf{m}$ becomes $\tilde{\mathbf{m}}$ and only guarantee is $\tilde{\mathbf{m}}$ is independent of $\mathbf{m}$. Suppose you restrict $\mathbf{m}$ using some structure (known to everyone). The adversary can make $\tilde{\mathbf{m}}$ distributed over messages satisfy that structure. This strategy is valid because $\tilde{\mathbf{m}}$ is independent of the particular message encoded.}
} 
An  interesting application of 
 NM-codes to protection  of  message transmission is against \textit{collusion attacks}, 
 where a dishonest protocol participant  (Alice or Bob) uses a helper to modify the transmitted message to a desired value. 
 Consider 
 a malicious sender who sends a message $\mathbf{m}$, and uses the helper to modify  the codeword during transmission so that the decoded message is a desired value $\mathbf{m}'$. The sender does not have access to an out-of-band channel to send extra information to the helper and the only help they can receive is defined by the class of tampering functions that are available to the helper. Using non-malleable codes with protection against this function family will guarantee that helper cannot help the sender in anyway. An example of such setting is known as \textit{Terrorist Fraud} in authentication protocols \cite{TF}.


\vspace{1mm}
\noindent
\textit{Protecting wiretap II channel against active adversaries.} 
A \textit{$(\rho_r,1)$-active adversary wiretap II code}  is a coding scheme that provides (i) indistinguishability  security against $\rho_r$ leakage,
and (ii) non-malleability against $\rbtlr$ ($\rho_w=1$).   
Our results on 
strong non-malleability can be used to show   that 
 the  secrecy capacity\footnote{ The highest achievable code rate satisfying (i) and (ii).} for $(\rho_r,1)$-active adversary wiretap II code, is $1-\rho_r$ (Theorem \ref{th: active WtII}). 
Our (default) non-malleable code  constructions 
 give 
 (explicit) constructions for 
 $(\rho_r,1)$-active adversary wiretap II codes (Theorem \ref{th: active WtII construction}). 
The  rate of the second construction that uses  wiretap II codes (Section \ref{sec: second construction})
is $\frac{1-\rho_r}{2}$. 
If the wiretap II code in the construction was to provide 
 protection only against eavesdropping, then 
 it could achieve the  rate $1-\rho_r$. Thus  
$\frac{1-\rho_r}{2}$ is the cost of providing non-malleability against tampering family  $\rbtl$ in addition to protection against the eavesdropping adversary. 
A similar construction (different parameters) had been used in  \cite{ICITS} for providing detection of active adversaries in wiretap II setting, when the adversary  uses 
  the eavesdropped part of the codeword to select an {\em offset vector that will be added to the whole codeword},
and it was  proved  that the achievable  rate of the code is $1-\rho_r$.
Their  function family is considerably 
smaller than the family $\rbtlr$, considered here.

\vspace{1mm}
\noindent
\textit{Protecting communication in networks.} Secure Message Transmission (SMT) in networks that are partially controlled by a Byzantine adversary has been studied in \cite{DDWY93}, where the network is modelled as a set of $n$ node disjoint paths (also called {\em wires})  that connect the sender to the receiver. The adversary is active and controls a subset of size $t$ of the wires. 
An 
 $(\varepsilon,\delta)$-SMT protocol guarantees that the information leakage (indistinguishability of adversary's view for two messages) is bounded by $\varepsilon$,  and  
  reliability guarantee  is given by
  $\mathsf{Pr}[M_S \neq M_R]\leq \delta$, where $M_S$ and $M_R$ are the sent and received messages, respectively.  
It has been proved that  
 SMT exists only if $n\geq2t+1$ \cite{Franklin}. 
We define   $(\varepsilon,\delta)$-NM-SMT 
for a network adversary  whose  tampering capability is defined by a function class $\f$,  and
require indistinguishability privacy and reliability   
in terms of non-malleability, against this adversary.
Our construction in Section  \ref{sec: second application} is for $\f$ defined as follows. 
Let $w_i$ denote the   transcript of the $i^{th}$ wire, and let $w_i \in {\cal W}$ for all $i$. The adversary adaptively chooses a set $S\subset [n] $ of   $t$ wires, eavesdrop and arbitrarily   tampers with   them.
The adversary also uses the  values of the eavesdropped wires in $S$  
to tamper with the remaining  $[n]\setminus S$ wires, each by either replacing $w_i$ with   a chosen constant $a_i\in {\cal W}$, or  choosing a constant $a_i\in {\cal W}$ and adding it to $w_i$.
\remove{
construct a value to be added to the transmission or simply block the transmission and replace with a chosen value. 
We call it Additive/Overwrite tampering with $\rho_r=\frac{t}{n}$ leakage. In order to be able to apply our results on non-malleable codes to this new variant of SMT, 
}
The function family defined by the above adversary on ${\cal W}^n \rightarrow {\cal W}^n $ is denoted by $\AO$.
For ${\cal W}= \mathbb{F}_q$ we show that 
the construction  in Section \ref{sec: second construction} 
can be extended to $q$-ary alphabet, resulting in a  $(0,\delta)$-NM-SMT, where $\delta$ is the security parameter of AMD code (Theorem \ref{th: SMT}).

\vspace{1mm}
\noindent
\subsection*{Other related work.}
The concept of non-malleability in cryptography was  introduced by Dolev et al. \cite{NMcryptography} and has since become a fundamental notion in cryptographic systems.
Dziembowski el al. \cite{DzPiWi} introduced non-malleable codes in the context of tamper resilience and providing protection for secrets that are stored in
tamperable hardware.   There is a large body of works on NM-codes including computational NM-codes \cite{Liucomputational,optimalcomputational,Lmoreext},  and  codes with  extra properties such as continuously tampering models \cite{CNMC,TDC,public key},  locally decodable/updatable \cite{LDNMC,ITLDNMC},  and block-wise \cite{block-wise}  that are not directly related to our work. 
In particular, leakage resilient  NM-codes  \cite{Liucomputational} consider a tampering family for non-malleability and a different leakage family for leakage resilience. In our model of NM-codes, there is only one family of functions $\rbtl$ and the goal is non-malleability only. 
Bound on the rate of non-malleable codes was first studied in \cite{ChGu0}.  Authors present a general lower bound for any family of tampering functions that only depends on the size of the family, 
and an upper bound for a 
 family  of tampering functions that arbitrarily act on a subset $S\subset[n]$ of codeword components.  
A tampering class with apparent similarity with our work is   \cite{circuit paper,small-depth circuits}. 
This function class  $\mathsf{Local}^{\ell_o}$ 
 consists of  functions $f:\{0,1\}^n\rightarrow\{0,1\}^n$ 
 where each output bit depends on at most $\ell_o(n)$ input bits.
The tampering functions  in $\rbtl$   are vector of bit tampering functions where  each bit function depends on a subset of size $n\rho_r$ of read components of the vector.
That is unlike the function class  $\mathsf{Local}^{\ell_o}$ where each output bit is determined by a subset of input bits, in  $\rbtl$ a  subset of components of the input codeword determines the vector of functions that will then be applied to the whole codeword.

Non-malleable code constructions for  $\bt$   include,  the first construction  in \cite{DzPiWi}, the first capacity-achieving construction \cite{ChGu1}, capacity-achieving and additionally non-malleable against permutation \cite{MajiTCC}, capacity-achieving and linear time encode/decode \cite{linear time}. Explicit constructions of information-theoretic $C$-split state
include $2$-split state with one-bit message \cite{onebit}, multi-bit message \cite{additivecombinatorics}, constant rate \cite{NMreduction} and $10$-split state \cite{10split}.
Non-malleable codes for non-binary alphabets 
are considered in \cite{linear time} and 
constructions of linear-time encode/decode non-malleable codes  with respect to a tampering class  $\mathcal{F}^+$ that similar to $\bt$ consists of a vector of independently chosen tampering functions, where each function can be  from  $\mathcal{F}$ or an   overwrite functions introduced above (referred to as $ \mathcal{F}_{const}$ in \cite{linear time}), or an identity function $id$.   The code constructions are $\ell$-fold.


\section{Preliminaries}
Coding schemes define the basic properties  for codes that are used in this paper. Let  $\bot$  denote  a special symbol.  

\begin{definition}[\cite{DzPiWi}] A $(k,n)$-coding scheme consists of two functions: a randomised 
encoding function 
$\mathsf{Enc}:\{0,1\}^k\rightarrow\{0,1\}^n$, where the randomness is implicit, and a deterministic decoding function $\mathsf{Dec}:\{0,1\}^n\rightarrow\{0,1\}^k\cup\{\perp\}$ such that, for each $\mathbf{m}\in\{0,1\}^k$, $\mathsf{Pr}[\mathsf{Dec}(\mathsf{Enc}(\mathbf{m}))=\mathbf{m}]=1$ (correctness), and the probability is  over the randomness of the encoding algorithm.
\end{definition}

The rate of a $(k,n)$-coding scheme is the ratio $\frac{k}{n}$.  
For a family of $(k,n(k))$-coding schemes, the  
{\em achievable rates of  the family } is the supremum of the rates of schemes as $k$ grows to infinity.
A {\em tampering function}  
for a $(k,n)$-coding scheme is any function $f:\{0,1\}^n\rightarrow\{0,1\}^n$.

\begin{definition}[\cite{DzPiWi}] \label{def: non-malleability}Let $\mathcal{F}$ be a family of tampering functions.
 For each $f\in\mathcal{F}$ and $\mathbf{m}\in\{0,1\}^k$, define the tampering-experiment
$$
\mathrm{Tamper}_\mathbf{m}^f=\left\{
\begin{array}{c}
\mathbf{x}\leftarrow\mathsf{Enc}(\mathbf{m}),\tilde{\mathbf{x}}= f(\mathbf{x}),\tilde{\mathbf{m}}=\mathsf{Dec}(\tilde{\mathbf{x}})\\
\mathrm{Output}\ \tilde{\mathbf{m}},\\
\end{array}
\right\}.
$$
which is a random variable over the randomness of the encoding function  $\mathsf{Enc}$.
A coding scheme $(\mathsf{Enc},\mathsf{Dec})$ is  {\em non-malleable with respect to $\mathcal{F}$}  if for each $f\in\mathcal{F}$, there exists a distribution
 $\mathcal{D}_f$ over the set $\{0,1\}^k\bigcup\{\perp, \mathsf{same}^*\}$, such that, for all $\mathbf{m}\in\{0,1\}^k$, we have:
\begin{equation}\label{eq: NMCdef}
\mathrm{Tamper}_\mathbf{m}^f\stackrel{\varepsilon}{\approx}\left\{
\begin{array}{c}
\tilde{\mathbf{m}}\leftarrow\mathcal{D}_f\\
\mathrm{Output}\ \mathbf{m}\ \mathrm{ if }\ \tilde{\mathbf{m}}=\mathsf{same}^*,\ \mathrm{and}\ \tilde{\mathbf{m}}\ \mathrm{ otherwise;}
\end{array}
\right\}
\end{equation}
and $\mathcal{D}_f$ is  efficiently samplable given oracle access to $f(\cdot)$. Here ``$\stackrel{\varepsilon}{\approx}$''  refers to statistical or computational indistinguishability. In the case of statistical indistinguishability,   the scheme has exact-security $\varepsilon$  if the above  statistical distance is at most $\varepsilon$.
\end{definition}
The right hand side of (\ref{eq: NMCdef}),   denoted  by $\mbox{Patch}(\mathcal{D}_f,\mathbf{m})$ in \cite{DzPiWi}, is a random variable defined 
by  the distribution $\mathcal{D}_f$ and  the message  $\mathbf{m}$.
Using this notation,  (\ref{eq: NMCdef}) can be written as,
\begin{equation} \label{nmdef}
\mathrm{Tamper}_\mathbf{m}^f\stackrel{\varepsilon}{\approx}\mbox{Patch}(\mathcal{D}_f,\mathbf{m}). 
\end{equation}

A stronger notion of non-malleability is the following.
\begin{definition}[\cite{DzPiWi}] \label{def: strong non-malleability} Let $\mathcal{F}$ be a family of tampering functions. For each $f\in\mathcal{F}$ and $\mathbf{m}\in\{0,1\}^k$, define the tampering-experiment
$$
\mathrm{StrongNM}_\mathbf{m}^f=\left\{
\begin{array}{c}
\mathbf{x}\leftarrow\mathsf{Enc}(\mathbf{m}),\tilde{\mathbf{x}}= f(\mathbf{x}),\tilde{\mathbf{m}}=\mathsf{Dec}(\tilde{\mathbf{x}})\\
\mathrm{Output}\ \mathsf{same}^*\ \mathrm{ if }\ \tilde{\mathbf{x}}=\mathbf{x}\ \mathrm{, and }\ \tilde{\mathbf{m}}\ \mathrm{ otherwise.}\\
\end{array}
\right\},
$$
which is a random variable  over the randomness of the encoding function $\mathsf{Enc}$. A coding scheme $(\mathsf{Enc},\mathsf{Dec})$ is {\em strongly non-malleable w.r.t. $\mathcal{F}$}  if for any $\mathbf{m}_0,\mathbf{m}_1\in\{0,1\}^k$ and any $f\in\mathcal{F}$, we have:
\begin{equation}\label{eq: strongNMCdef}
\mathrm{StrongNM}_{\mathbf{m}_0}^f\stackrel{\varepsilon}{\approx} \mathrm{StrongNM}_{\mathbf{m}_1}^f.
\end{equation}
\end{definition}

It is proved   \cite[Theorem 3.1]{DzPiWi} that strong non-malleability implies (default) non-malleability. The (default) non-malleability however 
is strictly weaker than strong non-malleability and 
does not  imply strong non-malleability.

We will use the following coding schemes in our constructions in Section \ref{sec: constructions}.
\begin{definition}[\cite{AMD}]\label{def: AMD} Let $(\mathsf{AMDenc}, \mathsf{AMDdec})$ be a coding scheme with $\mathsf{AMDenc}: \{0,1\}^k\rightarrow\{0,1\}^n$. We say that $(\mathsf{AMDenc}, \mathsf{AMDdec})$ is a $\delta$-secure Algebraic Manipulation Detection (AMD) code if for all $\mathbf{m}\in\{0,1\}^k$ and all non-zero $\Delta\in\{0,1\}^n$, we have $\mathsf{Pr}[\mathsf{AMDdec}(\mathsf{AMDenc}(\mathbf{m})+\Delta)\notin\{\mathbf{m},\bot\}]\leq \delta$, where the probability is over the randomness of the encoding.
\end{definition}

Efficient AMD codes can be constructed using polynomials over finite fields.
\begin{lemma}[\cite{AMD}]\label{lem: AMD} There exists an AMD code $(\mathsf{AMDenc}, \mathsf{AMDdec})$ with encoder $\mathsf{AMDenc}: \{0,1\}^k\rightarrow\{0,1\}^{k+2u}$ that satisfies $\mathsf{Pr}[\mathsf{AMDdec}(\mathsf{AMDenc}(\mathbf{m})+\Delta)\neq\bot]\leq (k/u+1)/2^u$.
\end{lemma}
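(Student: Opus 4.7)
The plan is to exhibit the polynomial-based AMD construction of Cramer, Dodis, Fehr, Padr\'o and Wichs. Identify $\{0,1\}^u$ with the finite field $\mathbb{F}_q$ for $q = 2^u$, and view a message $\mathbf{m} \in \{0,1\}^k$ as a vector $(m_1,\ldots,m_d) \in \mathbb{F}_q^d$ with $d = k/u$. The encoder $\mathsf{AMDenc}$ samples $x \in \mathbb{F}_q$ uniformly at random, computes the tag $t = f(x,\mathbf{m})$ where $f(x,\mathbf{m}) := x^{d+2} + \sum_{i=1}^{d} m_i x^i$, and outputs the triple $(\mathbf{m}, x, t) \in \mathbb{F}_q^{d+2}$. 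As a bit string this has length $(d+2)u = k+2u$, as required. The decoder $\mathsf{AMDdec}$ parses the received word as $(\mathbf{m}', x', t')$, returns $\mathbf{m}'$ if $t' = f(x', \mathbf{m}')$ and $\bot$ otherwise; correctness of an undisturbed codeword is immediate.

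For the security bound I would fix any nonzero offset $\Delta = (\Delta_m, \Delta_x, \Delta_t) \in \mathbb{F}_q^{d+2}$ and analyze the event that the tampered word passes the decoder's check, namely $f(x+\Delta_x,\mathbf{m}+\Delta_m) = f(x,\mathbf{m}) + \Delta_t$. Rearranging puts this in the form $P(x) = 0$ for a polynomial $P$ in the single indeterminate $x$ whose coefficients are fixed by $\mathbf{m}$ and $\Delta$. The key claim is that $P$ is a nonzero polynomial of degree at most $d+1$: the two $x^{d+2}$ contributions cancel, leaving leading coefficient $(d+2)\Delta_x$ at $x^{d+1}$. Casework on which component of $\Delta$ is nonzero shows $P\not\equiv 0$: if $\Delta_x\neq 0$ the leading coefficient is nonzero; if $\Delta_x=0$ but $\Delta_m\neq 0$ then $P(x) = \sum_i \Delta_{m,i} x^i - \Delta_t$ has some nonzero coefficient at a positive power of $x$; and if only $\Delta_t\neq 0$ then $P\equiv -\Delta_t$ is a nonzero constant. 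By the elementary root-count bound applied to the uniformly random $x \in \mathbb{F}_q$, the probability that $P(x)=0$ is at most $(d+1)/q = (k/u+1)/2^u$, matching the stated bound.

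The main technical subtlety is the coprimality of $d+2$ with the characteristic $2$: when $d$ is even, the coefficient $(d+2)\Delta_x$ vanishes in $\mathbb{F}_{2^u}$, so the $x^{d+1}$ term disappears and the leading-coefficient case of the casework collapses. The standard remedy is to replace the exponent $d+2$ in $f$ by the next odd integer (so $d+3$ when $d$ is even), which increases the degree bound on $P$ by at most one and is absorbed into the stated asymptotic bound; alternatively one may restrict attention to block lengths for which $d$ is odd. Apart from this bookkeeping the analysis is routine, and the resulting $(\mathsf{AMDenc},\mathsf{AMDdec})$ satisfies Definition \ref{def: AMD} with the advertised parameters.
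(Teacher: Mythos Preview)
The paper does not prove this lemma; it is quoted verbatim from the cited reference \cite{AMD} and used as a black box. Your proposal is precisely the polynomial-evaluation construction of Cramer--Dodis--Fehr--Padr\'o--Wichs, and your analysis is the standard one, so there is nothing to compare against.

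One small point on the characteristic-$2$ wrinkle you flag at the end: if $d$ is even and you bump the top exponent from $d+2$ to $d+3$, then $\deg P \le d+2$ rather than $d+1$, and the root-count bound becomes $(k/u+2)/2^u$, not the exact $(k/u+1)/2^u$ stated in the lemma. Saying this is ``absorbed into the stated asymptotic bound'' is not quite right, since the lemma gives a concrete inequality rather than an $O(\cdot)$ statement. The clean way to recover the exact bound (and what is implicit in \cite{AMD}) is to assume $d=k/u$ is odd to begin with---equivalently, round $k$ up by at most $u$ bits so that $k/u$ is an odd integer---which costs nothing asymptotically and makes $d+2$ coprime to $2$ as needed. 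With that adjustment your argument is complete.
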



Note that the AMD code constructed in Lemma \ref{lem: AMD} is in fact a \textit{tamper detection} code \cite{TDC}, which requires that a tampered codeword is always decoded to $\perp$. 
We say 
an AMD code 
achieves \textit{$\delta$-tamper detection security}  if for all $\Delta\neq 0^n$, $\mathsf{Pr}[\mathsf{AMDdec}(\mathsf{AMDenc}(\mathbf{m})+\Delta)\neq\bot]\leq \delta$.
%

%
The first construction of NM-codes with  respect to $\bt$ in \cite{DzPiWi} uses the following \textit{Linear Error-Correcting Secret Sharing (LECSS)} scheme.
\begin{definition}[\cite{DzPiWi}] \label{def: LECSS} Let $(\mathsf{LECSSenc}, \mathsf{LECSSdec})$ be a coding scheme with messages $\mathbf{m}\in\{0,1\}^k$ and codewords $\mathbf{x}\in\{0,1\}^n$. We say that the scheme is a $(d,t)$-$\mathsf{LECSS}$ if the following properties hold:
\begin{itemize}
\item Linearity: For all $\mathbf{x}\in\{0,1\}^n$ such that $\mathsf{LECSSdec}(\mathbf{x})\neq\bot$, and for all $\mathbf{x}'\in\{0,1\}^n$, we have 
$$
\mathsf{LECSSdec}(\mathbf{x}+\mathbf{x}')=\left\{
\begin{array}{cl}
\bot&,\ \mathrm{if }\ \mathsf{LECSSdec}(\mathbf{x}')=\bot;\\
\mathsf{LECSSdec}(\mathbf{x})+\mathsf{LECSSdec}(\mathbf{x}')&,\ \mathrm{otherwise}.
\end{array}
\right.
$$
\item $d$-distance: For all non-zero $\tilde{\mathbf{x}}\in\{0,1\}^n$ with Hamming weight less than $d$, we have $\mathsf{LECSSdec}(\tilde{\mathbf{x}})=\bot$. 
\item $t$-uniform: For any fixed $\mathbf{m}\in\{0,1\}^k$, we define the random variables $X=(X_1,\cdots,X_n)=\mathsf{LECSSenc}(\mathbf{m})$, where $X_i$ denotes the bit of $X$ in position $i$ and  randomness is from the encoding algorithm. Then the random variables $\{X_i\}_{1\leq i\leq n}$ are individually uniform over $\{0,1\}$ and $t$-wise independent.
\end{itemize}
\end{definition}

In wiretap II model \cite{WtII} 
Alice wants to send messages to Bob over a reliable channel that is eavesdropped by an adversary, 
 Eve, who for a codeword of length $n$,  is allowed to choose any subset of size $\rho n $ of the codeword components for eavesdropping.

\begin{definition}\label{def: WtII} 
A $(\rho,\varepsilon)$-Wiretap II code, or $(\rho,\varepsilon)$-$\mathrm{WT}$ code for short, is a 
$(k,n)$-($q$-ary) coding scheme that satisfies the following privacy property. For any $\mathbf{m}_0,\mathbf{m}_1\in\mathbb{F}_q^k$, any $S\subset [n]$ of size $|S|\leq n\rho$,
\begin{equation}\label{eq: WtII security}
\mathsf{SD}(\mathsf{Enc}(\mathbf{m}_0)_{S};\mathsf{Enc}(\mathbf{m}_1)_{S})\leq \varepsilon.
\end{equation}
A $(\rho,\varepsilon)$-$\mathrm{WT}$ code is called linear if for two vectors $\mathbf{x}_0,\mathbf{x}_1\in\mathbb{F}_q^n$, 
$$
\mathsf{Dec}(\mathbf{x}_0+\mathbf{x}_1)=\left\{
    \begin{array}{cl}
         \bot,&\ \mathrm{ either }\ \mathsf{Dec}(\mathbf{x}_0)=\bot\ \mathrm{ or }\ \mathsf{Dec}(\mathbf{x}_1)=\bot;\\
         \mathsf{Dec}(\mathbf{x}_0)+\mathsf{Dec}(\mathbf{x}_1),&\ \mathrm{otherwise}.
     \end{array}
\right.
$$
\end{definition}

The above indistinguishability based definition of security 
 is  equivalent to 
semantic security which is the strongest notion of security   in cryptography.

\begin{lemma}[\cite{WtII}]\label{lem: wtupperbound}
The rate of wiretap II code with leakage parameter $\rho$ is upper bounded by $1-\rho$.
\end{lemma}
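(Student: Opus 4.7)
The plan is a standard information-theoretic argument. Take $M$ uniformly distributed over $\{0,1\}^k$, let $X=\mathsf{Enc}(M)\in\{0,1\}^n$, and fix any $S\subset[n]$ with $|S|=\lfloor n\rho\rfloor$, writing $X=(X_S,X_{\bar S})$ for $\bar S=[n]\setminus S$. I would split
$$k \;=\; H(M) \;=\; I(M;X_S)\;+\;H(M\mid X_S),$$
and bound each term separately: correctness controls the conditional-entropy term, while the $(\rho,\varepsilon)$-WT privacy controls the mutual-information term. Summing the two bounds then gives the rate upper bound in the limit $\varepsilon\to 0$.

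For the conditional entropy, correctness forces $M$ to be a deterministic function of the full codeword, so $H(M\mid X_S,X_{\bar S})=0$ and hence
$$H(M\mid X_S)\;\leq\;H(M,X_{\bar S}\mid X_S)\;=\;H(X_{\bar S}\mid X_S)\;\leq\;|\bar S|\;\leq\;n(1-\rho).$$
For the mutual information, fix an arbitrary reference message $m_0$. The indistinguishability guarantee forces every conditional $P_{X_S\mid M=m}$ to lie within statistical distance $\varepsilon$ of $P_{X_S\mid M=m_0}$, and by convexity the marginal $P_{X_S}$ is $\varepsilon$-close to $P_{X_S\mid M=m_0}$ as well. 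A continuity-of-entropy inequality such as Fannes' then converts these total-variation bounds into $|H(X_S)-H(X_S\mid M=m)|\leq 2\varepsilon|S|+2h(\varepsilon)$ for every $m$, and averaging over $m$ yields $I(M;X_S)=H(X_S)-H(X_S\mid M)\leq 2\varepsilon n+2h(\varepsilon)$.

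Combining the two estimates gives $k/n\leq(1-\rho)+2\varepsilon+2h(\varepsilon)/n$, which tends to $1-\rho$ as $\varepsilon\to 0$ and $n\to\infty$. The hard part is the continuity step: Fannes' inequality pays a factor of $\log|\mathcal{X}_S|$ that is linear in $n$, so the slack $2\varepsilon n$ only vanishes when $\varepsilon=o(1)$ (and in fact $\varepsilon=o(1/n)$ is needed if one wants the slack to be $o(1)$ in the rate). With perfect secrecy $(\varepsilon=0)$ this entire subtlety evaporates and one obtains the crisp finite-length inequality $k\leq n(1-\rho)$ directly; the asymptotic statement in Lemma~\ref{lem: wtupperbound} implicitly adopts the standard Ozarow--Wyner capacity regime in which $\varepsilon\to 0$.
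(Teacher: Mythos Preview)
The paper does not give its own proof of this lemma; it simply cites Ozarow--Wyner \cite{WtII} and remarks that the bound there was proved under \emph{weak secrecy}, i.e.\ with the security measure $H(M\mid X_S)$ (equivalently $I(M;X_S)=o(n)$) taken as the definition. Under that definition the Fannes step you carry out is unnecessary: one has $k=I(M;X_S)+H(M\mid X_S)\le o(n)+n(1-\rho)$ directly, and dividing by $n$ gives the bound.

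Your argument is correct and is the natural adaptation of the same decomposition to the indistinguishability-based Definition~\ref{def: WtII} used in this paper. The only genuinely new ingredient you add is the continuity-of-entropy step converting the statistical-distance bound into a mutual-information bound; your caveat that this requires $\varepsilon\to 0$ (and in fact $\varepsilon=o(1)$ suffices for the rate statement, since the slack is $O(\varepsilon)+O(h(\varepsilon)/n)$) is well taken and matches the asymptotic capacity regime in which the lemma is invoked throughout the paper. So your route is sound, slightly more general than what the paper relies on (since it handles the stronger security notion directly), and differs from the cited original only in that extra continuity step.
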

This bound was proved with respect to weak secrecy  \cite{WtII} that assume uniform message distribution and use security measure  $H({\bf M|c_{S}})$, where $H()$ is Shannon entropy and ${\bf M}$ is the random variable associated with the message.
 A $(t, d)$-LECSS construction can be used as a  linear wiretap II code with $\rho=\frac{t}{n}$ and $\varepsilon=0$, however the converse is not true in general. This is because 
privacy requirement of wiretap code is in terms of almost $t$-wise independence instead of $t$-uniformity, and minimum distance of these codes can be $1$.  
Another closely related primitive is {\em  linear secret sharing scheme}, which is  usually 
 studied over large alphabets (share size) and requires reconstruction of message from  subset of codeword components of size above the reconstruction threshold. 

\section{Bit Tampering with Leakage} \label{sec: model and bounds}

Our proposed tampering class is defined by two parameters  $(\rho_r,\rho_w)$. We first define the function class,  and then prove rate bounds for codes that provide non-malleability for this class.  

\subsection{$(\rho_r,\rho_w)_{\mathrm{BIT}}$-$\mathrm{NMC}$}
Let $ \mathcal{F}_{\mathrm{BIT}}^{[1]} = \{ \mathsf{Set0}, \mathsf{Set1}, \mathsf{Keep}, \mathsf{Flip}\}$
denote the set  of functions that tamper with one bit, and $ \mathcal{F}_{\mathrm{BIT}}^{[n]}$ denote the set of $n$-bit bit-wise independent  tampering functions. 
Each $f\in \mathcal{F}^{[n]}_{\mathrm{BIT}}$ is specified by a vector 
$(f_1,f_2,\cdots ,f_n)$ where $f_i\in    \mathcal{F}_{\mathrm{BIT}}^{[1]}$.  
For  a vector $\mathbf{x}=(x_1,x_2,\cdots, x_n)\in \{0,1\}^n$ ,  $f(\mathbf{x})$  is a  vector $\tilde{\mathbf{x}}=(\tilde{x}_1,\tilde{x}_2,\cdots, \tilde{x}_n)\in \{0,1\}^n$, where $\tilde{x}_i= f_i(x_i), i=1,\cdots, n$.  

Let $ [n] =  \{1,2,\cdots n \}$.
We define the set $ \rbtl$ 
as the set of  bitwise tampering functions where the adversary (i)  adaptively selects  a subset $S_r\subset [n]$ of size $n\rho_r$ codeword components for eavesdropping, and (ii)   tampers bitwise with a subset $S_w\subset[n] $ of size $n\rho_w$ of codeword components, each using a function from $ \mathcal{F}_{\mathrm{BIT}}^{[1]} $.  The adversary can choose any pair of subsets $S_r,S_w$, subject to the bound on their sizes.
Let 
${\cal S}^{[n]}_\rho$ 
denote the set of subsets of size $\rho n$ of  $[n]$.
We use ${\cal S}^{[n]}_{\rho_r}$  and ${\cal S}^{[n]}_{\rho_w}$, with cardinality
  $ {n \choose n\rho_r}$
and  ${ n \choose n\rho_w}$, respectively.
For a vector $\mathbf{x}=(x_1,x_2,\cdots, x_n)\in \{0,1\}^n$ and a set $S=\{i_1, i_2\cdots i_{|S|}\}\subset [n]$,  let $\mathbf{x}_{S}$ denote the subvector  $(x_{i_1}, x_{i_2}\cdots x_{i_{|S|}})$.
\begin{definition}\label{def: tampering functions}

Let $\mathbf{x}$  be    a binary vector of length $n$, and   $\mathbf{x }_{S_r}$ and $\mathbf{x}_{S_w}$ denote the subvectors  with components in
the set  ${S_r} \subset [n]$ and ${S_w}\subset [n]$, respectively.  The function  $ g: \{0,1\}^{n\rho_r}\rightarrow   {\cal F}_{\mathrm{BIT}}^{[n\rho_w]}$ defines a vector of  {bit} tampering functions, dependent on  a vector of length $n\rho_r$ (read values in $\mathbf{x }_{S_r}$). 
For fixed  $ {S_r}$,  $ {S_w}$ and $g$ values,   the  tampering function $f_{S_r,S_w,g}:\{0,1\}^n\rightarrow\{0,1\}^n$  takes a vector $\mathbf{x}$ and  results in a vector  $\tilde{\mathbf{x}}=f_{S_r,S_w,g}(\mathbf{x})$ where:
\begin{equation}\label{eq: function}
\tilde{\mathbf{x}}_{S_w} = (g( \mathbf{x}_{S_r}))( \mathbf{x}_{S_w}) \mbox{ and }  \tilde{\mathbf{x}}_{{\bar{S}_w}}=\mathbf{x}_{\bar{S}_w}.    
\end{equation} 
 In other words,    $ f_{S_r,S_w, g}$  modifies
components  of the input vector that are in $S_w$,  {using} the  tampering functions  $g( \mathbf{x}_{S_r})$ and leaves the other components unchanged. 
Let $ g^{\mathbf{x}_{S_r}} \stackrel{def}= g( \mathbf{x}_{S_r})$ denote a vector of $n\rho_w$ {bit} tampering functions.
%
For a fixed  $(\rho_r, \rho_w)$ pair, we define the tampering function family,
\begin{equation}\label{eq: set}
\rbtl    \stackrel{def}=     \left\{f_{S_r,S_w,g}\ | \ S_r\in {\cal S}^{[n]}_{\rho_r} ,S_w\in {\cal S}^{[n]}_{\rho_w},  
g: \{0,1\}^{n\rho_r}\rightarrow   {\cal F}_{\mathrm{BIT}}^{[n\rho_w]}
 \right\}.
\end{equation}

\remove{
where the tampering function $f_{S_r,S_w,g}:\{0,1\}^n\rightarrow\{0,1\}^n$ is defined as follows:
for $\mathbf{x}\in\{0,1\}^n$, depending on the value of $\mathbf{x}_{S_r}$,   $g$ selects a  
bit-wise tampering function  $g( \mathbf{x}_{S_r} )$ from   ${\cal F}_{\mathrm{BIT}}^{ [n\rho_w]} $  
which   is then applied to the  bits in $S_w$.
That is,  the tampering result $f_{S_r,S_w,g}(\mathbf{x})$ for the components in $S_w$ and ${\bar{S}_w}$ are given by,
\begin{equation}\label{eq: function}
(f_{S_r,S_w,g}(\mathbf{x}))_{S_w} = (g( \mathbf{x}_{S_r}))( \mathbf{x}_{S_w}) \mbox{ and }  (f_{S_r,S_w,g}(\mathbf{x}))_{\bar{S}_w}=\mathbf{x}_{\bar{S}_w}.    
\end{equation} 

We write $g^{\mathbf{x}_{S_r}}( \mathbf{x}_{S_w})$  to represent
$(g( \mathbf{x}_{S_r}))( \mathbf{x}_{S_w} )$.

}

\end{definition}

\smallskip
\noindent
\textbf{Sizes of $S_r$ and $S_w$.}  Note that in the above definition,  the sets $S_r$ and $S_w$ have the exact sizes  $n\rho_r$ and $n\rho_w$, respectively.
The  function set  however includes all functions with $|S_r| \leq n\rho_r$ and $|S_w|\leq n\rho_w$. This is because a set $S_r$ of size $ n\rho_r-\ell$, where $\ell$ is an integer satisfying $1\leq \ell <n\rho_r$,  is a subset of  a set $S'_r$ of size $n\rho_r$, where $\ell$ components have not been used in selecting the tampering functions in $g$. Similarly,   a set $S_w$ of size $ n\rho_w-\ell$, where $\ell$ is an integer satisfying $1\leq \ell <n\rho_w$,   is a subset of  a set $S'_w$ of size $n\rho_w$, where $\ell$ components  are $\mathsf{Keep}$ function. 
Thus  although we focus on 
$S_r$ and $S_w$ with exact sizes  $n\rho_r$ and $n\rho_w$,  our results hold if one considers all function vectors that are determined by 
 $S_r$ and $S_w$ of any size up to the corresponding upper bounds. 
This is particularly important as we use this function class for modelling physical layer adversaries, and   adversaries can choose set sizes arbitrarily (up to their reading and writing capabilities).

\smallskip
\noindent
\textbf{Special Case Example $\mathcal{F}_{\mathrm{BIT}}^{[n],0,1}=\bt$.} It is easy to see that in $\rbtl$,  when $\rho_r=0$ the adversary does not have any access to the codeword, and 
$\rho_w=1$ implies that all components of a codeword will be tampered  bitwise and independently. The function class is thus the same as $\bt$.
\remove{
As a quick example, we show $\mathcal{F}_{\mathrm{BIT}}^{[n],\rho_r,\rho_w}$ includes $\mathcal{F}_{\mathrm{BIT}}^{[n]}$ as a special case. 
Let $\rho_r=0$ and $\rho_w=1$.  Here $\rho_r=0$ means adversary cannot see any of the codeword components, and their choice of the tampering function is oblivious.
Because of $\rho_w=1$, the writing set of the adversary is $[n]$ and so the adversary effectively tampers bitwise with the codeword components. This is  
 the set of functions given by $\mathcal{F}_{\mathrm{BIT}}^{[n]}$.
This also follows by noting that
$S_r=\emptyset$ and $S_w=[n]$, 
and $g:\{0,1\}^{0}\rightarrow\mathcal{F}_{\mathrm{BIT}}^{[n]}$ implies that $g$ is a constant function, namely, $g$ is of the form
$g(\cdot)\equiv(f_1,\cdots,f_n)$ for some $(f_1,\cdots,f_n)\in\mathcal{F}_{\mathrm{BIT}}^{[n]}$. 

}

\smallskip
\noindent
\textbf{Subsets of $\rbtl$ for Fixed $(S_r,S_w)$.}
For a fixed pair  
 of reading index set $S_r$ and  writing index set $S_w$,   let, 
\begin{equation}\label{eq: rwset}
\mathcal{F}_{\mathrm{BIT}}^{[n],S_r,S_w}=\left\{f_{S_r,S_w,g}|g:\{0,1\}^{n\rho_r}\rightarrow \mathcal{F}_{\mathrm{BIT}}^{[n\rho_w]} \right \}.
\end{equation}
According to (\ref{eq: set}),  we have: 
\begin{equation}\label{eq: over rwset}
\rbtl=\bigcup_{S_r\in {\cal S}^{[n]}_{\rho_r},S_w\in {\cal S}^{[n]}_{\rho_w}}\mathcal{F}_{\mathrm{BIT}}^{[n],S_r,S_w}.
\end{equation}

\begin{definition}\label{def: rhorrhowNMC} A $(k,n)$-coding scheme is called a  $(\rho_r,\rho_w)_{\mathrm{BIT}}$-Non-Malleable Code ($(\rho_r,\rho_w)_{\mathrm{BIT}}$-$\mathrm{NMC}$)
 if it is a non-malleable coding scheme 
 with respect to $\rbtl$.
\end{definition}

\subsection{Rate Bounds for $(\rho_r,\rho_w)_{\mathrm{BIT}}$-$\mathrm{NMC}$} \label{sec: bounds}
The highest achievable rate of  coding schemes for the function family $\rbtl$  is the {\em capacity } of the coding scheme for this  family.
We provide rate results for the two notions of non-malleability.

{\vspace{1mm}
\noindent
{\bf Strong non-malleability.} This stronger notion puts more stringent requirement on the code and allows us to characterise the capacity for the $\rbtl$ function family.
\begin{theorem}\label{th: strongNM}
The capacity  of strong $(\rho_r,\rho_w)_{\mathrm{BIT}}$-$\mathrm{NMC}$ is $1-\rho_r$. 
\end{theorem}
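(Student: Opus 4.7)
The plan is to establish the claimed capacity by a matching upper and lower bound on the rate, one derived from a reduction to wiretap II privacy and the other from a counting argument combined with the probabilistic existence result of Cheraghchi--Guruswami.

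For the upper bound, I would show that any strongly non-malleable code against $\rbtl$ is automatically a wiretap II code with leakage parameter $\rho_r$ and privacy error at most $\varepsilon$; Lemma~\ref{lem: wtupperbound} then forces its rate to be at most $1-\rho_r$. Fix messages $\mathbf{m}_0,\mathbf{m}_1$ and a set $S_r$ of size $n\rho_r$. For every predicate $T:\{0,1\}^{n\rho_r}\to\{0,1\}$ I would construct a single tampering function $f_T\in\rbtl$ as follows: pick any index $i\notin S_r$, take $S_w$ to be any set of size $n\rho_w$ containing $i$, and define $g$ so that $g^{\mathbf{x}_{S_r}}$ is the all-$\mathsf{Keep}$ vector when $T(\mathbf{x}_{S_r})=0$, and equals $\mathsf{Flip}$ at position $i$ with $\mathsf{Keep}$ elsewhere when $T(\mathbf{x}_{S_r})=1$. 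With this choice, $\tilde{\mathbf{x}}=\mathbf{x}$ exactly when $T(\mathsf{Enc}(\mathbf{m})_{S_r})=0$, hence
\begin{equation*}
\Pr[\mathrm{StrongNM}_{\mathbf{m}}^{f_T}=\mathsf{same}^{*}]=\Pr[T(\mathsf{Enc}(\mathbf{m})_{S_r})=0].
\end{equation*}
Strong non-malleability forces these two probabilities to agree within $\varepsilon$ for $\mathbf{m}_0$ and $\mathbf{m}_1$. Quantifying over all predicates $T$ gives $\mathsf{SD}(\mathsf{Enc}(\mathbf{m}_0)_{S_r};\mathsf{Enc}(\mathbf{m}_1)_{S_r})\le\varepsilon$, i.e.\ wiretap II privacy with leakage $\rho_r$, closing the upper-bound step.

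For the lower bound, I would invoke \cite[Theorem 3.1]{ChGu0}, which gives achievable rate $1-\alpha-o(1)$ whenever the tampering family has size at most $2^{2^{n\alpha}}$. Using the parametrisation in (\ref{eq: set}), the number of tampering functions satisfies
\begin{equation*}
|\rbtl|\le \binom{n}{n\rho_r}\binom{n}{n\rho_w}\cdot\bigl(4^{n\rho_w}\bigr)^{2^{n\rho_r}}\le 2^{O(n)+2n\rho_w\,2^{n\rho_r}}=2^{2^{n\rho_r+o(1)}},
\end{equation*}
so the hypothesis of \cite[Theorem 3.1]{ChGu0} applies with $\alpha=\rho_r+o(1)$ and yields codes of rate arbitrarily close to $1-\rho_r$.

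The main obstacle is the reduction in the upper bound: the $\mathrm{StrongNM}$ experiment conflates the event $\{\tilde{\mathbf{x}}=\mathbf{x}\}$ with the decoded residual message, so $f_T$ has to be engineered so that this event is driven solely by the predicate $T$ on the leaked bits and not by other encoding randomness. Choosing a single fixed index $i\notin S_r$ and acting by $\mathsf{Flip}$ versus identity achieves this cleanly whenever $\rho_r<1$ and $\rho_w\ge 1/n$; the degenerate corners $\rho_r=1$ (where the claim degenerates to rate $0$) and $\rho_w=0$ (where $\rbtl$ contains only the identity and the theorem is either vacuous or excluded) should be handled as short separate remarks.
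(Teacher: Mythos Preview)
Your proposal is correct and follows essentially the same approach as the paper: the upper bound via the reduction to wiretap II privacy using a $\mathsf{Keep}$/$\mathsf{Flip}$ tampering on a single bit (the paper simply takes bit $1$ and phrases it contrapositively via a distinguishing set $D_\varepsilon$ rather than quantifying over predicates $T$), and the lower bound via the same size estimate $|\rbtl|\le\binom{n}{n\rho_r}\binom{n}{n\rho_w}(4^{n\rho_w})^{2^{n\rho_r}}$ fed into \cite[Theorem 3.1]{ChGu0}. The only cosmetic difference is that the paper does not bother to single out the degenerate boundary cases you mention.
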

The proof of this theorem uses a theorem in \cite{ChGu0}  and Lemma \ref{lem: strong is privacy} below, 
to derive a lower bound and an upper bound on the achievable rates of the coding schemes, respectively.  We include the theorem for completeness.
\remove{
 Theorem 3.1
We use two lemmas to prove this theorem. The first lemma (Lemma \ref{th: 1-rho rate}) is due to \cite{ChGu0}. It provides a lower bound on the achievable rate of non-malleable coding schemes for any family $\mathcal{F}$ with a size constraint. In particular, it proves that a probabilistic construction parameterized by $T$ and $\delta$, where $T$ is the number of codewords corresponding to one message and $\delta$ is the relative distance (any two codewords of the coding scheme are $n\delta$ Hamming distance away), gives a non-malleable code with respect to $\mathcal{F}$ with high probability. In the second lemma (Lemma \ref{lem: strong is privacy}), we prove that strong non-malleability implies wiretap II privacy (and hence the upper bound of wiretap II code is an upper bound of $(\rho_r,\rho_w)_{\mathrm{BIT}}$-$\mathrm{NMC}$).
}


\vspace{1mm}
\noindent
{\bf (Theorem 3.1, \cite{ChGu0})}.
{\em 
Let $\mathcal{F}$ be any family of tampering functions from $n$-bit to $n$-bit. 
There exists a construction parameterized by $T$ and $\delta$, such that
for any $\varepsilon,\eta>0$, with probability at least $1-\eta$, the $(k,n)$-coding scheme obtained is a strong non-malleable code with respect to $\mathcal{F}$ with exact security $\varepsilon$ and relative distance $\delta$, provided that both of the following conditions are satisfied. 
\begin{enumerate}
\item $T\geq T_0$, for some
$$
T_0=O\left(\frac{1}{\varepsilon^6}\left(\log\frac{|\mathcal{F}^{[n]}|2^n}{\eta}\right)\right).
$$
\item $k\leq k_0$, for some
$$
k_0\geq n(1-h_2(\delta))-\log T-3\log\left(\frac{1}{\varepsilon}\right)-O(1),
$$
where $h_2(\cdot)$ denotes the binary entropy function. 
\end{enumerate}
Thus by choosing $T=T_0$ and $k=k_0$, the construction satisfies 
$$
k\geq n(1-h_2(\delta))-\log\log\left(\frac{|\mathcal{F}|}{\eta}\right)-\log n-9\log\left(\frac{1}{\varepsilon}\right)-O(1).
$$ 
In particular, if $|\mathcal{F}|\leq 2^{2^{n\alpha}}$ for any constant $\alpha\in(0,1)$, the rate of the code can be made arbitrarily close to $1-h_2(\delta)-\alpha$ while allowing $\varepsilon=2^{-\Omega(n)}$.
}

The following lemma relates strong non-malleability with respect to $\rbtl$, to indistinguishability security of  $(\rho_r,\varepsilon)$-wiretap II codes.

\begin{lemma} \label{lem: strong is privacy} 
If a coding scheme is strongly non-malleable with respect to $\rbtl$ with exact security $\varepsilon$, then it is a $(\rho_r,\varepsilon)$-$\mathrm{WT}$ code.
\end{lemma}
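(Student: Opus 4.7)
The plan is a direct reduction. For each candidate ``distinguishing set'' $T$, I will construct a single tampering function $f_T \in \rbtl$ whose $\mathrm{StrongNM}$ experiment outputs $\mathsf{same}^*$ precisely when $\mathbf{x}_S \in T$. Strong non-malleability then forces those two probabilities to be $\varepsilon$-close across any two messages, and supremizing over $T$ converts this into the statistical-distance bound demanded by Definition~\ref{def: WtII}.

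Fix any $S \subset [n]$ with $|S| \leq n\rho_r$ and any $T \subseteq \{0,1\}^{|S|}$. Choose $S_r \supseteq S$ of size exactly $n\rho_r$ (the ``Sizes of $S_r$ and $S_w$'' remark following Definition~\ref{def: tampering functions} allows this padding, since unused read bits may simply be ignored by $g$), pick any $S_w$ of size $n\rho_w$, and fix a distinguished coordinate $j^* \in S_w$. Define $g : \{0,1\}^{n\rho_r} \to \mathcal{F}_{\mathrm{BIT}}^{[n\rho_w]}$ to depend only on the bits indexed by $S$, via
$$
g(\mathbf{x}_{S_r}) = \begin{cases} \text{the all-}\mathsf{Keep}\text{ tuple}, & \mathbf{x}_S \in T, \\ \text{the all-}\mathsf{Keep}\text{ tuple with }\mathsf{Flip}\text{ at coordinate }j^*, & \mathbf{x}_S \notin T. \end{cases}
$$
Let $f_T := f_{S_r,S_w,g} \in \rbtl$. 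On input $\mathbf{x} = \mathsf{Enc}(\mathbf{m})$, the tampered codeword $\tilde{\mathbf{x}} = f_T(\mathbf{x})$ equals $\mathbf{x}$ iff $\mathbf{x}_S \in T$ (otherwise it differs from $\mathbf{x}$ in bit $j^*$), so by Definition~\ref{def: strong non-malleability},
$$
\Pr[\mathrm{StrongNM}_{\mathbf{m}}^{f_T} = \mathsf{same}^*] = \Pr[\mathsf{Enc}(\mathbf{m})_S \in T].
$$

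Applying the hypothesis $\mathrm{StrongNM}_{\mathbf{m}_0}^{f_T} \stackrel{\varepsilon}{\approx} \mathrm{StrongNM}_{\mathbf{m}_1}^{f_T}$ to the singleton test event $\{\mathsf{same}^*\}$ then gives
$$
\bigl|\Pr[\mathsf{Enc}(\mathbf{m}_0)_S \in T] - \Pr[\mathsf{Enc}(\mathbf{m}_1)_S \in T]\bigr| \leq \varepsilon,
$$
and taking the supremum over $T \subseteq \{0,1\}^{|S|}$ produces $\mathsf{SD}(\mathsf{Enc}(\mathbf{m}_0)_S;\mathsf{Enc}(\mathbf{m}_1)_S) \leq \varepsilon$, which is exactly the $(\rho_r,\varepsilon)$-$\mathrm{WT}$ property of Definition~\ref{def: WtII}.

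The one subtlety — and the main (minor) obstacle — is verifying the membership $f_T \in \rbtl$: this rests on the padding argument for $S_r$ together with the mild hypothesis $n\rho_w \geq 1$, without which the tampering class collapses to the identity and the statement is vacuous. Everything else is a textbook identity-style reduction, with no hidden probabilistic subtlety.
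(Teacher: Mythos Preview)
Your proof is correct and follows essentially the same approach as the paper: both construct a tampering function that reads the leakage set and either keeps the codeword intact or flips a single bit according to whether the read value lies in a distinguishing set, so that the event $\{\mathrm{StrongNM}=\mathsf{same}^*\}$ encodes membership in that set. The only cosmetic difference is that the paper phrases the argument by contradiction (assuming a violation of the WT bound and exhibiting one bad $D_\varepsilon$), whereas you phrase it directly and quantify over all $T$ before taking the supremum; the underlying reduction is identical.
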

\begin{proof}
Proof is by contradiction: we show that if   a strongly non-malleable coding scheme with respect to $\rbtl$ does not satisfy  wiretap II indistinguishability security,  then we can construct a tampering function that violates the strong non-malleability property of the coding scheme.

Assume  a strongly non-malleable coding scheme with respect to $\rbtl$ does not satisfy  wiretap II indistinguishability security. Then,
there exists a reading set $S_r\subset[n]$ of size $|S_r|=n\rho_r$, and a pair of messages $\mathbf{m}_0,\mathbf{m}_1$ such that 
$$
\mathsf{SD}(\mathsf{Enc}(\mathbf{m}_0)_{S_r};\mathsf{Enc}(\mathbf{m}_1)_{S_r})> \varepsilon.
$$
By the definition of statistical distance, there exists a set $D_\varepsilon\subset\{0,1\}^{n\rho_r}$ such that 
$$
|\mathsf{Pr}[\mathsf{Enc}(\mathbf{m}_0)_{S_r}\in D_\varepsilon]-\mathsf{Pr}[\mathsf{Enc}(\mathbf{m}_1)_{S_r}\in D_\varepsilon]|>\varepsilon.
$$
Now consider a tampering function $f_{S_r,\{1\},g}$,  that reads the  codeword components in $S_r$ positions, and   tampers with the first bit  of the codeword based on the read value. 
We define $g:\{0,1\}^{n\rho_r}\rightarrow\{\mathsf{Set0},\mathsf{Set1},\mathsf{Keep},\mathsf{Flip}\}$ using the set $D_\varepsilon$ as follows.
$$
g(\alpha)=
\left\{
\begin{array}{cl}
\mathsf{Keep},&\alpha\in D_\varepsilon; \\
\mathsf{Flip},&\mbox{otherwise}.\\
\end{array}
\right.
$$
Note that $f_{S_r,\{1\},g}$ when applied to a codeword in  ${\bf c}$, will leave it unchanged if ${\bf c}_{S_r} \in D_\varepsilon$, and flips  its first component otherwise.

According to Definition \ref{def: strong non-malleability}, we should have 
$$
|\mathsf{Pr}[\mathrm{StrongNM}_{\mathbf{m}_0}^{f_{S_r,\{1\},g^{D_\varepsilon}}}=\mathsf{same}^*]-\mathsf{Pr}[\mathrm{StrongNM}_{\mathbf{m}_1}^{f_{S_r,\{1\},g^{D_\varepsilon}}}=\mathsf{same}^*]|>\varepsilon.
$$
This  leads to
$$
\mathsf{SD}(\mathrm{StrongNM}_{\mathbf{m}_0}^{f_{S_r,\{1\},g}};\mathrm{StrongNM}_{\mathbf{m}_1}^{f_{S_r,\{1\},g^{D_\varepsilon}}})> \varepsilon,
$$
which contradicts the strong non-malleability of the coding scheme.
\qed
\end{proof}


We use the above two results leads to  the following proof.
\begin{proof}[of Theorem \ref{th: strongNM}]
Theorem 3.1 in \cite{ChGu0} 
 shows that for any function family $\mathcal{F}$ of size  upper bounded by $|\mathcal{F}|\leq 2^{2^{n\alpha}}$,   there is a family of coding schemes that can  achieve the rate $1-\alpha$ 
 arbitrarily close, by  using sufficiently long codes  
 (e.g. let $\delta=\frac{1}{n}$). 
To use this theorem 
  to find a lower bound on the achievable rate of $\mathcal{F}_{\mathrm{BIT}}^{[n],S_r,S_w}$, we need to upper bound the number of functions in the family.
We note that 
 the representation in (\ref{eq: rwset}) may not be unique for a function in $\mathcal{F}_{\mathrm{BIT}}^{[n],S_r,S_w}$.
 In particular, when $S_r\bigcap S_w\neq \emptyset$, it is possible to have  $f_{S_r,S_w,g'}=f_{S_r,S_w,g''}$ for $g'\neq g''$. 
 For example, let   $S_r=S_w = \{ i \}$. Then  two functions $g'$ and $g''$ from the set $\{g: \{0,1\} \rightarrow \{\mathsf{Set0}, \mathsf{Set1}, \mathsf{Flip}, \mathsf{Keep}\} \}$,
  given by $(g'(x_i=0) = \mathsf{Set1};  g'(x_i=1) = \mathsf{Set0}$, and  $(g''(x_i=0) = \mathsf{Flip};  g''(x_i=1) = \mathsf{Flip})$, 
  will represent the same function in  $\mathcal{F}_{\mathrm{BIT}}^{[n],S_r,S_w}$.
  \remove{
 $i\in S_r\bigcap S_w$ and consider  Then applying $\mathsf{Set0}$ to $x_i$ when $x_i=1$ and $\mathsf{Set1}$ when $x_i=0$ is the same as applying $\mathsf{Flip}$ to $x_i$ for both $x_i=0$ and $x_i=1$. 
But for being able to use
}
 We 
  however only requires an upperbound on the number of functions.
Using   (\ref{eq: over rwset}) it is easy to see that 
 $|\rbtl|\leq{n \choose n\rho_r}{n \choose n\rho_w}(4^{n\rho_w})^{2^{n\rho_r}}$.  
%
\remove{
According to  Theorem 3.1 \cite{ChGu0},
 there exists a strong non-malleable code for any tampering family $\mathcal{F}$ as long as $\log\log |\mathcal{F}|\leq n\alpha$,  and the rate can be made arbitrarily close to $1-\alpha$ for big enough $n$ and when there is no relative distance requirement.
 }
  From the above computation, we have
$$
\begin{array}{ll}
\log\log |\rbtl|&\leq 2(\log n +\log\log n) + n\rho_r + \log (n\rho_w) +1\\
                     &\leq n(\rho_r+\xi),
\end{array}
$$
where $\xi$ is an arbitrarily small constant and the inequality holds for large enough $n$. 
Theorem 3.1 in \cite{ChGu0} shows that 
 for any tampering family $\mathcal{F}$  that satisfies $\log\log |\mathcal{F}|\leq n\alpha$,   there is a   coding scheme with rate arbitrarily close to $1-\alpha$.
 Thus 
 the achievable rate of  $(\rho_r,\rho_w)_{\mathrm{BIT}}$-$\mathrm{NMC}$  with strong non-malleability is lower bounded by $1-\rho_r$. 

The upper bound on the rate of these codes follows from Lemma 
\ref{lem: strong is privacy} that implies  that the rate of a coding scheme with strong non-malleability for function family $\rbtl$, cannot exceed the rate 
of wiretap II codes  of length $n$ and with leakage parameter $\rho_r$, and  noting that the upper bound on the rate of 
these latter codes is 
$1-\rho_r$ (see Lemma \ref{lem: wtupperbound}). 
\qed
\end{proof}

{\vspace{1mm}
\noindent
{\bf Default non-malleability.}
For (default) non-malleability,  we have a general lower bound. But the upper bound (and so capacity)   is only known for $\rho_r\leq\rho_w$.

\begin{theorem} \label{th: defaultNM}
The capacity of $(\rho_r,\rho_w)_{\mathrm{BIT}}$-$\mathrm{NMC}$ for $\rho_r\leq\rho_w$ is $1-\rho_r$.
\end{theorem}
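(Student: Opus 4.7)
The proof splits into the achievability direction and the converse.

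\emph{Achievability.} For the rate $\geq 1-\rho_r$ direction I would simply observe that every strongly non-malleable code is a (default) non-malleable code by \cite[Theorem 3.1]{DzPiWi}. Hence the capacity-achieving family constructed in the lower-bound half of Theorem \ref{th: strongNM} is already a family of $(\rho_r,\rho_w)_{\mathrm{BIT}}$-$\mathrm{NMC}$s of rate arbitrarily close to $1-\rho_r$. Note that this half does not need $\rho_r\le\rho_w$.

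\emph{Converse.} For the rate $\leq 1-\rho_r$ direction I would adapt the upper-bound argument of \cite[Theorem 5.3]{ChGu0}, originally stated for the $C$-split-state model. The hypothesis $\rho_r\leq\rho_w$ is used at the very first step: pick $S_r\subseteq S_w$ with $|S_r|=n\rho_r$, $|S_w|=n\rho_w$. Then inside $\mathcal{F}_{\mathrm{BIT}}^{[n],S_r,S_w}$ the bit-tampering vector $g:\{0,1\}^{n\rho_r}\to\mathcal{F}_{\mathrm{BIT}}^{[n\rho_w]}$ is free to encode \emph{any} function of the read vector $\mathbf{x}_{S_r}$ into the pattern it writes on $S_w$; this is precisely the "read-and-rewrite" power that drives the split-state converse of \cite{ChGu0}.

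\emph{The attack.} Assume for contradiction that a $(k,n)$-scheme exists with $k/n>1-\rho_r+\xi$ for some constant $\xi>0$. I would construct a distribution over tampering functions $f_g\in\mathcal{F}_{\mathrm{BIT}}^{[n],S_r,S_w}$ parameterised by an independent "template" codeword $\mathbf{c}^{\star}$ of a uniformly random message $\mathbf{m}^{\star}$. On input $\mathbf{x}=\mathsf{Enc}(\mathbf{m})$ the function $f_g$ first inspects $\mathbf{x}_{S_r}$ and then, depending on the relation between $\mathbf{x}_{S_r}$ and $\mathbf{c}^{\star}_{S_r}$, either leaves $S_w$ alone or overwrites the $S_w$-bits of $\mathbf{x}$ in a way that copies information derived from $\mathbf{c}^{\star}$. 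The rate assumption forces, by a counting/conditional-entropy argument, $H(\mathbf{M}\mid\mathsf{Enc}(\mathbf{M})_{\overline{S_r}})>0$, so that the bits in $S_r$ carry non-trivial information about $\mathbf{M}$; this in turn makes the distribution of $\mathsf{Dec}(f_g(\mathsf{Enc}(\mathbf{m})))$ genuinely depend on $\mathbf{m}$. Averaging over $\mathbf{c}^{\star}$ and $g$ then produces a single tampering function whose tampered-decoding distribution cannot be matched by any message-independent $\mathcal{D}_{f_g}$ in the sense of Definition \ref{def: non-malleability}, contradicting non-malleability.

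\emph{Main obstacle.} The step I expect to be hardest is ruling out the simulator even after the $\mathsf{same}^{\star}$-patching: the simulator is allowed to output $\mathsf{same}^{\star}$ (replaced by $\mathbf{m}$), and this is exactly the slack that separates default from strong non-malleability. The split-state argument in \cite{ChGu0} handles this by engineering an attack whose tampered codeword is almost surely different from the original, so that the $\mathsf{same}^{\star}$ branch contributes negligibly; the combinatorics of porting this trick to bit-wise tampering, while making essential use of $S_r\subseteq S_w$, is the technical heart of the proof. It is also precisely where $\rho_r\leq\rho_w$ enters: without that inclusion we cannot re-encode the entire read pattern into the write region, the attack loses its distinguishing power, and the argument genuinely breaks (consistent with the paper's remark that the $\rho_r>\rho_w$ case is left open).
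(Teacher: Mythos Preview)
Your achievability argument matches the paper exactly.

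For the converse you are working much harder than necessary. The paper does not adapt or re-derive the attack inside \cite[Theorem~5.3]{ChGu0}; it applies that theorem as a black box after one clean observation. Take $S_r=S_w=S$ with $|S|=n\rho_r$ (this is where $\rho_r\le\rho_w$ is used: the family $\rbtl$ contains all functions with writing set of size up to $n\rho_w$, so in particular of size $n\rho_r$). Then $\mathcal{F}_{\mathrm{BIT}}^{[n],S,S}$ is \emph{exactly} the family of all functions that tamper arbitrarily with the coordinates in $S$ and leave $\bar S$ untouched. Indeed, any map $h:\{0,1\}^{|S|}\to\{0,1\}^{|S|}$ is realised by ``read $\mathbf{x}_S$, then overwrite $\mathbf{x}_S$ with the constant string $h(\mathbf{x}_S)$ using $\mathsf{Set0}/\mathsf{Set1}$''; a cardinality count ($(2^{n\rho_r})^{2^{n\rho_r}}$ on both sides) shows the two families coincide. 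Theorem~5.3 of \cite{ChGu0} then applies verbatim to this subfamily and yields the rate upper bound $1-\rho_r$, which a fortiori bounds any code non-malleable against the larger family $\rbtl$.

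Consequently the ``main obstacle'' you flag---controlling the $\mathsf{same}^{\star}$ branch of the simulator---is already handled inside the cited theorem and never needs to be reopened. Your template-codeword attack with a strict inclusion $S_r\subsetneq S_w$ might be pushed through, but the sketch leaves the hardest step unresolved, and the paper's route sidesteps it entirely by choosing $S_r=S_w$.
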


The rate lower bound in the case of (default) non-malleability  follows  from the lower bound on strong non-malleability codes for the same function class,
and noting that a coding scheme that provides strong non-malleability also provides default non-malleability (\cite[Theorem 3.1]{DzPiWi}).
\remove{
$1-\rho_r$ for strong $(\rho_r,\rho_w)_{\mathrm{BIT}}$-$\mathrm{NMC}$ is also a rate lower bound for $(\rho_r,\rho_w)_{\mathrm{BIT}}$-$\mathrm{NMC}$. This is because a strong non-malleable code is always a (default) non-malleable code. 
}
To prove a rate  upper bound for default non-malleability we use the following theorem.



%

\vspace{1mm}
\noindent
{\bf (Theorem 5.3,  \cite{ChGu0})}. 
{\em Let $S\subset[n]$ be of size $\rho n$ and consider the family of  tampering functions that only acts on the coordinate positions in $S$. Then, there is a $\xi_0=O(\frac{\log n}{n})$ such that the following holds. Let $(\mathsf{Enc}, \mathsf{Dec})$ be any $(k,n)$-coding scheme which is non-malleable for the family and achieves rate $1-\rho+\xi$, for any $\xi\in[\xi_0,\rho]$ and error $\varepsilon$. Then $\varepsilon\geq \frac{\xi}{16\rho}$. In particular, when $\rho$ and $\xi$ are absolute constants, $\varepsilon=\Omega(1)$.
}







\begin{proof}[of Theorem \ref{th: defaultNM}]
The lower bound $1-\rho_r$ for strong $(\rho_r,\rho_w)_{\mathrm{BIT}}$-$\mathrm{NMC}$ is also a lower bound for $(\rho_r,\rho_w)_{\mathrm{BIT}}$-$\mathrm{NMC}$. 

In the rest of the proof we show that $1-\rho_r$ is also an upper bound when $S_w\subset S_r$. 
Theorem 5.3 in \cite{ChGu0} 
shows that $1-\frac{|S|}{n}$ is a rate upper bound for non-malleable codes with respect to the family of tampering functions that only act on the coordinate positions in $S\subset [n]$.
We first show  that the set of functions considered in this theorem 
 is the same as the set $\mathcal{F}_{\mathrm{BIT}}^{[n],S, S}$.
Towards this goal, we first show that the size of the two sets are the same. 
%
The total number of   functions that arbitrarily tamper with coordinate positions in $S$ is $(2^{n\rho})^{2^{n\rho}}$.
On the other hand, the set  $\mathcal{F}_{\mathrm{BIT}}^{[n],S,S}$ contains the subset of functions
$$|\left\{f_{S,S,g}|g:\{0,1\}^{n\rho}\rightarrow\{\mathsf{Set0},\mathsf{Set1}\}^{n\rho}\right\}|=(2^{n\rho})^{2^{n\rho}}.$$ 
Note that  
each function in the above description  is distinct  because  tampering of each codeword component in  $S$  can be done in one of the two ways.  
 Thus, $|\mathcal{F}_{\mathrm{BIT}}^{[n],S,S}|\geq (2^{n\rho})^{2^{n\rho}}$.
Noting that the set  $\mathcal{F}_{\mathrm{BIT}}^{[n],S,S}$ is a subset of all functions that tamper with the coordinate positions in $S$, we conclude that  
the two sets have the same size and contain the same functions.
The rate upper bound of $1-\rho_r$ 
 for $(\rho_r,\rho_w)_{\mathrm{BIT}}$-$\mathrm{NMC}$ follows from Theorem 5.3 in \cite{ChGu0} 
 because, when $\rho_r\leq\rho_w$, 
 there exists an  $S\subset [n]$ of size $n\rho_r$ where $\mathcal{F}_{\mathrm{BIT}}^{[n],S,S}\subset\mathcal{F}_{\mathrm{BIT}}^{[n],\rho_r,\rho_w}$,  and for this subset of functions, the upperbound holds.
\qed 
\end{proof}

\begin{remark}
The  proof of Theorem \ref{th: defaultNM} requires  $\rho_r\leq \rho_w$.  
\remove{
above  capacity 
$1-\rho_r$ of (default) $(\rho_r,\rho_w)_{\mathrm{BIT}}$-$\mathrm{NMC}$ holds under the restriction that  
 $\rho_r\leq \rho_w$.  
 }
 For $\rho_r >\rho_w$,  the rate lower bound $1-\rho_r$ remains valid but the upper bound is an open question. 
It is interesting to note that capacity in this case can be higher than $1-\rho_r$. This is  because for small values of 
$\rho_w$,  error correcting codes with  non-zero rate exists and in the case of  error correcting codes  $\rho_r=1$ which using 
$1-\rho_r=0$,  suggests zero rate for NM-codes.  This is however not true 
because error correcting codes are non-malleable and in this case have  non-zero rate.
\end{remark}

\smallskip
\section{Code Constructions}\label{sec: constructions}
Using the results in 
\cite{TDC}, one can  construct NM-codes for 
 tampering family 
 (including  $\rbtl$ family) 
  in the Common Reference String (CRS) model.  
We construct explicit 
and efficient $(\rho_r,\rho_w)_{\mathrm{BIT}}$-$\mathrm{NMC}$ without any setup conditions.   

Our first construction is based on a construction proposed by Dziembowski, Pietrzak and Wichs \cite{DzPiWi} for the set of Bit-wise Independent Tampering (BIT) functions ($\mathcal{F}_{\mathrm{BIT}}^{[n]}$ in our notation).
This construction has inspired a number of other NM-code constructions  \cite{Majicrypto}, \cite{ChGu1}, {\cite{circuit paper} and more recently \cite{linear time}, \cite{public key}. 
The construction 
uses two coding schemes: an AMD (Algebraic  Manipulation Detection, see Definition \ref{def: AMD}) code and a LECSS (Linear Error Correcting Secret Sharing, see Definition \ref{def: LECSS}) with appropriate parameters. 
Explicit construction of LECSS with the required parameters has been an open question. 
%
Our second construction uses 
a linear wiretap II code  and an AMD code.
\remove{
Wiretap II codes are easier to construct. With this construction, we are able to obtain explicit $(\rho_r,\rho_w)_{\mathrm{BIT}}$-$\mathrm{NMC}$s. For example, we obtain a family of $(0,1)_{\mathrm{BIT}}$-$\mathrm{NMC}$s (this is just the BIT non-malleable codes) that can be described using Hamming codes and arithmetic over finite field with characteristic $2$.
}

%


\subsection{Construction 1: LECSS$\circ$AMD }\label{sec: basic construction}
%
We consider the function
class $\rbtl$ with size at least 
$(2^{n\rho_w})^{2^{n\rho_r}}$, which is much larger than $\bt$ (of size $4^n$) that was considered in 
 \cite{DzPiWi}. 

\begin{theorem}\label{th: basic construction} Let $(\mathsf{LECSSenc}, \mathsf{LECSSdec})$ be a $(d',t')$-$\mathrm{LECSS}$ with an encoder $\mathsf{LECSSenc}: \{0,1\}^{\ell}\rightarrow\{0,1\}^n$. Let $(\mathsf{AMDenc}, \mathsf{AMDdec})$ be an AMD code from $\{0,1\}^k$ to $\{0,1\}^\ell$ with $\delta$-tamper detection security. Let $(\mathsf{Enc}, \mathsf{Dec})$ be defined as follows.
\begin{equation}\label{eq: construction}
\left\{
\begin{array}{ll}
\mathsf{Enc}(\mathbf{m})&=\mathsf{LECSSenc}(\mathsf{AMDenc}(\mathbf{m}))\\
\mathsf{Dec}(\mathbf{x})&=\mathsf{AMDdec}(\mathsf{LECSSdec}(\mathbf{x}))\\
\end{array}
\right.
\end{equation}
Then the $(k,n)$-coding scheme $(\mathsf{Enc}, \mathsf{Dec})$ is a $(\rho_r,1)_{\mathrm{BIT}}$-$\mathrm{NMC}$ with exact security 
$\max\{\delta,2^{-\Omega(t'- n\rho_r)}\}$,
if $t'> n\rho_r$ and $d'>\frac{n(1-\rho_r)}{4}$.
\remove{
if depending on the $(\rho_r,\rho_w)$ pair, the parameter of $(d',t')$-$\mathrm{LECSS}$ satisfy the following.
\begin{enumerate}
\item For $\rho_w\geq\frac{1-\rho_r}{2}$: $(d',t')$-$\mathrm{LECSS}$ satisfies $t'> n\rho_r$ and $d'>\frac{n(1-\rho_r)}{4}$;\\
\item For $\rho_w<\frac{1-\rho_r}{2}$: $(d',t')$-$\mathrm{LECSS}$ satisfies $t'> n\rho_r$ and $d'>\frac{n\rho_w}{2}$.\\
\end{enumerate}
}
\end{theorem}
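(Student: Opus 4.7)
The plan is to lift the Dziembowski, Pietrzak and Wichs (DzPiWi) analysis of the $\mathsf{LECSS}\circ\mathsf{AMD}$ construction from the fixed-tampering class $\bt$ to the adaptive leakage class $\rbtlr$ by conditioning on the read value $\alpha = \mathbf{x}_{S_r}$. Since $\rho_w = 1$, once $\alpha$ is known the induced tampering is a fixed bit-wise function $g^\alpha = (g^\alpha_1,\ldots,g^\alpha_n) \in \bt$, and writing each coordinate affinely as $g^\alpha_i(x_i) = a^\alpha_i x_i + b^\alpha_i$ partitions $[n]$ into Keep/Flip positions $K_\alpha = \{i : a^\alpha_i = 1\}$ and Set0/Set1 positions $C_\alpha = \{i : a^\alpha_i = 0\}$. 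I would define the simulator $\mathcal{D}_f$ as a two-stage mixture: first sample $\alpha$ from the marginal on $S_r$ of the encoding of a uniform $k$-bit message (easily samplable from $f$-oracle access), then run a DzPiWi-style simulator on the now fixed $g^\alpha$ that outputs $\mathsf{same}^*$ or $\bot$ when the tampering is ``identity-like'' and samples a constant decoded value when it is ``overwrite-heavy''.

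For fixed $\alpha$, the conditional distribution of $\mathbf{x}$ given $\mathbf{x}_{S_r} = \alpha$ is, by $t'$-uniformity of the LECSS together with $t' > n\rho_r$, still $(t'-n\rho_r)$-wise independent on the $n(1-\rho_r)$ coordinates of $[n]\setminus S_r$. The analysis then follows the two cases of DzPiWi, measured over those coordinates. If $|C_\alpha \cap ([n]\setminus S_r)| \geq \frac{n(1-\rho_r)}{2}$, then $\tilde{\mathbf{x}}$ is forced to agree with a fixed vector on at least $\frac{n(1-\rho_r)}{2}$ positions outside $S_r$; combined with the LECSS distance $d' > \frac{n(1-\rho_r)}{4}$ and a dual-code bias estimate against $(t'-n\rho_r)$-wise independent bits, $\mathsf{LECSSdec}(\tilde{\mathbf{x}}) = \bot$ except with probability $2^{-\Omega(t'-n\rho_r)}$. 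Otherwise $K_\alpha$ dominates outside $S_r$, and by LECSS linearity together with absorbing the overwrites on $C_\alpha$ and $S_r$ into an $\mathbf{m}$-independent offset (again justified by the residual uniformity), $\mathsf{LECSSdec}(\tilde{\mathbf{x}}) = \mathsf{AMDenc}(\mathbf{m}) + \Delta_{f,\alpha}$ up to the same statistical error, where $\Delta_{f,\alpha}$ depends only on $(f,\alpha)$. The $\delta$-tamper-detection of the AMD code then forces $\mathsf{Dec}(\tilde{\mathbf{x}}) \in \{\mathbf{m},\bot\}$, equal to $\mathbf{m}$ iff $\Delta_{f,\alpha} = 0$, except with probability $\delta$. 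Averaging these conditional outcomes over $\alpha$ yields $\mbox{Patch}(\mathcal{D}_f,\mathbf{m})$ at statistical distance $\max\{\delta, 2^{-\Omega(t'-n\rho_r)}\}$, as required by Definition \ref{def: non-malleability}.

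The main obstacle will be the bookkeeping around the degraded uniformity: every invocation of $t'$-wise independence in the DzPiWi proof must be replaced by $(t'-n\rho_r)$-wise independence restricted to $[n]\setminus S_r$, and in particular the low-weight dual supports used in the bias estimate of the first case must be shown to fit entirely inside $[n]\setminus S_r$. The threshold $d' > \frac{n(1-\rho_r)}{4}$ is calibrated precisely so that the effective distance after discarding the $n\rho_r$ read coordinates is still enough to drive this estimate. A secondary technicality is showing that $\mathcal{D}_f$ is efficiently samplable given oracle access to $f$: once $\alpha$ is drawn by encoding a uniform message and reading $S_r$, the affine coefficients $(a^\alpha_i, b^\alpha_i)$ of each $g^\alpha_i$ are recovered from two $f$-queries on codewords agreeing on $S_r$ and differing on coordinate $i$.
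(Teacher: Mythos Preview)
Your overall strategy is exactly the paper's: condition on $\alpha=\mathbf{x}_{S_r}$, observe that $t'$-uniformity degrades to $(t'-n\rho_r)$-wise independence on $\bar{S}_r$, run a DzPiWi-type case analysis for the now-fixed $g^\alpha\in\bt$, and average. The thresholds $t'>n\rho_r$ and $d'>\frac{n(1-\rho_r)}{4}$ are also identified for the right reason.

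However, your two-case split is too coarse; the paper needs four cases, and the two you have collapsed do not go through as written. In your ``Keep/Flip dominates'' branch you assert that the overwrites on $C_\alpha$ can be absorbed into an $\mathbf{m}$-independent offset $\Delta_{f,\alpha}$. This is only valid when $|C_\alpha\cap\bar{S}_r|\le t'-n\rho_r$: then the random (non-constant) coordinates of $\Delta g^\alpha(\mathbf{X})$ in $\bar{S}_r$ number at most $t'-n\rho_r$ and are therefore fully uniform and message-independent. Once $t'-n\rho_r<|C_\alpha\cap\bar{S}_r|\le |\bar{S}_r|/2$, those random coordinates outnumber the available independence, and their joint law can depend on $\mathsf{AMDenc}(\mathbf{m})$; the AMD guarantee then does not apply. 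The paper handles this intermediate range separately, not via AMD but via the $d'$-distance/bias argument applied to $\Delta g^\alpha(\mathbf{X})$, concluding that $\mathsf{LECSSdec}(\Delta g^\alpha(\mathbf{X}))=\bot$ except with probability $2^{-\Omega(t'-n\rho_r)}$.

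Symmetrically, in your ``overwrite-heavy'' branch you claim $\mathsf{LECSSdec}(\tilde{\mathbf{x}})=\bot$ throughout. This is fine while the number of non-overwritten coordinates in $\bar{S}_r$ exceeds $t'-n\rho_r$ (the bias estimate has enough random bits to work with), but it fails once $|C_\alpha\cap\bar{S}_r|\ge n-t'$: in the extreme, $g^\alpha$ can overwrite every coordinate to a fixed valid LECSS codeword, and then $\mathsf{LECSSdec}(\tilde{\mathbf{x}})\neq\bot$ with probability $1$. Your opening paragraph actually anticipates this (``samples a constant decoded value when it is overwrite-heavy''), but your case analysis contradicts it. The paper's fourth case handles this range by observing that the at most $t'-n\rho_r$ surviving random coordinates are fully uniform and hence message-independent, so the simulator can simply output $\mathsf{Dec}(g^\alpha(\mathbf{Y}))$ for $\mathbf{Y}$ sampled independently of $\mathbf{m}$. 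In short: split each of your two cases at the threshold $t'-n\rho_r$ on the number of random coordinates, and the argument will match the paper's.
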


Compared to Theorem 4.1 in \cite{DzPiWi}, the result implies that for the same security level, one needs to use LECSS with higher uniformity parameter ($t' = t+n\rho_r$) but the minimum distance of the LECSS can be somewhat relaxed.}
\remove{
The proof of the theorem uses  two 
 lemmas. 
 Lemma \ref{lem: rhow=1}   gives the parameters of  LECSS when 
 $\rho_w=1$.  We noted earlier that protection against $\rbtl$ implies protection for all $S_r$ and $S_w$ with sizes up to   $n\rho_r $ and $ n\rho_w$, respectively, and so 
 $(\rho_r,1)_{\mathrm{BIT}}$-$\mathrm{NMC}$ is also a $(\rho_r,\rho_w)_{\mathrm{BIT}}$-$\mathrm{NMC}$ for any $\rho_w<1 $. 
 Lemma \ref{lem: small rhow} shows  how the parameters of the $(d',t ')$-$\mathrm{LECSS}$, in particular  
 $d'$, can be relaxed when $\rho_w$ is less than $\frac{1-\rho_r}{2}$. 
}
%
The intuition of the proof is  as follows. 
We need to show that for an arbitrary function $f_{S_r,[n],g}$ there is a distribution ${\cal D}^{f_{S_r,[n],g}}$ that satisfies (\ref{nmdef}) for any message $\mathbf{m}$.
For a function with  read index set $S_r$, the set of codewords corresponding to the message $\mathbf{m}$ can be partitioned into subsets $C_{\alpha} $ consisting of codewords $\mathbf{c}$ where $\mathbf{c}_{S_r} =\alpha$ and  $\alpha\in\{0,1\}^{n\rho_r}$.
For  all codewords in $C_{\alpha} $ the   function $  g^\alpha\stackrel{def}{=}g(\alpha)  \in \mathcal{F}_{\mathrm{BIT}}^{[n] }$ will be used.
 By choosing appropriate parameters for LECSS and AMD code, we can construct distribution ${\cal D}^{f_{S_r,[n],g}}$ which is the ``average" of the distributions ${\cal D}^{f_{S_r,[n],g}}_\alpha$ corresponding to $C_{\alpha} $.
 \remove{
 To construct ${\cal D}^{f_{S_r,[n],g}}_\alpha$ we will use an approach similar to \cite{DzPiWi}.
 Our results show that to maintain the  same level of security while allowing tampering to depend on $\rho_r n$ read components, 
  the parameters $t'$ and $d'$ of LECSS must be chosen as  $t' = t + n\rho_r$ and  $d' = (1-\rho_r) d$.
}

\begin{proof}

Consider a message $\mathbf{m}\in\{0,1\}^k$, and a tampering function $f_{S_r,[n],g}\in\rbtlr$.  We define two (vector) random variables $\mathbf{X},\tilde{\mathbf{X}}\in\{0,1\}^n$ representing the codeword and the tampered codeword, respectively. 
$$
\mathbf{m}\stackrel{\mathsf{Enc}}{\longrightarrow} \mathbf{X}\stackrel{f_{S_r,[n],g}}{\longrightarrow} \tilde{\mathbf{X}}\stackrel{\mathsf{Dec}}{\longrightarrow} \mathrm{Tamper}_{\mathbf{m}}^{f_{S_r,[n],g}}.
$$
The randomness of the variables
$\mathrm{Tamper}_{\mathbf{m}}^{f_{S_r,[n],g}}$, $\mathbf{X}$ and $\tilde{\mathbf{X}}$  are from the randomness of the encoding.
Since $\mathrm{Tamper}_{\mathbf{m}}^{f_{S_r,[n],g}}$ and $\mathbf{X}$ are correlated, we  have

\begin{eqnarray} \label{sum}
&&\mathsf{Pr}\left[\mathrm{Tamper}_{\mathbf{m}}^{f_{S_r,[n],g}}=\gamma\right]\nonumber\\
&&  \hspace{0.8cm}= \sum_{\alpha\in\{0,1\}^{n\rho_r}} \mathsf{Pr}[\mathbf{X}_{S_r}=\alpha]\cdot\mathsf{Pr}[\mathrm{Tamper}_{\mathbf{m}}^{f_{S_r,[n],g}}=\gamma|\mathbf{X}_{S_r}=\alpha] \\
&&    \hspace{0.8cm}=  \sum_{\alpha\in\{0,1\}^{n\rho_r}} \frac{1}{2^{n\rho_r}}\mathsf{Pr}[\mathrm{Tamper}_{\mathbf{m}}^{f_{S_r,[n],g}}=\gamma|\mathbf{X}_{S_r}=\alpha], \nonumber
  \end{eqnarray}
where the last equality follows   from the $t'$-uniform property of the LECSS, and assuming that $t' > n\rho_r$ .

To construct the distribution ${\cal D}^{f_{S_r,[n],g}}$ that satisfies (\ref{nmdef}), we  start by constructing a set of distributions $\{{\cal D}^{f_{S_r,[n],g}}_\alpha|\alpha\in\{0,1\}^{n\rho_r}\}$, each  over the set $\{0,1\}^k\bigcup\{\bot\}\bigcup\{\mathsf{same}^{*}\}$ and satisfying
\begin{equation}\label{eq: conditioned}
\mathrm{Tamper}_{\mathbf{m}, 
\alpha}^{f_{S_r,[n],g}}   \stackrel{def}{=}
\left(\mathrm{Tamper}_{\mathbf{m}}^{f_{S_r,[n],g}}|\mathbf{X}_{S_r}=\alpha\right) \stackrel{\varepsilon}{\approx}\mbox{Patch}({\cal D}^{f_{S_r,[n],g}}_\alpha,\mathbf{m}),
\end{equation}
for $\varepsilon=\max\{\delta,2^{-\Omega(t'- n\rho_r)}\}$.
 The distribution ${\cal D}^{f_{S_r,[n],g}}_\alpha$  is used to simulate 
 the function  $f_{S_r,[n],g}$  when 
  applied to codewords in $C_\alpha$, the set of encodings  ${\bf c}$ of 
  $\mathbf{m}$ that for  the  chosen index set $S_r$, have
 ${\bf c}_{S_r}=  \alpha$.  
 From (\ref{sum}) we have,
\begin{eqnarray*}
\mbox{SD}\left(\mathrm{Tamper}_{\mathbf{m}}^{f_{S_r,[n],g}};  {\cal D}^{f_{S_r,[n],g}}\right) \leq 2^{-n\rho_r} \sum_\alpha \mbox{SD}\left(\mathrm{Tamper}_{\mathbf{m}, 
\alpha}^{f_{S_r,[n],g}} , {\cal D}^{f_{S_r,[n], g}}_\alpha\right),
\end{eqnarray*}
where $\mathrm{Tamper}_{\mathbf{m}, \alpha}^{f_{S_r,[n],g}}$ is the tampering variable defined in (\ref{eq: conditioned}). 
We will have
\begin{eqnarray*}
\mbox{SD}\left(\mathrm{Tamper}_{\mathbf{m}}^{f_{S_r,[n],g}},  {\cal D}^{f_{S_r,[n],g}}\right) \leq \epsilon \end{eqnarray*}
because  for all $\alpha$ we have $ \mbox{SD}\left(\mathrm{Tamper}_{\mathbf{m}, 
\alpha}^{f_{S_r,[n],g}} , {\cal D}^{f_{S_r,[n], g}}_\alpha\right) \leq \epsilon$.

To construct ${\cal D}^{f_{S_r,[n],g}}_\alpha$, 
consider  $\mathbf{X} \in C_\alpha $, namely, assume $\mathbf{X}_{S_r}=\alpha$.  
Let $g(\alpha)=g^\alpha=(g^\alpha_1,\cdots,g^\alpha_n)$, where $g^\alpha_i \in \{ \mathsf{Set0}, \mathsf{Set1}, \mathsf{Keep}, \mathsf{Flip}\}$.
 
 Firstly,  
 $\left(g^\alpha(\mathbf{X})\right)_{S_r}$  on condition $\mathbf{X}_{S_r}=\alpha$  
 will be constant 
and can be computed from $\{g^\alpha_i|i\in S_r\}$ and $\alpha$. 

Next consider application of    $\{g^\alpha_i|i\in \bar{S}_r\}$ to 
 $\mathbf{X}_{\bar{S}_r}$. The analysis below is all under the condition $\mathbf{X}_{S_r}=\alpha$.
 For all 
component functions in $\{g^\alpha_i|i\in \bar{S}_r\}$ that are in  $\{\mathsf{Set0}, \mathsf{Set1} \}$ 
the values of $g^\alpha(\mathbf{X})$ at these positions will be 
 constant values $0$ and $1$, respectively. 
For all 
component functions in $\{g^\alpha_i|i\in \bar{S}_r\}$ that are in  $\{\mathsf{Keep}, \mathsf{Flip} \}$ 
 the values of $g^\alpha(\mathbf{X})$ in these positions will be kept the same and flipped, respectively.
 In the latter case this means that the statistical  properties of columns  ($C_\alpha$ seen as an array of row vectors) will stay the same.
 Since columns of $C_\alpha$ in $\bar{S}_r$  are $(t'-n\rho_r)$-wise independent, we will have, (i) each {non-overwritten} column of $g^\alpha(C_\alpha)$ (also as an array of row vectors)  in $\bar{S}_r$ is uniformly distributed, and (ii) {non-overwritten} columns of $g^\alpha(C_\alpha)$  in $\bar{S}_r$ are jointly $(t'-n\rho_r)$-wise independent.

Let $n^{ow}_{\bar{S}_r}$ 
denote the number of overwrite bit functions in $\{g^\alpha_i|i\in \bar{S}_r\}$ defined as,
$$
n^{ow}_{\bar{S}_r}=|\{i\in\bar{S}_r|g^\alpha_i=\mathsf{Set0} \mbox{ or } g^\alpha_i=\mathsf{Set1}\}|.
$$
The above analysis shows that $n\rho_r+n^{ow}_{\bar{S}_r}$ components of $g^\alpha(\mathbf{X})$  will have constant values 
independent of the initial value of $ \mathbf{X}$, 
 while the remaining $(|\bar{S}_r|-n^{ow}_{\bar{S}_r})$ non-overwritten components in $\bar{S}_r$ are individually  uniformly distributed,
  and are jointly $(t'-n\rho_r)$-wise independent.
  
For $g^\alpha\in\bt$, the   difference function  $\Delta g^\alpha\in\bt$  is defined as:
\begin{equation}\label{eq: difference}
\Delta g^\alpha(\mathbf{x})=g^\alpha(\mathbf{x})\oplus\mathbf{x}.
\end{equation}
\remove{We observe that  applying the difference function on vectors in $C_\alpha$   will result in columns that will be overwritten by 
 $g^\alpha$,
with columns that will become  non-overwritten by the difference function, 
  and vice versa.} 
Using  (\ref{eq: difference}) it can be seen that
 if $g^\alpha_i\in\{\mathsf{Set0},\mathsf{Set1}\}$, then $\Delta g^\alpha_i\in\{\mathsf{Keep},\mathsf{Flip}\}$, and  if $g^\alpha_i\in\{\mathsf{Keep},\mathsf{Flip}\}$, then $\Delta g^\alpha_i\in\{\mathsf{Set0},\mathsf{Set1}\}$.
Thus applying a non-overwrite bit function in $\{\Delta g^\alpha_i|i\in \bar{S}_r\}$ on a column of   $C_\alpha$,  will  correspond to applying
an overwrite function of   $\{g^\alpha_i|i\in \bar{S}_r\}$ on that column, and vice versa.

The distribution $\mathcal{D}^{f_{S_r,[n],g}}_\alpha$ is constructed by considering   four cases according to the number  of overwrite component functions, denoted by $n^{ow}_{\bar{S}_r}$,   in the  set ${\bar{S}_r}$. In the following analysis, following the approach of \cite[Appendix B]{DzPiWi}, 
we consider four cases.

\begin{enumerate}
\item $n^{ow}_{\bar{S}_r}\in \left[0,t'-n\rho_r\right]$: rely on linearity, $t'$-uniform of LECSS and AMD;
\item $n^{ow}_{\bar{S}_r}\in \left(t'-n\rho_r,\frac{|\bar{S}_r|}{2}\right]$: rely on linearity, $t'$-uniform and $d'$-distance of LECSS;
\item $n^{ow}_{\bar{S}_r}\in \left(\frac{|\bar{S}_r|}{2}, n-t'\right)$: rely on $t'$-uniform and $d'$-distance of LECSS;
\item $n^{ow}_{\bar{S}_r}\in \left[n-t',|\bar{S}_r|\right]$: rely on $t'$-uniform of LECSS.
\end{enumerate}
For each case we show how  the distribution   can be constructed. 
The complete proof is   given in Appendix \ref{apdx: proof of basic construction}.

\qed
\end{proof}

\begin{lemma}\label{lem: small rhow}
When $\rho_w<\frac{1-\rho_r}{2}$, the coding scheme $(\mathsf{Enc}, \mathsf{Dec})$ in Theorem \ref{th: basic construction} is a $(\rho_r,\rho_w)_{\mathrm{BIT}}$-$\mathrm{NMC}$ with exact security 
$\max\{\delta,2^{-\Omega(t'- n\rho_r)}\}$,
if the $(d',t')$-$\mathrm{LECSS}$ satisfies $t'> n\rho_r$ and $d'>\frac{n\rho_w}{2}$.
\end{lemma}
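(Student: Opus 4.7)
The plan is to mimic the proof of Theorem~\ref{th: basic construction}, using the stronger constraint $\rho_w<\frac{1-\rho_r}{2}$ to tighten the effective range of $n^{ow}_{\bar{S}_r}$. The key observation is: for any $f_{S_r,S_w,g}\in\rbtl$ and any $\alpha\in\{0,1\}^{n\rho_r}$, the induced bit-function $g^\alpha=g(\alpha)$ may be viewed as $\mathsf{Keep}$ on every index $i\in\bar{S}_w$, since by definition $f_{S_r,S_w,g}$ leaves $\bar{S}_w$-coordinates unchanged. Hence every overwrite coordinate of $g^\alpha$ lies in $S_w$, giving
$$n^{ow}_{\bar{S}_r}\ \leq\ |S_w\cap \bar{S}_r|\ \leq\ n\rho_w\ <\ \frac{n(1-\rho_r)}{2}\ =\ \frac{|\bar{S}_r|}{2}.$$

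I would then rerun the four-case analysis from the proof of Theorem~\ref{th: basic construction} with this tighter upper bound. Cases~3 and~4, whose ranges require $n^{ow}_{\bar{S}_r}>|\bar{S}_r|/2$, become vacuous; Case~2 is restricted to $n^{ow}_{\bar{S}_r}\in(t'-n\rho_r,\,n\rho_w]$; and Case~1 is unaltered, still relying only on linearity, the $t'$-uniformity of the LECSS (which requires $t'>n\rho_r$) and the $\delta$-tamper-detection of the AMD code. The role of the LECSS distance in Case~2 is to drive the tampered codeword to weight at least $d'$ (forcing it to decode to $\bot$), via the near-uniformity of $\mathbf{X}$ on the non-overwritten columns of $\Delta g^\alpha$. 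The original bound $d'>n(1-\rho_r)/4$ came from taking the worst-case value $n^{ow}_{\bar{S}_r}=|\bar{S}_r|/2$ and halving it, so substituting the new maximum $n^{ow}_{\bar{S}_r}=n\rho_w$ yields the relaxed requirement $d'>n\rho_w/2$.

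Finally, the per-$\alpha$ distributions $\mathcal{D}^{f_{S_r,S_w,g}}_\alpha$ built case-by-case are averaged exactly as in the main theorem, and using $\mathsf{Pr}[\mathbf{X}_{S_r}=\alpha]=2^{-n\rho_r}$ (granted by $t'$-uniformity with $t'>n\rho_r$) together with convexity of statistical distance, one recovers the claimed exact-security bound $\max\{\delta,2^{-\Omega(t'-n\rho_r)}\}$. I expect the only real obstacle to be bookkeeping: verifying that the Chernoff-type concentration used in Case~2 is still tight at the smaller threshold $d'>n\rho_w/2$, and that each $\mathcal{D}^{f_{S_r,S_w,g}}_\alpha$ remains oracle-samplable from $f_{S_r,S_w,g}$ when $|S_w|<n$. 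Both are routine, since the relevant tail bound depends only on $n^{ow}_{\bar{S}_r}$ and $d'$, and fixing coordinates outside $S_w$ to $\mathsf{Keep}$ merely simplifies the simulator relative to the $\rho_w=1$ setting.
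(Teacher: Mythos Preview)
Your proposal is correct and follows essentially the same approach as the paper: bound $n^{ow}_{\bar{S}_r}\leq n\rho_w<|\bar{S}_r|/2$ so that Cases~3 and~4 of the Theorem~\ref{th: basic construction} analysis vanish, and then observe that the shrunken Case~2 range $(t'-n\rho_r,\,n\rho_w]$ relaxes the distance requirement from $d'>|\bar{S}_r|/4$ to $d'>n\rho_w/2$. Your justification for $n^{ow}_{\bar{S}_r}\leq n\rho_w$ (overwrite positions must lie in $S_w$) is slightly more explicit than the paper's ``total writing budget'' remark, but the argument is otherwise identical.
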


\begin{proof}
Lemma \ref{lem: small rhow} is a special case of Theorem \ref{th: basic construction} when $\rho_w$ is small and we have $\rho_w<\frac{1-\rho_r}{2}$.
Using the proof steps of this theorem leads to 
four cases that are distinguished according to $n^{ow}_{\bar{S}_r}$,   the number of overwrite component functions of $g^\alpha$ in $\bar{S}_r$. Note that we always have $n^{ow}_{\bar{S}_r}\leq n\rho_w$ because $n\rho_w$ is the total writing budget. 
When $\rho_w<\frac{1-\rho_r}{2}$, we have $n^{ow}_{\bar{S}_r}\leq n\rho_w<\frac{|\bar{S}_r|}{2}$ and hence Case 3 and Case 4 in the proof above will not occur. 
If Case 2. occurs (i.e. when $n\rho_w>t'-n\rho_r$), the range of $n^{ow}_{\bar{S}_r}$ is $(t'-n\rho_r,n\rho_w]\subsetneq (t'-n\rho_r,\frac{|\bar{S}_r|}{2}]$  and this leads to the relaxation of the parameter $d'$ from $d'>\frac{|\bar{S}_r|}{4}$ to $d'>\frac{n\rho_w}{2}$. 
The rest of the argument is as before, and 
is given in the detailed arguments of Case 2,  given in Appendix \ref{apdx: proof of basic construction}.
\qed
\end{proof}

\remove{
\begin{proof}
The intuition of Lemma \ref{lem: small rhow} is, in the final steps of the proof above, we consider four cases according to the number $n^{ow}_{\bar{S}_r}$ of overwrite component functions of $g^\alpha$ in $\bar{S}_r$. Note that we always have $n^{ow}_{\bar{S}_r}\leq n\rho_w$ because $n\rho_w$ is the total writing budget. When $\rho_w<\frac{1-\rho_r}{2}$, we have $n^{ow}_{\bar{S}_r}\leq n\rho_w<\frac{|\bar{S}_r|}{2}$ and hence Case 3. and Case 4. in the proof above will not occur. Also if Case 2. occurs (i.e. when $n\rho_w>t'-n\rho_r$), the range of $n^{ow}_{\bar{S}_r}$ is $(t'-n\rho_r,n\rho_w]\subsetneq (t'-n\rho_r,\frac{|\bar{S}_r|}{2}]$. This leads to the relaxation on the parameter $d'$ from $d'>\frac{|\bar{S}_r|}{4}$ to $d'>\frac{n\rho_w}{2}$. This step follows from the detailed arguments of Case 2. given in Appendix \ref{apdx: proof of basic construction}.
\qed
\end{proof}
}

Explicit construction of LECSS that satisfies  the required minimum distance and uniformity for arbitrary security level is an open question.  The probabilistic construction in \cite[Lemma C.2]{DzPiWi} could be used to estimate 
 the achievable rate of LECSS   with the required parameters. The estimate for the original parameters of LECSS shows positive achievable rate \cite[Theorem 4.2]{DzPiWi}.  The second construction uses building  blocks for which explicit constructions do exist. However estimating  achievable rate of these codes remain open.

\remove{
\subsubsection{Instantiation}\label{sec: instantiation of LECSS}
The bottle neck of instantiating this construction is designing the LECSS with the required parameters. A $(d,t)$-$\mathrm{LECSS}$ is, in coding terminology, a binary linear code $\overline{C}$ with minimum distance $d$ that has a sub-code $C\subset \overline{C}$, whose dual distance is $t+1$ (see \cite[Lemma C.1]{DzPiWi}, which is in turn based on \cite{ChenHao}). The following probabilistic construction of $(d,t)$-$\mathrm{LECSS}$ is given in \cite[Lemma C.2]{DzPiWi}, which is again based on \cite{ChenHao}.

\begin{lemma}\cite[Lemma C.2]{DzPiWi}\label{lem: LECSS}
Let $\overline{G}$ be a uniformly random generator matrix of $\overline{C}$ from the space of $(k+\ell)\times n$ matrices over $\{0,1\}$ with linearly independent rows. Then for any $0<\alpha,\beta<\frac{1}{2}$,
$$
\mathsf{Pr}[d>n\alpha \mbox{ and } r+1>n\beta]>1-2^{k+\ell-n(1-h_2(\alpha))}-2^{\ell-nh_2(\beta)},
$$
where the probability is taken only over the choice of $\overline{G}$, and the notations are explained below.
$h_2(x)=-x\log_2x-(1-x)\log_2(1-x)$ and $d,t$ are as above. 
\end{lemma}

Lemma \ref{lem: LECSS} only shows that the construction in Theorem \ref{th: basic construction} definitely will work and with high probability will have constant rate. But it does not give concrete codes (not even in the case $\rho_r=0$, namely, bit-wise independent case). 
\bigskip

\remove{\color{blue}
This primitive has been found useful in many other applications. (SSS cite Cramer, Chen Hao's paper; quantum codes cite ; wiretap II with active adversary cite Lai Lifeng). NMC constructions using this primitive (variations: circuit paper RPE, doesn't need linearity) include \cite{NMC,ChGu1} and more recently \cite{circuit}.

Solutions:
\begin{enumerate}
\item probabilistic construction: Chen Hao, NMC, active wiretap II;
\item reed solomon code construction: \cite{ChGu1} (do not have good relative distance);
\item expanding a self-dual code (new)
\end{enumerate}
}

\noindent \underline{Interesting connection:} The construction of bit-wise independent NMC in \cite{DzPiWi} requires a $(d,t)$-$\mathrm{LECSS}$ with $\frac{d}{n}> \frac{1}{4}$. There is no known concrete $(d,t)$-$\mathrm{LECSS}$ with non-trivial $t$ and such a huge $d$. 
Our construction uses a $(d',t')$-$\mathrm{LECSS}$ that, though  requiring bigger independence $t'>n\rho_r+t$, requires smaller relative distance $\frac{d'}{n}> \frac{1-\rho_r}{4}$. 
Consider a 
special value of $\rho_r$ such that the two parameters coincide, namely, $n\frac{1-\rho_r}{4}=d'=t'\approx n\rho_r$. For example, let $\rho_r=\frac{1}{5}$. We will need a $(d',t')$-$\mathrm{LECSS}$ with parameters $\frac{d'}{n}> \frac{1}{5}$ and $\frac{t'}{n}> \frac{1}{5}$. What is interesting is that $\frac{1}{5}$ is about the best relative distance that a self-dual code can achieve (at least in small length \cite{tbselfdual}). The reason we connect to a self-dual codes is that they have the same minimum distance and dual minimum distance. If we let the subcode $C$ be a self-dual code with relative distance bigger than $\frac{1}{5}$ and augment it to a bigger code $\overline{C}$ with slightly smaller relative distance (hopefully still bigger than $\frac{1}{5}$). Then this pair $C\subset\overline{C}$ generates an LECSS  that satisfies our requirements. However, there is no known augmenting code technique that achieves this. 


}

\subsection{Construction 2: WT$\circ$AMD  }\label{sec: second construction} 

This is a modular construction that uses a wiretap II code and an AMD code with appropriate parameters.

 
\begin{theorem}\label{con: cute bit'}
Let $(\mathsf{AMDenc}, \mathsf{AMDdec})$ be an AMD code from $\{0,1\}^{k}$ to $\{0,1\}^{\ell}$ with  $\delta$-tamper detection security.  
Let $(\mathsf{WTenc}, \mathsf{WTdec})$ be a linear $(\rho,\varepsilon)$-$\mathrm{WT}$ code with encoder $\mathsf{WTenc}:\{0,1\}^{\ell}\rightarrow\{0,1\}^n$.   Let $(\mathsf{Enc}, \mathsf{Dec})$ be defined as follows.
\begin{equation}\label{eq: construction cuit bit'}
\left\{
\begin{array}{ll}
\mathsf{Enc}(\mathbf{m})&=\mathsf{WTenc}(\mathsf{AMDenc}(\mathbf{m}));\\
\mathsf{Dec}(\mathbf{x})&=\mathsf{AMDdec}(\mathsf{WTdec}(\mathbf{x})).\\
\end{array}
\right.
\end{equation}
Then the $(k,n)$-coding scheme $(\mathsf{Enc}, \mathsf{Dec})$ is a $(\rho_r,1)_{\mathrm{BIT}}$-$\mathrm{NMC}$ with exact security $2\varepsilon+\delta$, 
if $\rho\geq\frac{1+\rho_r}{2}$.
\remove{
the leakage  parameter of  the $(\rho,\varepsilon)$-$\mathrm{WT}$ satisfy the following:
\begin{enumerate}
\item 
$\rho\geq\frac{1+\rho_r}{2}$,  if $\rho_w\geq\frac{1-\rho_r}{2}$; \\
\item 
$\rho\geq\rho_r+\rho_w$,  if  $\rho_w<\frac{1-\rho_r}{2}.$
\end{enumerate}
}
The rate of the NM-code is  upper bounded by $1-\rho$. 
\end{theorem}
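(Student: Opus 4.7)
The plan is, for each tampering function $f = f_{S_r,[n],g} \in \rbtlr$ and every message $\mathbf{m}$, to construct a simulator $\mathcal{D}_f$ from a fresh encoding $\mathbf{x}' \leftarrow \mathsf{Enc}(\mathbf{0})$ and then bound $\mathsf{SD}(\mathrm{Tamper}^f_\mathbf{m}, \mathrm{Patch}(\mathcal{D}_f,\mathbf{m}))$ by $2\varepsilon + \delta$, using wiretap II indistinguishability for the $\varepsilon$ terms and AMD tamper-detection for the $\delta$ term. First, I decompose each bit-wise map as $g^\alpha(\mathbf{x}) = A^\alpha \mathbf{x} + \mathbf{b}^\alpha$, where $T'_\alpha = \{i : g^\alpha_i \in \{\mathsf{Set0},\mathsf{Set1}\}\}$ is the overwrite support, $T_\alpha = [n]\setminus T'_\alpha$, $A^\alpha$ is the diagonal $\{0,1\}$-matrix supported on $T_\alpha$, and $\mathbf{b}^\alpha = g^\alpha(\mathbf{0})$. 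Writing $\mathbf{c} := \tilde{\mathbf{x}} - \mathbf{x}$, one sees that $\mathbf{c}|_{T_\alpha}$ is a constant that depends only on $\alpha$ and $\mathbf{c}|_{T'_\alpha}$ is a bit-wise function of $\mathbf{x}|_{T'_\alpha}$; therefore $\tilde{\mathbf{x}}$ depends on $\mathbf{x}$ only through $\mathbf{x}|_{S_r \cup T_\alpha}$, while $\mathbf{c}$ depends on $\mathbf{x}$ only through $\mathbf{x}|_{S_r \cup T'_\alpha}$. By linearity of $\mathsf{WTdec}$, $\mathsf{Dec}(\tilde{\mathbf{x}}) = \bot$ whenever $\Delta := \mathsf{WTdec}(\mathbf{c}) = \bot$, and otherwise equals $\mathsf{AMDdec}(\mathsf{AMDenc}(\mathbf{m}) + \Delta)$, which is $\mathbf{m}$ for $\Delta = \mathbf{0}$ and is $\bot$ with probability at least $1-\delta$ for $\Delta \neq \mathbf{0}$ by tamper detection.

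The hypothesis $\rho \geq (1+\rho_r)/2$ enters through the following dichotomy: since $T_\alpha$ and $T'_\alpha$ are disjoint subsets of $[n]$, $|T_\alpha \setminus S_r| + |T'_\alpha \setminus S_r| \leq n(1-\rho_r) \leq 2n(\rho - \rho_r)$, so for every $\alpha$ at least one of $|S_r \cup T_\alpha| \leq n\rho$ or $|S_r \cup T'_\alpha| \leq n\rho$ holds. I would then define $\mathcal{D}_f$ by sampling $\mathbf{x}' \leftarrow \mathsf{Enc}(\mathbf{0})$, forming $\alpha' = \mathbf{x}'|_{S_r}$, $\tilde{\mathbf{x}}' = g^{\alpha'}(\mathbf{x}')$, $\Delta' = \mathsf{WTdec}(\tilde{\mathbf{x}}' - \mathbf{x}')$, and outputting $\mathsf{same}^*$ when $\Delta' = \mathbf{0}$, $\bot$ when $\Delta' = \bot$, and $\mathsf{AMDdec}(\mathsf{AMDenc}(\mathbf{0}) + \Delta')$ otherwise. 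Splitting the analysis by bucket, I would argue that on the ``overwrite-heavy'' bucket ($|S_r \cup T_\alpha| \leq n\rho$) wiretap II applied to $S_r \cup T_\alpha$ makes $\tilde{\mathbf{x}}$ within statistical distance $\varepsilon$ of $\tilde{\mathbf{x}}'$, and on the ``overwrite-light'' bucket ($|S_r \cup T'_\alpha| \leq n\rho$) wiretap II applied to $S_r \cup T'_\alpha$ makes $\Delta$ within $\varepsilon$ of $\Delta'$; applying the triangle inequality and folding in AMD tamper detection on the $\Delta \neq \mathbf{0}$ branch then yields the bound $2\varepsilon + \delta$.

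The rate bound is immediate, since $\mathsf{Enc}$ factors through $\mathsf{WTenc}$ and the AMD encoder is injective ($k \leq \ell$), so $k/n \leq \ell/n \leq 1-\rho$ by Lemma~\ref{lem: wtupperbound}. The main obstacle I anticipate is making the adaptive wiretap II step rigorous, because the relevant subset ($S_r \cup T_\alpha$ or $S_r \cup T'_\alpha$) is a deterministic function of the leaked $\alpha$ rather than a set fixed in advance. I would handle this by partitioning the $2^{n\rho_r}$ possible values of $\alpha$ into the two buckets described above, invoking the static-subset $(\rho,\varepsilon)$-WT guarantee for each distinct value of $T_\alpha$ (resp.\ $T'_\alpha$) appearing inside the corresponding bucket, and averaging over $\alpha$; because the bucket partition is determined by $g$ alone, the $\varepsilon$ cost does not multiply by the number of possible $\alpha$ and the total overhead is just $2\varepsilon$.
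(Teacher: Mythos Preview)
Your high-level strategy is the paper's strategy: split according to whether the overwrite support $T'_\alpha$ (equivalently, its complement $T_\alpha$) is large inside $\bar{S}_r$, invoke the $(\rho,\varepsilon)$-wiretap guarantee on the set $S_r\cup T_\alpha$ or $S_r\cup T'_\alpha$ of size at most $n\rho$, and close with AMD tamper detection. The dichotomy you write down, and the way the hypothesis $\rho\ge(1+\rho_r)/2$ enters, are exactly as in the paper.

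There is, however, a genuine mismatch between your \emph{single-form} simulator and your per-bucket analysis. In the overwrite-heavy bucket your only handle is $\tilde{\mathbf{x}}\approx_\varepsilon\tilde{\mathbf{x}}'$, which gives $\mathsf{Dec}(\tilde{\mathbf{x}})\approx_\varepsilon\mathsf{Dec}(\tilde{\mathbf{x}}')$. But your simulator does \emph{not} output $\mathsf{Dec}(\tilde{\mathbf{x}}')$ there: it outputs $\mathsf{same}^*$ whenever $\Delta'=\mathbf{0}$, and in the heavy bucket $\Delta'$ depends on $\mathbf{x}'|_{T'_{\alpha'}}$, which is \emph{not} covered by the wiretap set $S_r\cup T_{\alpha'}$ you are invoking. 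Concretely, take $g^\alpha$ to overwrite everything with a fixed vector $\mathbf{b}$ satisfying $\mathsf{WTdec}(\mathbf{b})=0^\ell$: then the real experiment always outputs $\mathsf{AMDdec}(0^\ell)$, whereas your simulator outputs $\mathsf{same}^*$ (hence $\mathbf{m}$ after patching) on the event $\mathsf{AMDenc}(\mathbf{0})=0^\ell$, an event whose probability is not controlled by either $\varepsilon$ or the stated $\delta$ budget. The paper avoids this by making the simulator \emph{case-dependent}: in the heavy case it outputs $\mathsf{Dec}(g^\alpha(\mathbf{Y}^\alpha))$ directly (no $\mathsf{same}^*$ branch), and only in the light case does it route through the difference and emit $\mathsf{same}^*/\bot$.

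A second, smaller point: the paper seeds the simulator with $\mathbf{Y}=\mathsf{WTenc}(0^\ell)$ rather than your $\mathbf{x}'=\mathsf{Enc}(\mathbf{0})=\mathsf{WTenc}(\mathsf{AMDenc}(\mathbf{0}))$. This is not cosmetic. With your choice, $\Delta'$ is correlated with the AMD randomness hidden inside $\mathbf{x}'$, so when you ``fold in AMD tamper detection on the $\Delta\neq\mathbf{0}$ branch'' you cannot directly appeal to the $\delta$-bound, which requires the offset to be independent of the AMD coins. Using $\mathsf{WTenc}(0^\ell)$ keeps the simulator free of AMD randomness; the AMD step is then applied only in the intermediate hybrid $\mathbf{T}'=\mathsf{AMDdec}(\mathsf{AMDenc}(\mathbf{m})\oplus\mathsf{WTdec}(\Delta g^\alpha(\mathbf{Y}^\alpha)))$, where the fresh coins of $\mathsf{AMDenc}(\mathbf{m})$ are genuinely independent of $\mathbf{Y}$. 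Your adaptivity discussion (partition by $\alpha$ and average) is the right instinct and matches the paper's use of an intermediate distribution weighted by $\Pr[\mathbf{X}_{S_r}=\alpha]$ versus $\Pr[\mathbf{Y}_{S_r}=\alpha]$, which is where the second $\varepsilon$ is spent.
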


\remove{
The proof of this theorem, similar to  Theorem \ref{th: basic construction},   has two steps. First, proving the result for $(\rho_r,1)_{\mathrm{BIT}}$-$\mathrm{NMC}$  (Lemma \ref{lem: rhow=1'}), and noting that $\rbtl$ includes all functions with $|S_w| \leq \rho_w n$,   conclude the result for $(\rho_r,\rho_w)_{\mathrm{BIT}}$-$\mathrm{NMC}$.  Then prove a second lemma (Lemma \ref{lem: small rhow'}) for the special case that $\rho_w \leq \frac{1-\rho_r}{2}$.
}

\remove{

We use two lemmas to prove Theorem \ref{con: cute bit'}. Lemma \ref{lem: rhow=1'} below gives the parameter requirement of the $(\rho,\varepsilon)$-$\mathrm{WT}$ for the extreme case $\rho_w=1$. Note that a 
$(\rho_r,1)_{\mathrm{BIT}}$-$\mathrm{NMC}$ is a $(\rho_r,\rho_w)_{\mathrm{BIT}}$-$\mathrm{NMC}$ for any $\rho_w$. We then prove in the second lemma (Lemma \ref{lem: small rhow'}) how the parameter requirement of $\rho$ can be relaxed when $\rho_w$ is less than $\frac{1-\rho_r}{2}$. 
}

\remove{
\begin{lemma}\label{lem: rhow=1'}
The coding scheme $(\mathsf{Enc}, \mathsf{Dec})$ in Theorem \ref{con: cute bit'} is a $(\rho_r,1)_{\mathrm{BIT}}$-$\mathrm{NMC}$ with exact security 
$2\varepsilon+\delta$,
if the $(\rho,\varepsilon)$-$\mathrm{WT}$ satisfies $\rho\geq\frac{1+\rho_r}{2}$.
\end{lemma}
}

\begin{proof}
We use the  approach 
 of Theorem \ref{th: basic construction} and express
$\mathrm{Tamper}_{\mathbf{m}}^{f_{S_r,[n],g}}$ as  in expression (\ref{sum}). We thus need to find 
distribution ${\cal D}^{f_{S_r,[n],g}}_\alpha$ (independent of the message $\mathbf{m}$) that  corresponds to each 
$$
\mathrm{Tamper}_{\mathbf{m}, \alpha}^{f_{S_r,[n],g}}   =
\left(\mathrm{Tamper}_{\mathbf{m}}^{f_{S_r,[n],g}}|\mathbf{X}_{S_r}=\alpha\right).
$$

The proof however,  replaces $t$-uniformity in LECSS with  $t$-privacy in  wiretap II code which is expressed in terms of indistinguishability security.
We use this property and the linearity of the code, to show that the distributions ${\cal D}^{f_{S_r,[n],g}}_\alpha$  can be found using wiretap II encodings of $0^\ell$. Define a (vector) random variable $\mathbf{Y}=\mathsf{WTenc}(0^\ell)$ and then construct the distribution ${\cal D}^{f_{S_r,[n],g}}$ as follows:
$$
\mathsf{Pr}\left[{\cal D}^{f_{S_r,[n],g}}=\gamma\right]\nonumber= \sum_{\alpha\in\{0,1\}^{n\rho_r}} \mathsf{Pr}[\mathbf{Y}_{S_r}=\alpha]\cdot\mathsf{Pr}[{\cal D}^{f_{S_r,[n],g}}_\alpha=\gamma].
$$

We use the following notations. Let $\mathbf{X}=\mathsf{Enc}(\mathbf{m})$ and 
 $\mathbf{X}^\alpha=(\mathbf{X}|\mathbf{X}_{S_r}=\alpha)$. The tampered version is given by $\tilde{\mathbf{X}}^\alpha=f_{S_r,[n],g}(\mathbf{X}^\alpha)=g^\alpha(\mathbf{X}^\alpha)$, where $g^\alpha=g(\alpha)\in\bt$ and $g^\alpha=(g^\alpha_1,\cdots,g^\alpha_n)$. 
Since $\mathbf{X}^\alpha_{S_r}=\alpha$,   then  $\tilde{\mathbf{X}}^\alpha_{S_r}$ is a constant. 
Components of $\tilde{\mathbf{X}}^\alpha_{\bar{S}_r}$ can be constant or random bits, depending on the corresponding components $g^\alpha_i,i\in\bar{S}_r$. We consider 
 two cases that are distinguished by  the number $n^{ow}_{\bar{S}_r}$ of overwrite bit functions in $\{g^\alpha_i|i\in \bar{S}_r\}$.

\begin{enumerate}
\item ``At most half of $\bar{S}_r$ are overwrite functions'' ($n^{ow}_{\bar{S}_r}\leq \frac{|\bar{S}_r|}{2}$): 

The difference function $\Delta g^\alpha$ (i.e. $g^\alpha(\mathbf{x})=\mathbf{x}\oplus\Delta g^\alpha(\mathbf{x})$) has at most half non-overwrite bit functions over $\bar{S}_r$. Let $S$ be the index set of the non-overwrite components of $\Delta g^\alpha$ in $\bar{S}_r$. Then $|S|\leq\frac{|\bar{S}_r|}{2}=\frac{n(1-\rho_r)}{2}$ and hence $\frac{|S_r\bigcup S|}{n}\leq \frac{1+\rho_r}{2}$. We use the short hand $\mathbf{Y}^\alpha=(\mathbf{Y}|\mathbf{Y}_{S_r}=\alpha)$ (similar to $\mathbf{X}^\alpha=(\mathbf{X}|\mathbf{X}_{S_r}=\alpha)$). Then according to the indistinguishability privacy of the $(\frac{1+\rho_r}{2},\varepsilon)$-$\mathrm{WT}$ code, we have 
\begin{equation}\label{eq: ssr}
\mbox{SD}(\mathbf{X}^\alpha_S;\mathbf{Y}^\alpha_S)\leq\mbox{SD}(\mathbf{X}_{S_r\bigcup S};\mathbf{Y}_{S_r\bigcup S})\leq \varepsilon.
\end{equation}
Define the following distribution using 
 $f_{S_r,[n],g}$ and $\alpha$.
$$
{\cal D}^{f_{S_r,[n],g}}_\alpha\stackrel{def}{=}\left\{
\begin{array}{l}
\mathbf{y}\leftarrow \mathbf{Y}^\alpha\\
\mathrm{Output}\ \mathsf{same}^{*},\mbox{ if }\Delta g^\alpha (\mathbf{y})=0^n; \bot,\mbox{ otherwise}.\\
\end{array}
\right.
$$

In order to show that the real tampering experiment 
$$\mathrm{Tamper}_{\mathbf{m},\alpha}^{f_{S_r,[n],g}}=\mathsf{AMDdec}(\mathsf{AMDenc}(\mathbf{m})\oplus \mathsf{WTdec}(\Delta g^\alpha (\mathbf{X}^\alpha)))$$
is close to its simulation $\mbox{Patch}({\cal D}^{f_{S_r,[n],g}}_\alpha,\mathbf{m})$, 
we use 
 the  intermediate variable 
$$\mathbf{T}'=\mathsf{AMDdec}(\mathsf{AMDenc}(\mathbf{m})\oplus \mathsf{WTdec}(\Delta g^\alpha (\mathbf{Y}^\alpha))).$$ Now,
$$
\begin{array}{l}
\mbox{SD}(\mathrm{Tamper}_{\mathbf{m},\alpha}^{f_{S_r,[n],g}};\mbox{Patch}({\cal D}^{f_{S_r,[n],g}}_\alpha,\mathbf{m}))\\
\leq \mbox{SD}(\mathrm{Tamper}_{\mathbf{m},\alpha}^{f_{S_r,[n],g}};\mathbf{T}')+\mbox{SD}(\mathbf{T}';\mbox{Patch}({\cal D}^{f_{S_r,[n],g}}_\alpha,\mathbf{m}))\\
\stackrel{(\mathrm{i})} \leq \mbox{SD}(\mathbf{X}^\alpha_S;\mathbf{Y}^\alpha_S)+\mbox{SD}(\mathbf{T}';\mbox{Patch}({\cal D}^{f_{S_r,[n],g}}_\alpha,\mathbf{m}))\\
\stackrel{(\mathrm{ii})} \leq \varepsilon+\mbox{SD}(\mathbf{T}';\mbox{Patch}({\cal D}^{f_{S_r,[n],g}}_\alpha,\mathbf{m}))\\
\stackrel{(\mathrm{iii})}\leq \varepsilon+\delta,
\end{array}
$$
where  inequality (i) follows from the fact that $\mathrm{Tamper}_{\mathbf{m},\alpha}^{f_{S_r,[n],g}}$ and $\mathbf{T}'$ are only different at $\mathbf{X}^\alpha$ and $\mathbf{Y}^\alpha$,   inequality  (ii) follows from (\ref{eq: ssr}) and inequality  (iii) follows from the fact that $\mathbf{T}'$ and $\mbox{Patch}({\cal D}^{f_{S_r,[n],g}}_\alpha,\mathbf{m})$ have different values only when $\Delta g^\alpha (\mathbf{Y}^\alpha)\neq 0^n$ and $\mathbf{T}'\neq \bot$, which happens with probability at most $\delta$ according to the $\delta$-tamper detection security of the AMD code.

\item ``More than half of $\bar{S}_r$  are overwrite functions'' ($n^{ow}_{\bar{S}_r}> \frac{|\bar{S}_r|}{2}$): 

In this case, let $S$ be the index set of the non-overwrite components of $g^\alpha$ in $\bar{S}_r$. 
From the assumption, 
we have $|S|<\frac{|\bar{S}_r|}{2}$ and (\ref{eq: ssr}) holds. 
Let ${\cal D}^{f_{S_r,[n],g}}_\alpha$ be the distribution of the random variable $\mathsf{Dec}(g^\alpha(\mathbf{Y}^\alpha))$. 
Note that $\mathrm{Tamper}_{\mathbf{m},\alpha}^f=\mathsf{Dec}(g^\alpha(\mathbf{X}^\alpha))$. We need to show these two random variables are close.
$$
\mbox{SD}(\mathsf{Dec}(g^\alpha(\mathbf{X}^\alpha));\mathsf{Dec}(g^\alpha(\mathbf{Y}^\alpha)))\leq \mbox{SD}(\mathbf{X}^\alpha_S;\mathbf{Y}^\alpha_S)\leq\varepsilon,
$$
where the first inequality follows because $(g^\alpha(\mathbf{X}^\alpha))_{\bar{S}}=(g^\alpha(\mathbf{Y}^\alpha))_{\bar{S}}$ and the second inequality follows from (\ref{eq: ssr}).
\end{enumerate}

To bound the exact security of the NM-code, we define an intermediate distribution ${\cal D'}^{f_{S_r,[n],g}}$ that (unlike ${\cal D}^{f_{S_r,[n],g}}$) depends on message $\mathbf{m}$.
$$
\mathsf{Pr}\left[{\cal D'}^{f_{S_r,[n],g}}=\gamma\right]\nonumber= \sum_{\alpha\in\{0,1\}^{n\rho_r}} \mathsf{Pr}[\mathbf{X}_{S_r}=\alpha]\cdot\mathsf{Pr}[{\cal D}^{f_{S_r,[n],g}}_\alpha=\gamma].
$$
Let $\mathbf{\tilde{M}'}=\mbox{Patch}({\cal D'}^{f_{S_r,[n],g}},\mathbf{m})$ and $\mathbf{\tilde{M}}=\mbox{Patch}({\cal D}^{f_{S_r,[n],g}},\mathbf{m})$. We compute
$$
\begin{array}{ll}
\mbox{SD}(\mathrm{Tamper}_{\mathbf{m}}^{f_{S_r,[n],g}};\mathbf{\tilde{M}})
&\leq \mbox{SD}(\mathrm{Tamper}_{\mathbf{m}}^{f_{S_r,[n],g}};\mathbf{\tilde{M}'})+\mbox{SD}(\mathbf{\tilde{M}'};\mathbf{\tilde{M}})\\
&\stackrel{(\mathrm{i})}{\leq}(\varepsilon+\delta)+\mbox{SD}(\mathbf{\tilde{M}'};\mathbf{\tilde{M}})\\
&\stackrel{(\mathrm{ii})}{\leq}(\varepsilon+\delta)+\varepsilon=2\varepsilon+\delta,
\end{array}
$$
where (i) follows because $\mathrm{Tamper}_{\mathbf{m}}^{f_{S_r,[n],g}}$ is written as expected value (over $\alpha$) of $\mathrm{Tamper}_{\mathbf{m},\alpha}^{f_{S_r,[n],g}}$ according to (\ref{sum}) and for each $\alpha$ it is shown above that $\mbox{SD}(\mathrm{Tamper}_{\mathbf{m},\alpha}^{f_{S_r,[n],g}};\mbox{Patch}({\cal D}^{f_{S_r,[n],g}}_\alpha,\mathbf{m}))\leq \varepsilon+\delta$; (ii) follows because $\mathbf{\tilde{M}'}$ and $\mathbf{\tilde{M}}$ are defined in the same way (see ${\cal D'}^{f_{S_r,[n],g}}$ and ${\cal D}^{f_{S_r,[n],g}}$) with different distributions $\mathbf{X}_{S_r}$ and $\mathbf{Y}_{S_r}$ which are $\varepsilon$ close according to privacy of wiretap II.

The rate of the coding scheme is $\frac{k}{n}<\frac{\ell}{n}$, which according to Lemma \ref{lem: wtupperbound}
is upper bounded by $1-\rho$.
\qed
\end{proof}

\begin{lemma}\label{lem: small rhow'}
When $\rho_w<\frac{1-\rho_r}{2}$, the coding scheme $(\mathsf{Enc}, \mathsf{Dec})$ in Theorem \ref{con: cute bit'} is a $(\rho_r,\rho_w)_{\mathrm{BIT}}$-$\mathrm{NMC}$ with exact security 
$2\varepsilon+\delta$,
if $\rho\geq\rho_r+\rho_w$. 
\end{lemma}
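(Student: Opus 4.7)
The plan is to run the proof of Theorem~\ref{con: cute bit'} almost verbatim, exploiting the small writing budget to (i) eliminate one of the two cases in that argument and (ii) shrink the size of the index set $S$ whose leakage we must hide with the wiretap II code. This is parallel to what Lemma~\ref{lem: small rhow} does for Theorem~\ref{th: basic construction}: the stronger hypothesis $\rho_w < \frac{1-\rho_r}{2}$ forces the ``more than half of $\bar{S}_r$ are overwrites'' regime to be vacuous, and also tightens the bound on the non-overwrite set in the remaining case.

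Concretely, I would fix $f_{S_r,S_w,g}\in\rbtl$ with $|S_r|\le n\rho_r$ and $|S_w|\le n\rho_w$, condition on $\mathbf{X}_{S_r}=\alpha$, and expand $\mathrm{Tamper}^{f_{S_r,S_w,g}}_{\mathbf{m}}$ as the mixture in (\ref{sum}), exactly as before. Since all overwrite bit functions of $g^\alpha$ sit inside $S_w$, the total number of overwrite coordinates is at most $n\rho_w$, hence $n^{ow}_{\bar{S}_r}\le n\rho_w < \frac{n(1-\rho_r)}{2}=\frac{|\bar{S}_r|}{2}$. This strict inequality rules out Case~2 of the proof of Theorem~\ref{con: cute bit'} entirely; only Case~1 is live.

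In Case~1 I would recall that the index set $S$ used there is the set of non-overwrite coordinates of $\Delta g^\alpha$ in $\bar{S}_r$; by the involution $\mathsf{Set0}/\mathsf{Set1}\leftrightarrow\mathsf{Keep}/\mathsf{Flip}$ induced by $\Delta$, these are precisely the overwrite coordinates of $g^\alpha$ in $\bar{S}_r$. Consequently $|S|=n^{ow}_{\bar{S}_r}\le n\rho_w$, so $|S_r\cup S|\le n\rho_r+n\rho_w$. The hypothesis $\rho\ge\rho_r+\rho_w$ then guarantees, via the indistinguishability security of the $(\rho,\varepsilon)$-$\mathrm{WT}$ code,
\[
\mbox{SD}(\mathbf{X}^\alpha_S;\mathbf{Y}^\alpha_S)\le \mbox{SD}(\mathbf{X}_{S_r\cup S};\mathbf{Y}_{S_r\cup S})\le\varepsilon,
\]
where $\mathbf{Y}=\mathsf{WTenc}(0^\ell)$ and $\mathbf{Y}^\alpha=(\mathbf{Y}\mid \mathbf{Y}_{S_r}=\alpha)$. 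From here I would define $\mathcal{D}^{f_{S_r,S_w,g}}_\alpha$ using $\mathbf{Y}^\alpha$ in the same way as in Case~1 of Theorem~\ref{con: cute bit'}, introduce the intermediate variable $\mathbf{T}'$, and chain the two bounds ($\varepsilon$ from wiretap privacy, $\delta$ from tamper detection of the AMD code) to obtain $\mbox{SD}\le\varepsilon+\delta$ per $\alpha$.

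Finally, averaging over $\alpha$ and comparing to the message-independent distribution $\mathcal{D}^{f_{S_r,S_w,g}}$ introduces one more $\varepsilon$ term from the wiretap indistinguishability between $\mathbf{X}_{S_r}$ and $\mathbf{Y}_{S_r}$, yielding the overall bound $2\varepsilon+\delta$, as claimed. I do not anticipate a hard obstacle: the only place where care is needed is the combinatorial bookkeeping that $|S|\le n^{ow}_{\bar{S}_r}\le n\rho_w$ (rather than $\frac{|\bar{S}_r|}{2}$), which is exactly what drives the relaxation from $\rho\ge\frac{1+\rho_r}{2}$ to $\rho\ge\rho_r+\rho_w$.
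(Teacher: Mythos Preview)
Your proposal is correct and follows essentially the same approach as the paper's proof: observe that the writing budget $n\rho_w<\frac{|\bar{S}_r|}{2}$ forces Case~2 of Theorem~\ref{con: cute bit'} to be vacuous, then in the surviving Case~1 bound $|S|\le n\rho_w$ (since the non-overwrite coordinates of $\Delta g^\alpha$ in $\bar{S}_r$ are exactly the overwrite coordinates of $g^\alpha$, all of which lie in $S_w$), so that $|S_r\cup S|\le n(\rho_r+\rho_w)$ and the wiretap hypothesis $\rho\ge\rho_r+\rho_w$ suffices for (\ref{eq: ssr}); the remainder is verbatim Case~1 plus the averaging step. Your write-up is in fact slightly more explicit than the paper's about why the overwrite count is capped by $n\rho_w$, but the argument is the same.
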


\begin{proof}
This is a special case of Theorem \ref{con: cute bit'} when $\rho_w<\frac{1-\rho_r}{2}$.  In this case 
 the number of overwrite components can not exceed $\frac{|\bar{S}_r|}{2}$.  Following the proof steps of the theorem, the case 
 $n^{ow}_{\bar{S}_r}> \frac{|\bar{S}_r|}{2}$  in the proof  will not occur and one only has to make sure non-malleability is provided for $n^{ow}_{\bar{S}_r}$ values in the range $[0,n\rho_w]$. Let $S$ be the index set of the non-overwrite components of $\Delta g^\alpha$ (or equivalently the overwrite components of $g^\alpha$) in $\bar{S}_r$. We have $|S|\leq n\rho_w$ and hence $|S_r\bigcup S|=|S_r|+|S|\leq n(\rho_r+\rho_w)$. Now (\ref{eq: ssr}) will hold as long as the $(\rho,\varepsilon)$-$\mathrm{WT}$ satisfies $\rho\geq\rho_r+\rho_w$. 
The rest is identical  to the proof above (using only Case 1).
\qed
\end{proof}

\remove{
\begin{proof}[of Theorem \ref{con: cute bit'}]
The first claim follows from Lemma \ref{lem: rhow=1'}. Note that a $(\rho_r,1)_{\mathrm{BIT}}$-$\mathrm{NMC}$ is a $(\rho_r,\rho_w)_{\mathrm{BIT}}$-$\mathrm{NMC}$ for any $\rho_w$.
The second claim follows from Lemma \ref{lem: small rhow'}.
\qed
\end{proof}
}

It has been proved  \cite{WtIIcapacity} that the capacity of  binary $(\rho,\varepsilon)$-$\mathrm{WT}$ codes with   indistinguishability security 
is 
$1-\rho$.  It is however unknown if linear $(\rho,\varepsilon)$-$\mathrm{WT}$ codes  can achieve this rate.
\remove{
We do not know whether the capacity of binary linear $(\rho,\varepsilon)$-$\mathrm{WT}$ codes is still $1-\rho$ with the additional linearity property. 
}
The rate of  the resulting  NM-code is at most $1-\rho$, which is less than $1-\rho_r$ the achievable rate of $(\rho_r,\rho_w)_{\mathrm{BIT}}$-$\mathrm{NMC}$.
\remove{
Even if binary linear $(\rho,\varepsilon)$-$\mathrm{WT}$ codes at rate $1-\rho$ exists, this construction still falls short of rate: $1-\rho<1-\rho_r$. 
}
We leave explicit construction of capacity-achieving $(\rho_r,\rho_w)_{\mathrm{BIT}}$-$\mathrm{NMC}$ as an open question. 

\smallskip
\noindent
\textbf{Constructions of linear wiretap II codes.} 
The construction in Theorem \ref{con: cute bit'} requires a WtII code with leakage parameter $\rho=\frac{1+\rho_r}{2}\geq \frac{1}{2}$. 
The following explicit construction (coset coding)  gives binary linear $(\rho,0)$-$\mathrm{WT}$ codes. 

\begin{lemma}[\cite{WtII}]\label{lem: WtII} Let $G_{(n-k)\times n}$ be a generator matrix of an $[n,n-k,d]$-code $\mathcal{C}$ with dual distance $d^\bot$. Append $k$ rows to $G$ such that the obtained matrix $\left [\begin{array}{c} G\\\hat{G}\end{array} \right ]$ is of full rank.  
Define the encoder $\mathsf{WTenc}:\mathbb{F}_q^k\rightarrow\mathbb{F}_q^n$ as follows.
$$
\mathsf{WTenc}(\mathbf{m})=[\mathbf{R}\ \mathbf{m}]\left [\begin{array}{c} G\\\hat{G}\end{array} \right ],\mbox{ where }\mathbf{R}\stackrel{\$}{\leftarrow}\mathbb{F}_q^{n-k}.
$$
The message set $\mathbb{F}_q^k$ is in one-to-one correspondence with the cosets $\mathbb{F}_q^n\slash\mathcal{C}$ of $\mathcal{C}$ in the space $\mathbb{F}_q^n$. The decoder $\mathsf{WTdec}$ uses a parity-check matrix $H$ of the code $\mathcal{C}$ to efficiently identify the coset of the received word and output the corresponding message. Then $(\mathsf{WTenc}, \mathsf{WTdec})$ is a linear $\left(\frac{d^\bot-1}{n},0\right)$-$\mathrm{WT}$ code.
\end{lemma}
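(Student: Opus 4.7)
The plan is to verify the three requirements of Definition \ref{def: WtII}: correctness of decoding, the claimed $\left(\frac{d^\bot-1}{n},0\right)$-wiretap II privacy, and the linearity clause, using the coset structure of the encoding.

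First I would establish correctness. Because $\mathbf{R}G\in\mathcal{C}$, every output lies in the coset $\mathbf{m}\hat{G}+\mathcal{C}$. Full-rankness of $\left[\begin{smallmatrix}G\\\hat{G}\end{smallmatrix}\right]$ ensures that the $q^k$ vectors $\mathbf{m}\hat{G}$ lie in pairwise distinct cosets, so $\mathbf{m}\mapsto \mathbf{m}\hat{G}+\mathcal{C}$ is a bijection between $\mathbb{F}_q^k$ and $\mathbb{F}_q^n/\mathcal{C}$. Computing the syndrome $H\mathbf{x}^T$ identifies the coset of the received word, and inverting the bijection recovers $\mathbf{m}$; in particular $\mathsf{WTdec}$ is total and never outputs $\bot$.

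Second I would handle privacy. I invoke the standard duality: the dual code has minimum distance at least $d^\bot$ if and only if every collection of $d^\bot-1$ columns of $G$ is $\mathbb{F}_q$-linearly independent, equivalently, for every $S\subset[n]$ with $|S|\leq d^\bot-1$ the linear map $\mathbf{r}\mapsto(\mathbf{r}G)_S$ from $\mathbb{F}_q^{n-k}$ to $\mathbb{F}_q^{|S|}$ is surjective. Since $\mathbf{R}$ is uniform, $(\mathbf{R}G)_S$ is then uniform on $\mathbb{F}_q^{|S|}$. Writing $\mathsf{WTenc}(\mathbf{m})_S=(\mathbf{R}G)_S+(\mathbf{m}\hat{G})_S$ and noting that $(\mathbf{m}\hat{G})_S$ is a fixed offset, uniformity is preserved, so $\mathsf{WTenc}(\mathbf{m})_S$ is uniform irrespective of $\mathbf{m}$. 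Hence for any $\mathbf{m}_0,\mathbf{m}_1$ and any $S$ with $|S|\leq d^\bot-1$,
\[
\mathsf{SD}\bigl(\mathsf{WTenc}(\mathbf{m}_0)_S;\,\mathsf{WTenc}(\mathbf{m}_1)_S\bigr)=0,
\]
which is the required $\left(\frac{d^\bot-1}{n},0\right)$-wiretap II privacy.

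Third, for linearity of the decoder, the map $\mathsf{WTdec}$ factors as $\mathbf{x}\mapsto \mathbf{x}+\mathcal{C}$ (an $\mathbb{F}_q$-linear quotient, realized by the syndrome $H\mathbf{x}^T$) followed by the inverse of the linear bijection $\mathbf{m}\mapsto \mathbf{m}\hat{G}+\mathcal{C}$; both steps are linear and total, so $\mathsf{WTdec}(\mathbf{x}_0+\mathbf{x}_1)=\mathsf{WTdec}(\mathbf{x}_0)+\mathsf{WTdec}(\mathbf{x}_1)$ for all $\mathbf{x}_0,\mathbf{x}_1\in\mathbb{F}_q^n$, trivially satisfying the linearity clause (the $\bot$ case is vacuous). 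The only nontrivial ingredient is the dual-distance/column-independence equivalence invoked in the privacy step; this is a classical fact about linear codes that I would cite rather than reprove, so I do not expect any genuine obstacle in carrying out the plan.
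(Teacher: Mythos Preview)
Your argument is correct and is the standard coset-coding proof: correctness via the bijection between messages and cosets, perfect privacy via the dual-distance/column-independence equivalence (every $d^\bot-1$ columns of $G$ are independent, hence $(\mathbf{R}G)_S$ is uniform for $|S|\le d^\bot-1$), and linearity because syndrome decoding is a total linear map. The paper itself does not supply a proof of this lemma; it is stated as a citation of the classical Ozarow--Wyner result, so there is no paper-side argument to compare against. Your write-up is exactly what one would expect for a self-contained verification.
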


Binary linear codes with minimum distance $d> n/2$ exist; 
see for example 
 \cite{Markus}.  Instantiating $C$ in Lemma \ref{lem: WtII} with the dual of such codes result in 
  binary linear $(\rho,0)$-$\mathrm{WT}$ codes with $\rho\geq \frac{1}{2}$.

\vspace{1mm}
\noindent\textbf{An explicit family of $(0,1)_{\mathrm{BIT}}$-$\mathrm{NMC}$.}
For a $[2^h-1,2^h-1-h,3]$-Hamming code, 
the dual code 
is a $[2^h-1,h,2^{h-1}]$-Simplex code and has 
 $d^\bot=2^{h-1}$.  The WtII code can tolerate $\rho=\frac{1}{2}$ with $\varepsilon=0$. 
Using the AMD code 
in Lemma \ref{lem: AMD} with this code gives a bit-wise independent non-malleable code of length $2^h-1$.
 This is an explicit  construction of  
 non-malleable codes for the function family $\bt$. 
\remove{
For exa $h=6$, 
Theorem \ref{con: cute bit'} gives a bit-wise independent non-malleable code of length $63$ that encodes $2$ bits message and achieves exact security $\frac{1}{2}$. If we only encode one bit message, the length can be further reduced to $31$ (let $h=5$) with the same exact security. 
The exact security can be made arbitrarily small as the codeword length grows.
}
This  
code  can be made linear time encoding/decoding if the AMD construction is replaced with the linear time AMD construction in \cite{linear time}.

\section{Applications to Communication Security} \label{sec: applications}
Our motivation for introducing the function family $\rbtl$  is to model physical layer adversaries. In the following we give two applications of the NM-codes for $\rbtl$ family in widely studied communication settings.  In both cases we only consider one round protocols.
\subsection{Wiretap Channel II with Active Adversary}
Wiretap II model with active adversary was first studied in \cite{Lai Lifeng}, where the
eavesdropped and tampered components were restricted to the
same set.
In  the model proposed in  \cite{AWTP}  
 the adversary can read a fraction $\rho_r$, and add noise to a fraction $\rho_w$, 
and the goal is to provide secrecy (indistinguishability) and correct message recovery.
It was proved that  the rate upper bound for these codes is 
$1-\rho_r-\rho_w$, and so 
when $\rho_r+\rho_w>1$, one needs to relax privacy or reliability requirements.
\remove{
 has to consider privacy and a weaker form of reliability than message recovery. A model called limited view algebraic tampering or LVAT considered such an adversary \cite{ICITS}. The LVAT adversary can read a fraction $\rho$, and add noise to all of the codeword components. This is equivalent to a ($\rho,1$)-AWTP adversary. The wiretap II model with active LVAT adversary has two security goals: wiretap II privacy against $\rho$ fraction reading and detection of change of message \cite{ICITS}. These two goals can be achieved simultaneously for any value of $\rho<1$ and with rate $1-\rho$. 

Now w
}
We consider a wiretap II model where  the active adversary  can tamper with the codeword using functions in $\rbtlr$.
\remove{
who has even more power than the LVAT adversary. We give the adversary the extra power of overwriting any amount of codeword components and, similar to all previous wiretap II models with active adversary, the choice of the positions/values to overwrite is made after observing the content in the reading set. Note that for this adversary, detection of change of message is impossible. This is because the adversary can completely overwrite the received codeword to another valid codeword, for which the decoder will not be able to detect. 
}
\begin{definition}\label{def_awtpchannel}
A $(\rho_r,1)$-active adversary wiretap II channel is  
a communication channel between Alice and Bob that is (partially) controlled by an adversary Eve with two capabilities: Read and Write.

\begin{itemize}
\item Read:  Eve  selects a fraction $\rho_r$ of the components of the codeword to read. 

\item Write: Eve uses the read components to add errors to, or write over,  possibly all 
components of the codeword. 

 This  is equivalent to applying a function in $\rbtlr$ to the codeword.
\end{itemize}
\end{definition}

Codes for this channel must provide security (indistinguishability) and non-malleability.
\begin{definition}\label{def_awtpcode}
A $(\rho_r,1)$-active adversary wiretap II code 
is a coding scheme $(\mathsf{Enc},\mathsf{Dec})$ that 
guarantees the following two security properties.
\begin{itemize}
\item {\em Secrecy:} For any pair of messages $\mathbf{m}_0$ and $\mathbf{m}_1$, any reading set $S_r\subset [n]$ of size $|S_r|\leq n\rho_r$,
$$
\mathsf{SD}(\mathsf{Enc}(\mathbf{m}_0)_{S_r};\mathsf{Enc}(\mathbf{m}_1)_{S_r})\leq \varepsilon.
$$

\item {\em Non-malleability:} 
$(\mathsf{Enc},\mathsf{Dec})$ is non-malleable with respect to $\rbtlr$.
\end{itemize}
\end{definition}

Capacity of a  $(\rho_r,1)$-active adversary wiretap II channel is the highest achievable rate of coding schemes for this channel.
%
Using the results in Section \ref{sec: bounds} we can prove the following theorem.

\begin{theorem}\label{th: active WtII}
The capacity of $(\rho_r,1)$-active adversary wiretap II code is $1-\rho_r$.
\end{theorem}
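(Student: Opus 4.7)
The plan is to prove the upper and lower bounds separately, in each case reducing the claim to results already established in Section \ref{sec: bounds}.

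For the upper bound, observe that by Definition \ref{def_awtpcode} any $(\rho_r,1)$-active adversary wiretap II code must satisfy the secrecy property, which is exactly the indistinguishability guarantee of a $(\rho_r,\varepsilon)$-$\mathrm{WT}$ code (Definition \ref{def: WtII}). Hence its rate is bounded by the rate of wiretap II codes with leakage parameter $\rho_r$, which by Lemma \ref{lem: wtupperbound} is at most $1-\rho_r$. No further work is needed here.

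For the lower bound, I would invoke the existence part of Theorem \ref{th: strongNM}: for every $\xi>0$ and for all sufficiently large $n$, there exists a strongly non-malleable code with respect to $\rbtlr$ whose rate exceeds $1-\rho_r-\xi$. Any such code automatically satisfies both requirements of Definition \ref{def_awtpcode}: the non-malleability property holds because strong non-malleability implies (default) non-malleability (\cite[Theorem 3.1]{DzPiWi}), and the secrecy property holds because by Lemma \ref{lem: strong is privacy} a strong $\rbtlr$-non-malleable code with exact security $\varepsilon$ is a $(\rho_r,\varepsilon)$-$\mathrm{WT}$ code. Letting $\xi\to 0$ yields achievability of rate $1-\rho_r$.

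The main step, and essentially the only non-bookkeeping one, is verifying that the $\rbtlr$ function class is indeed the tampering family relevant to Definition \ref{def_awtpchannel}; but this is immediate since $\rho_w=1$ allows the adversary to tamper bitwise on all codeword positions, which is exactly the write capability granted in Definition \ref{def_awtpchannel}. Combining the two bounds yields capacity $1-\rho_r$. \qed
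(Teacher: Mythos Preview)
Your proposal is correct and follows essentially the same approach as the paper: the upper bound comes from the secrecy requirement together with Lemma~\ref{lem: wtupperbound}, and the lower bound comes from the capacity of strong $(\rho_r,1)_{\mathrm{BIT}}$-$\mathrm{NMC}$ (Theorem~\ref{th: strongNM}) combined with Lemma~\ref{lem: strong is privacy} and the fact that strong non-malleability implies default non-malleability. The paper's proof is slightly terser but the logical structure is identical.
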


\begin{proof}
We first show that a strong $(\rho_r,1)_{\mathrm{BIT}}$-$\mathrm{NMC}$ is a $(\rho_r,1)$-active adversary wiretap II code. The secrecy property follows from Lemma \ref{lem: strong is privacy}. 
The non-malleability property is satisfied because a strongly non-malleable coding scheme is non-malleable.
The lower bound $1-\rho_r$ for strong $(\rho_r,1)_{\mathrm{BIT}}$-$\mathrm{NMC}$ then gives a lower bound for $(\rho_r,1)$-active adversary wiretap II code. 

The upper bound $1-\rho_r$ follows because a $(\rho_r,1)$-active adversary wiretap II code is a wiretap II code with leakage $\rho_r$ (passive adversary).
\qed
\end{proof}

\remove{  THIS IS A CONFUSING REMARK - WHY SHOULD THESE BE RELATED??
\begin{remark} We note that a $(\rho_r,1)$-active adversary wiretap II code is not necessarily a strong $(\rho_r,1)_{\mathrm{BIT}}$-$\mathrm{NMC}$. The codes in Theorem \ref{th: active WtII construction} are good examples.
\end{remark}
}

We have the following two explicit constructions of efficient $(\rho_r,1)$-active adversary wiretap II codes using our constructions of $(\rho_r,\rho_w)_{\mathrm{BIT}}$-$\mathrm{NMC}$.  
Constructing  capacity-achieving $(\rho_r,1)$-active adversary wiretap II codes is 
 an open question.
\begin{theorem}\label{th: active WtII construction}
The 
constructions in Theorem \ref{th: basic construction} with $t'>n\rho_r$ and $d'>\frac{n(1-\rho_r)}{4}$,  
and 
Theorem \ref{con: cute bit'} with $\rho=\frac{1+\rho_r}{2}$,
each gives  a $(\rho_r,1)$-active adversary wiretap II code.
\end{theorem}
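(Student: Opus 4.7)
My plan is to observe that the non-malleability requirement of Definition \ref{def_awtpcode} is already discharged by Theorem \ref{th: basic construction} and Theorem \ref{con: cute bit'} respectively, so the only remaining task is to verify the secrecy (indistinguishability) requirement for each construction, and to check that the parameters chosen in the statement make this verification go through. In both cases the secrecy will follow from the secrecy-type property of the inner building block (either $t'$-uniformity of LECSS or $(\rho,\varepsilon)$-WT privacy) combined with the fact that the AMD encoder is applied first and then fed into that building block, which commutes cleanly with restriction to a subset $S_r$ of coordinates.

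For Construction 1, the key point is that $(\mathsf{LECSSenc},\mathsf{LECSSdec})$ is a $(d',t')$-LECSS with $t' > n\rho_r$. By the $t'$-uniform property in Definition \ref{def: LECSS}, for any fixed input $\mathbf{w}\in\{0,1\}^\ell$ the components of $\mathsf{LECSSenc}(\mathbf{w})$ are $t'$-wise independent and each individually uniform on $\{0,1\}$. Hence for any subset $S_r\subset[n]$ with $|S_r|\leq n\rho_r < t'$, the random variable $\mathsf{LECSSenc}(\mathbf{w})_{S_r}$ is uniform on $\{0,1\}^{|S_r|}$ regardless of $\mathbf{w}$. Applying this with $\mathbf{w}=\mathsf{AMDenc}(\mathbf{m}_b)$ for $b\in\{0,1\}$, and averaging over the randomness of the AMD encoder, we conclude that $\mathsf{Enc}(\mathbf{m}_0)_{S_r}$ and $\mathsf{Enc}(\mathbf{m}_1)_{S_r}$ are both uniform on $\{0,1\}^{|S_r|}$, and therefore $\mathsf{SD}(\mathsf{Enc}(\mathbf{m}_0)_{S_r};\mathsf{Enc}(\mathbf{m}_1)_{S_r})=0$, giving secrecy with $\varepsilon=0$.

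For Construction 2, the WT code is assumed to be a linear $(\rho,\varepsilon)$-WT code with $\rho=\frac{1+\rho_r}{2}\geq\rho_r$. For any $S_r\subset[n]$ with $|S_r|\leq n\rho_r\leq n\rho$ and any $\mathbf{m}_0,\mathbf{m}_1$, applying Definition \ref{def: WtII} to the pair of messages $\mathsf{AMDenc}(\mathbf{m}_0),\mathsf{AMDenc}(\mathbf{m}_1)\in\{0,1\}^\ell$ yields
\[
\mathsf{SD}(\mathsf{WTenc}(\mathsf{AMDenc}(\mathbf{m}_0))_{S_r};\mathsf{WTenc}(\mathsf{AMDenc}(\mathbf{m}_1))_{S_r})\leq\varepsilon,
\]
for every fixing of the internal AMD randomness; averaging over the AMD randomness preserves this bound by the triangle inequality for statistical distance, and the left-hand side is exactly $\mathsf{SD}(\mathsf{Enc}(\mathbf{m}_0)_{S_r};\mathsf{Enc}(\mathbf{m}_1)_{S_r})$. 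Hence the composed code achieves secrecy with the same parameter $\varepsilon$.

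Neither step is really the ``main obstacle'' since both are essentially bookkeeping on top of properties already proved: the slight subtlety in Construction 1 is just checking that $t'$-wise independence of the LECSS components carries through after averaging over the AMD randomness (it does, because the property holds for every fixed input), and in Construction 2 it is the trivial inclusion $n\rho_r\leq n\cdot\tfrac{1+\rho_r}{2}$ that makes the WT privacy guarantee applicable to the reading set. Combining the secrecy shown here with the non-malleability supplied by Theorems \ref{th: basic construction} and \ref{con: cute bit'} respectively completes the proof that both constructions satisfy Definition \ref{def_awtpcode}.
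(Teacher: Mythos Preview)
Your proposal is correct and follows essentially the same approach as the paper's proof: non-malleability is inherited directly from Theorems \ref{th: basic construction} and \ref{con: cute bit'}, and secrecy is deduced from the $t'$-uniform property of the LECSS (since $t'>n\rho_r$) and from the $(\rho,\varepsilon)$-WT privacy (since $\rho=\frac{1+\rho_r}{2}>\rho_r$), respectively. Your write-up simply spells out in more detail what the paper states tersely; one minor wording quibble is that the ``averaging preserves the bound'' step for Construction~2 is really joint convexity of statistical distance rather than the triangle inequality per se, but the conclusion and the bound $\varepsilon$ are correct.
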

\begin{proof}
The non-malleability property follows directly from Theorem \ref{th: basic construction} and Theorem \ref{con: cute bit'}, respectively. The secrecy property follows from the $t'$-uniform property of LECSS and the privacy of the $(\rho,\varepsilon)$-$\mathrm{WT}$ code due to the parameter choices $t'>n\rho_r$ and $\rho=\frac{1+\rho_r}{2}>\rho_r$, respectively. 
\qed
\end{proof}

\remove{
\begin{remark} Note that a $(\rho_r,1)$-active adversary wiretap II code is not necessarily a strong $(\rho_r,1)_{\mathrm{BIT}}$-$\mathrm{NMC}$. 
For example, the construction in Theorem \ref{con: cute bit'} does not achieve strong non-malleability. This can be seen by considering a tampering function that overwrites more than $\rho=\frac{1+\rho_r}{2}$ fraction of the codeword components to $\beta$ and keep the rest of the components unchanged. Note that when the original value of the codeword components are $\beta$, the codeword will stay the same. According to Definition \ref{def: strong non-malleability}, this probability will be assigned to the symbol $\mathsf{same}^*$ in the strong tampering experiment. But since the length of $\beta$ is bigger than $n\rho$, the probability that the overwritten components have value $\beta$ will be significantly deferent for at least a pair of different messages, implying different probability for the symbol $\mathsf{same}^*$. This contradicts strong non-malleability.
\end{remark}
}

\subsection{Secure Message Transmission in Networks} 
\label{sec: second application}

In the model of Secure Message Transmission (SMT) \cite{DDWY93}, Alice is connected to Bob  by a set of $n$ node-disjoint network paths (also called wires).
The adversary can adaptively choose  a   subset of paths 
to eavesdrop and arbitrarily modify.
Although the original model considered adversaries who can select  possibly distinct sets of wires for listening, corrupting, and blocking,  
 SMT problem has been mainly studied  for  $(t,n)$-threshold adversaries  
who adaptively select  $t$ out of $n$ wires and  arbitrarily modify   them.  In the following we only consider this model.

A {\em  1-round $(\varepsilon,\delta)$-SMT protocol } is a   coding scheme with a pair  of  algorithms ($\mathsf{SMTenc}, \mathsf{SMTdec}$):  
 $\mathsf{SMTenc}$ encodes a message  ${\bf m_S}$ in $\mathcal{M}$ to a codeword (also called {\em protocol transcript})  $\mathbf{c}= (\mathbf{c_1},\cdots \mathbf{c_n}) \in (\mathbb{F}_q)^n$ where $\mathbf{c_i}$ is sent over wire $i$ (referred to as wire $i$ transcript), and $ \mathsf{SMTdec}$ decodes the received transcripts 
 to  ${\bf m_R}$ in   $\mathcal{M} \cup \{ \perp\}$, 
guaranteeing  privacy loss (indistinguishability security)  is at most $\varepsilon$,  and probability of error ($\Pr[{\bf m_S} \neq {\bf m_R}]$)  is bounded by $\delta$.
\remove{
\begin{definition} \label{smt}
A $(t,n)$-threshold  $(\varepsilon,\delta)$-secure message transmission (SMT)  protocol 
over $n$ wires, $t$ of which are corrupted by the adversary,  
satisfies the following properties for any choice of $t$ wires by the adversary: 
\begin{itemize}
\item {\em Secrecy: }For any pair of messages $\mathbf{m}_0,\mathbf{m}_1\in \cM$ and for all  adversary  $A$,
\[
\max_{\mathbf{m}_0, \mathbf{m}_1} SD(\mathsf{View}_A(\mathsf{SMTenc}(\mathbf{m}_0),r_A), 
\mathsf{View}_A(\mathsf{SMTenc}(\mathbf{m}_1),r_A))\leq \varepsilon_{SMT}.
\]
\item {\em Reliability:} The probability of  the receiver outputting an  incorrect   message is bounded by,
\[
\Pr[M_S \neq M_R]\leq \delta_{SMT},  
\]
where the message distribution is chosen by the adversary.
\end{itemize}
\end{definition}
}

It has been proved \cite{Franklin} that  $(\varepsilon,\delta)$-SMT protocols  exist   only if $n\geq 2t+1$,
 and this is irrespective of the number of protocol rounds.
%
In the following our goal is to show that  one can remove this restriction if the reliability goal is reduced to non-malleability. 

We consider adversaries that tamper with the protocol transcript according to functions in a {\em tampering function family} 
defined below.
We consider an SMT adversary with the following capabilities:
 the adversary  (i)  controls   $t=n\rho_r$ wires and (ii)  tampers obliviously with all other wires by either (algebraically) adding an offset, or 
 setting  the value to a 
  new value (overwrite).
  Compared to the traditional threshold $(t,n)$ adversary, this new adversary has the extra capability of tampering with all the wires also.
We consider the following set of symbol-wise independent Add and Overwrite (AO) tampering functions.
$$
\mathcal{F}_{\mathrm{AO}}^{[n]}=\left(\mathcal{F}^{add}\cup \mathcal{F}^{ow}\right)^{n},
$$
where $\mathcal{F}^{add}=\{f_\Delta(\mathbf{x})=\mathbf{x}+\Delta|\Delta\in\mathbb{F}_q\}$ denotes the set of additive tampering functions and $\mathcal{F}^{ow}=\{f_\mathbf{c}(\mathbf{x})=\mathbf{c}|\mathbf{c}\in\mathbb{F}_q\}$ denotes the set of overwrite tampering functions.

\vspace{1mm}
\noindent
{\bf Relation with $\bt$.}
Generalisation of $\bt$ from binary to $q$-ary alphabet $\mathbb{F}_q$, is symbol-wise independent tampering family $\mathcal{F}_{\mathrm{SIT}}^{[n]}\stackrel{def}{=}\left(\mathcal{F}^{all}\right)^{n}$ where $\mathcal{F}^{all}$ denotes all functions from $\mathbb{F}_q$ to $\mathbb{F}_q$. 
The class $\mathcal{F}_{\mathrm{AO}}^{[n]}$  is a subset of $\mathcal{F}_{\mathrm{SIT}}^{[n]}$ and  has size $|\mathcal{F}_{\mathrm{AO}}^{[n]}|=(2q)^n$.
 This is much smaller than $\mathcal{F}_{\mathrm{SIT}}^{[n]}$ that is of size $|\mathcal{F}_{\mathrm{SIT}}^{[n]}|=(q^q)^n$. It is only in the special case 
for  $q=2$, we have $\mathcal{F}_{\mathrm{AO}}^{[n]}=\bt$.

\begin{definition}\label{def: SMT tampering functions} 
\begin{equation}\label{eq: SMT set}
\AO    \stackrel{def}=     \left\{f_{S_r,[n],g}\ | \ S_r\in {\cal S}^{[n]}_{\rho_r} ,  
g: \mathbb{F}_q^{n\rho_r}\rightarrow   \mathcal{F}_{\mathrm{AO}}^{[n]}
 \right\},
\end{equation}
where the tampering function $f_{S_r,[n],g}:\mathbb{F}_q^n\rightarrow\mathbb{F}_q^n$ is given as follows:
for $\mathbf{x}\in\mathbb{F}_q^n$, depending on the value of $\mathbf{x}_{S_r}$,   $g$ selects a  
symbol-wise tampering function  $g( \mathbf{x}_{S_r} )$ from   $\mathcal{F}_{\mathrm{AO}}^{[n]}$  
which   is applied to $\mathbf{x}$.
That is,  
\begin{equation}\label{eq: SMT function}
f_{S_r,[n],g}(\mathbf{x}) = g( \mathbf{x}_{S_r})(\mathbf{x}).    
\end{equation} 
To simplify notations, we let $ g^{\mathbf{x}_{S_r}} \stackrel{def}=   g( {\mathbf{x}}_{S_r} )$ and write $g^{\mathbf{x}_{S_r}}=(g^{\mathbf{x}_{S_r}}_1,\cdots,g^{\mathbf{x}_{S_r}}_n)$, where $g^{\mathbf{x}_{S_r}}_i$ is either an additive function or an overwrite function, for $i=1,\cdots,n$. 
\end{definition}

\begin{definition}\label{def: nm SMT} 
A ($1$-round) $(n,\f,\varepsilon,\delta)$-non-malleable secure message transmission or $(n, \f,\varepsilon,\delta)$-$\mathrm{NM}$-$\mathrm{SMT}$ is  a protocol  over $n$ wires, defined by a pair  of  algorithms ($\mathsf{SMTenc}, \mathsf{SMTdec}$),
that for an adversary with access to corruption strategies in $\f$, the following properties are satisfied.
\remove{
  with messages in $\mathcal{M}$ and  codewords in $ (\mathbb{F}_q)^n$, 
,  and 
 the adversary ha
satisfies the following properties for any choice of $t$ wires by the adversary:
}
\begin{itemize}
\item {\em Secrecy: }For any pair of messages $\mathbf{m}_0,\mathbf{m}_1\in \cM$ and for any  adversary strategy $A$ embodied   by $\f$, 
\[
\max_{\mathbf{m}_0, \mathbf{m}_1} SD(\mathsf{View}_A(\mathsf{SMTenc}(\mathbf{m}_0)); 
\mathsf{View}_A(\mathsf{SMTenc}(\mathbf{m}_1)))\leq \varepsilon,
\]
where $\mathsf{View}_A(\cdot)$ is a random variable representing leakage.
\item Non-malleability: ($\mathsf{SMTenc}, \mathsf{SMTdec}$) is non-malleable with respect to $\f$ and with exact security $\delta$.
\end{itemize}
\end{definition}


\begin{theorem} \label{th: SMT}
The construction in Theorem \ref{con: cute bit'} with a $(\frac{1+\rho_r}{2},\varepsilon)$-$\mathrm{WT}$ code 
over the alphabet $\mathbb{F}_q$ and an AMD code with $\delta$-tamper detection security gives a $(n,\AO,\varepsilon,2\varepsilon+\delta)$-$\mathrm{NM}$-$\mathrm{SMT}$.
\end{theorem}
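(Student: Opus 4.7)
The plan is to mirror the proof of Theorem \ref{con: cute bit'}, observing that $\AO$ over $\mathbb{F}_q$ plays exactly the role that $\bt$ plays over $\mathbb{F}_2$ in that proof: each component of a function in $\mathcal{F}_{\mathrm{AO}}^{[n]}$ is either an additive shift $x_i\mapsto x_i+\Delta_i$ or an overwrite $x_i\mapsto c_i$, which are the $q$-ary analogs of $\{\mathsf{Keep},\mathsf{Flip}\}$ and $\{\mathsf{Set0},\mathsf{Set1}\}$ respectively. Secrecy is immediate: the adversary's view is a subset of $n\rho_r$ symbols of $\mathsf{WTenc}(\mathsf{AMDenc}(\mathbf{m}))$, and the linear WT code has leakage parameter $\tfrac{1+\rho_r}{2}\ge \rho_r$, so by Definition \ref{def: WtII} any two messages are $\varepsilon$-indistinguishable.

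For non-malleability, for a fixed $f_{S_r,[n],g}\in\AO$ I would condition on the read value $\alpha=\mathbf{X}_{S_r}$ as in (\ref{sum}), write $g^\alpha=g(\alpha)\in\mathcal{F}_{\mathrm{AO}}^{[n]}$, and observe that the ``difference function'' $\Delta g^\alpha(\mathbf{x}) = g^\alpha(\mathbf{x})-\mathbf{x}$ has the key structural property that it is constant on coordinates where $g^\alpha$ is additive and depends on $\mathbf{x}$ only on coordinates where $g^\alpha$ is an overwrite. I would then split on $n^{ow}_{\bar{S}_r}$ exactly as in the binary case. When $n^{ow}_{\bar{S}_r}\le |\bar{S}_r|/2$, let $S$ be the overwrite positions of $g^\alpha$ in $\bar{S}_r$; then $|S_r\cup S|\le n\rho_r+n(1-\rho_r)/2 = n(1+\rho_r)/2$, and using $\mathbf{Y}=\mathsf{WTenc}(0^\ell)$ the wiretap guarantee yields $\mathrm{SD}(\mathbf{X}^\alpha_S;\mathbf{Y}^\alpha_S)\le \varepsilon$. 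I would define $\mathcal{D}^{f_{S_r,[n],g}}_\alpha$ as in Case 1 of Theorem \ref{con: cute bit'} (outputting $\mathsf{same}^*$ when $\Delta g^\alpha(\mathbf{Y}^\alpha)=0^n$, else $\bot$) and bound $\mathrm{SD}(\mathrm{Tamper}_{\mathbf{m},\alpha}^{f_{S_r,[n],g}};\mathrm{Patch}(\mathcal{D}^{f_{S_r,[n],g}}_\alpha,\mathbf{m}))\le\varepsilon+\delta$ via the intermediate random variable $\mathbf{T}'=\mathsf{AMDdec}(\mathsf{AMDenc}(\mathbf{m})+\mathsf{WTdec}(\Delta g^\alpha(\mathbf{Y}^\alpha)))$, invoking linearity of the WT code and $\delta$-tamper detection security of the AMD code. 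The case $n^{ow}_{\bar{S}_r}>|\bar{S}_r|/2$ is handled symmetrically with $S$ taken to be the non-overwrite positions of $g^\alpha$ in $\bar{S}_r$ and $\mathcal{D}^{f_{S_r,[n],g}}_\alpha$ chosen to be the distribution of $\mathsf{Dec}(g^\alpha(\mathbf{Y}^\alpha))$. Averaging over $\alpha$ and introducing the intermediate distribution that samples $\alpha\sim \mathbf{X}_{S_r}$ (instead of $\mathbf{Y}_{S_r}$) costs an extra $\varepsilon$ via wiretap privacy on $S_r$, producing the final bound $2\varepsilon+\delta$.

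The main obstacle I expect is certifying that the binary difference-function argument really does carry over to $\mathbb{F}_q$. In the binary case every additive bit function has constant self-difference and every overwrite has self-difference $x\mapsto x\oplus c$, i.e.\ non-overwrite; over $\mathbb{F}_q$ the self-difference of an overwrite is the affine map $x\mapsto c-x$ rather than an $\mathcal{F}_{\mathrm{AO}}$ function, so one must verify carefully that what is used in the proof is only the weaker property that $\Delta g^\alpha(\mathbf{x})$ depends on $\mathbf{x}$ solely through the overwrite coordinates of $g^\alpha$, which remains true. Beyond this bookkeeping, the extension is routine because the AMD guarantee in Definition \ref{def: AMD} is already stated for additive tampering over $\{0,1\}^n$ (hence over $\mathbb{F}_q$-vectors after identification), and the linearity clause in Definition \ref{def: WtII} is phrased over $\mathbb{F}_q$ directly, so both building blocks slot into the proof template of Theorem \ref{con: cute bit'} without further modification.
\qed
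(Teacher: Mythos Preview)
Your proposal is correct and follows essentially the same approach as the paper: the paper's own proof is a two-sentence sketch stating that the argument of Theorem~\ref{con: cute bit'} goes through over $\mathbb{F}_q$ for $\AO$ because the needed difference-function properties hold for $\mathcal{F}_{\mathrm{AO}}^{[n]}$ (and would fail for general $\mathcal{F}_{\mathrm{SIT}}^{[n]}$), with secrecy coming directly from the $q$-ary wiretap code. Your write-up is in fact more detailed than the paper's, and your identification of the key technical point---that one only needs $\Delta g^\alpha(\mathbf{x})$ to depend on $\mathbf{x}$ solely through the overwrite coordinates, not that $\Delta g^\alpha$ itself land in $\mathcal{F}_{\mathrm{AO}}^{[n]}$---is precisely the content behind the paper's remark that the proof ``relies on the properties of difference function that will hold for functions in $\mathcal{F}_{\mathrm{AO}}^{[n]}$ only.''
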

\begin{proof}

The construction in Theorem \ref{con: cute bit'} with the above parameter setting gives a $q$-ary NM-code with respect to $\AO$. 
The proof  relies on the properties of  difference function that will hold for functions in $\mathcal{F}_{\mathrm{AO}}^{[n]}$ only (and not  $\mathcal{F}_{\mathrm{SIT}}^{[n]}$).
Secrecy follows from the  indistinguishability security of $q$-ary wiretap II code.
\qed
\end{proof}

Linear $(\frac{1+\rho_r}{2},\varepsilon)$-$\mathrm{WT}$ codes over large alphabet can be constructed using the coset code construction in Lemma \ref{lem: WtII} using for example a Maximum Distance Separable (MDS) code. This gives explicit $(\frac{1+\rho_r}{2},0)$-$\mathrm{WT}$ codes  with rate $1-\frac{1+\rho_r}{2}$.






\section{Conclusion}
We proposed a   family of bitwise tampering functions that were motivated by physical layer adversaries  and  were specified by a pair of parameters $(\rho_r, \rho_w)$, defining the eavesdropping and tampering capabilities of the adversary.  Allowing the adversary to select tampering based on the eavesdropped information models powerful adversaries and results in a class of functions that is much larger (double exponential) than the widely studied independent bit tampering class.
We defined non-malleable code with respect to this class and  proved a number of rate bounds that  fully characterize capacity of  codes  that provide  strong non-malleability, as well as   capacity of  (default) non-malleable codes when $\rho_r\leq \rho_w$.
 We  also gave two modular constructions, with the second one using wiretap II codes that, using coset code construction of these codes, effectively constructs  NM-codes from linear error correcting codes and AMD codes.
 
 There are many open questions and directions for future research.
 We left tight upper bound and capacity of (default) NM-codes  when $\rho_w<\rho_r$,  as an open problem.
Also none of our construction are capacity achieving, and so construction of capacity achieving codes  remains open.

Our main results are for binary codes.  However in SMT setting, transcripts of wires are $q$-ary values.  Extending the results for $\rbtl$ to $q$-ary case, strengthens our current NM-SMT construction for function class $\AO$, and   allows more powerful adversaries in network setting be  tolerated.
Our explicit  construction for NM-SMT uses $q$-ary wiretap II codes. One can also  adapt the construction of $q$-ary codes in \cite{linear time} to obtain secure NM-SMT. We leave  this for future work. 

Non-malleability was  originally motivated for providing tamper resilience in tamperable storages.  Our work is the first to consider application in well motivated communication settings of wiretap II codes with active adversaries, and secure communication in networks.
Other applications of non-malleability in communication scenarios, including modelling collusion attacks, are interesting directions for future work.
The function class $\rbtl$ assumes tampering on the components of $S_w$  are independent of each other, and depends on the read values over $S_r$, only. A more general case is when tampering of bits in $S_w$ are correlated.

 \remove{
 
  and also defauwcoproved rate  of non-malleable codes that is motivated by the physical layer attacks in an adversarial communication channel and networks: assuming the adversary can read a $\rho_r$ fraction of codeword and tamper with a $\rho_w$ fraction of codeword. We proved capacity results for this model: $1-\rho_r$ for strong non-malleability and (default) non-malleability when $\rho_r\leq\rho_w$. We gave two explicit constructions of non-malleable codes for this model. Our second construction is novel and proved to be versatile. We applied our results to protect protocols over communication channel and networks against extreme adversary for which conventional security goals are impossible. 

\textcolor{blue}{
We leave many interesting questions open for future studies. 
Concerning the capacity, when $\rho_w\leq\rho_r$ we do not know the upper bound. Concerning code construction, both of our constructions fall short in rate. 

Given the relation between our model and the bit-wise independent model ($\rho_r=0$), for which many rate $1$ constructions are known, looking for a coding scheme of rate $1-\rho_r$ that ``reduces'' the $\rbtlr$ to $\bt$ is highly tempting. Once achieved, this would give the first capacity-achieving non-malleable codes with respect to double exponential size tampering family. Concerning the tampering model, we turn to believe the $\rbtl$ model captures a general ``tamper with leakage'' adversary. A seeming more general adversary would be read $n\rho_r$ bits and decide a function to be applied to $n\rho_w$ bits. Whether this adversary is equivalent to increasing leakage to $\rho_r+\rho_w$ needs investigation to confirm.}

We proposed a new model of non-malleable codes that is motivated by the physical layer attacks in an adversarial communication channel: assuming the adversary can read a $\rho_r$ fraction of codeword and tamper with a $\rho_w$ fraction of codeword. This model is interesting in many ways. When $\rho_w=1$, it includes the well-studied bit-wise independent non-malleable codes as a special case. A rate lower bound of $1-\rho_r$ was shown for this new model of non-malleable codes. An upper bound of $1-\rho_r$ was proved only in the case when $\rho_w\geq\rho_r$. We were able to exhibit concrete non-malleable codes in this model, including a family of bit-wise independent non-malleable codes (the simplest among all other constructions for this model). Continuous non-malleable codes in this new model are also interesting. The two non-persistent scenarios were shown impossible in the information-theoretic setting while in the other two persistent scenarios, we were able to give concrete constructions. There are many interesting open problems in this work, including whether the upper bound in the case when $\rho_w<\rho_r$ is still $1-\rho_r$ and capacity achieving constructions of non-malleable codes for the case when $\rho_w\geq\rho_r$.
}

\remove{
\textcolor{blue}{Future works: prospect of capacity-achieving (rhor, rhow) NMCs. Compiler constructions as in \cite{ChGu1,MajiTCC,linear time} won't be able to work. This is because, the adversary can use their reading budget on the tag part, which is supposed to be of length negligible. This means the tag needs to protect from F-all. A reduction seems possible. It is ongoing study.}
}


{
\appendix
\section*{Appendices}
\addcontentsline{toc}{section}{Appendices}
\renewcommand{\thesubsection}{\Alph{subsection}}

\subsection{Appendix to the proof of Theorem \ref{th: basic construction}}\label{apdx: proof of basic construction}

\begin{enumerate}
\item $n^{ow}_{\bar{S}_r}\in \left[0,t'-n\rho_r\right]$.   We consider the effect of the difference function defined by  (\ref{eq: difference}) on
 codewords in $C_\alpha$.
 According to  above, the number of non-overwrite components of $\Delta g^\alpha$ in $\bar{S}_r$ will be at most $t'-n\rho_r$, and according to the $t'$-uniform property of LECSS,
 these components are each uniformly distributed,  and are jointly $(t'-n\rho_r)$-wise independent. 
  This means that the  non-overwrite components of $\Delta g^\alpha(\mathbf{X})$ in $\bar{S}_r$ are uniformly distributed over $\{0,1\}^{n^{ow}_{\bar{S}_r}}$. 
  %
 The  rest of the components of 
  $\Delta g^\alpha(\mathbf{X})$   in $\bar{S}_r$  are overwrite components (correspond to non-overwrite components of $g^\alpha$).
  Thus the distribution  $\Delta g^\alpha(\mathbf{X})$ is independent of the input (an AMD codeword $\mathsf{AMDenc}(\mathbf{m})$) of LECSS. 
  The third step of tampering experiment is applying the decoding function Dec on the tampered codeword (of the NM-code). 
Using $g^\alpha(\mathbf{x})=\mathbf{x}\oplus \Delta g^\alpha(\mathbf{x})$, we have
$$
\begin{array}{ll}
\mathsf{Dec}\left (g^\alpha(\mathbf{X})\right )
&=\mathsf{Dec}\left (\mathbf{X}\oplus\Delta g^\alpha(\mathbf{X})\right )\\
&=\mathsf{AMDdec}\left (\mathsf{AMDenc}(\mathbf{m})\oplus\mathsf{LECSSdec}(\Delta g^\alpha(\mathbf{X}))\right ),\\
\end{array}
$$
where the second equality follows from the linearity of the LECSS. 

To find the distribution of the output of the tampering experiment in this case,  denoted by $\left (\mathrm{Tamper}_{\mathbf{m}}^{f_{S_r,[n],g}}|\mathbf{X}_{S_r}=\alpha\right )$, we note that: 
\begin{itemize}
\item Conditioned on $\mathsf{LECSSdec}(\Delta g^\alpha(\mathbf{X}))= 0^{\ell}$,  the output of Dec is going to be $\mathsf{AMDdec}\left (\mathsf{AMDenc}(\mathbf{m})\right)=\mathbf{m}$.
\item Consider $\mathsf{LECSSdec}(\Delta g^\alpha(\mathbf{X}))\neq 0^{\ell}$. Since the distribution of $\Delta g^\alpha(\mathbf{X})$
 is completely determined by $f_{S_r,[n],g}$ and $\alpha$, it  
 is independent of the randomness of the AMD code. 
 According to Lemma \ref{lem: AMD},  the probability that the AMD decoder not outputting $\bot$ is at most $\mathsf{Pr}[\mathsf{LECSSdec}(\Delta g^\alpha(\mathbf{X}))\neq 0^{\ell}]\cdot\delta\leq \delta$.
\end{itemize}
Thus we can define the distribution $\mathcal{D}^{f_{S_r,[n],g}}_\alpha$ as follows:

$$
   \mathcal{D}^{f_{S_r,[n],g}}_\alpha \stackrel{def}= \left\{ 
    \begin{array}{l}
 \mathbf{z} \leftarrow  \left(\Delta g^\alpha(\mathbf{X})|\mathbf{X}_{S_r}=\alpha\right)
    \\
 \mathrm{Output} \ \mathsf{same}^*, \mbox{ if  LECSSdec}(\mathbf{z})=0^{\ell}; 
 \mathrm{Output}\ \bot,  \mbox{ otherwise}.
    \end{array}
    \right.
$$


This distribution  will be different from 
 the tampering experiment 
  when the AMD decoder  fails to  output $\bot$ for $\mathbf{z}\notin\{0^{\ell},\bot\}$. We then have
$$
\left (\mathrm{Tamper}_{\mathbf{m}}^{f_{S_r,[n],g}}|\mathbf{X}_{S_r}=\alpha\right )\stackrel{\delta}{\approx}\mbox{Patch}(\mathcal{D}^{f_{S_r,[n],g}}_\alpha, \mathbf{m}). 
$$

\item $n^{ow}_{\bar{S}_r}\in \left(t'-n\rho_r,\frac{|\bar{S}_r|}{2}\right]$. 
 The distribution $\mathcal{D}^{f_{S_r,[n],g}}_\alpha$ will only output $\perp$, using LECSS decoder error detection property.
 Firstly using the LECSS linearity, the decoder output will depend on the result of $\Delta g^\alpha$ on a codeword  (in $C_\alpha$).
 The given number of $n^{ow}_{\bar{S}_r}$ translates into the same number of non-overwrite for $\Delta g^\alpha$ on components in ${\bar{S}_r}$, and 
the rest of  components being overwrite  function.
The codeword components of $C_\alpha$  in $\bar{S}_r$  are  $(t'-n\rho_r)$-uniform,  and 
as said earlier non-overwrite functions do not affect a column probability distribution, which  are uniform because $C_\alpha$  in $\bar{S}_r$  is  $(t'-n\rho_r)$-uniform.

If none of the vectors in the list $\mathsf{Array}(\Delta g^\alpha(\mathbf{x})|\mathbf{x}\in C_\alpha))$ correspond to a valid codeword of LECSS, the LECSS decoder output will be always $\perp$. 
If there is a vector $\omega$ in 
$\mathsf{Array}(\Delta g^\alpha(\mathbf{x})|\mathbf{x}\in C_\alpha)$ that corresponds to a LECSS codeword,  there will be an undetected error. Note that $\omega$ may appear more than once in the list. Next, the distance property of LECSS (together with $(t'-n\rho_r)$-uniform property) is utilised to claim that the density of valid LECSS codewords in $\mathsf{Array}(\Delta g^\alpha(\mathbf{x})|\mathbf{x}\in C_\alpha)$ is very small. 
The argument had been used  in \cite[Proof of Theorem 4.1,  Case  3]{DzPiWi} for the function class $\bt$,  to quantify the decoder error.
  Authors showed that for a $(t,d)$-$\mathrm{LECSS}$ if the minimum distance  is $d > n/4$, 
the error probability is given by,
 \begin{equation}
\Pr[\mathsf{LECSSdec}(\Delta) \neq \perp )]\leq \frac{1}{2^t} +\left(\frac{t}{n(d/n-1/4)^2}\right)^{t/2},
  \end{equation}
where $\Delta$ is a vector random variable of $n$ components, more than half but less than $n-t$ of which are fixed values and the rest of components are $t$-uniform.
      
We  use the same argument and  make the following adjustments.  Firstly, the tampering functions are applied to $C_\alpha$ and so the 
tampered words 
$\mathsf{Array}(\Delta g^\alpha(\mathbf{x})|\mathbf{x}\in C_\alpha)$ will have fixed values on index set $S_r$.
So the part of components that can be different (between two vectors in the list) are in $\bar{S}_r$. For our proof  we consider $\bar{S}_r$.
Thus we only need $d' > (n-n\rho_r)/4$.
Also the non-overwrite components of  $\mathsf{Array}(\Delta g^\alpha(\mathbf{x})|\mathbf{x}\in C_\alpha)$ in $\bar{S}_r$  are  $(t'-\rho_r n)$-uniform and so we have,

  \begin{equation}
\mathsf{Pr}\left[\mathsf{LECSSdec}\left(\Delta g^\alpha(\mathbf{X})\right )\neq \bot|\mathbf{X}_{S_r}=\alpha\right] \leq \frac{1}{2^{t'-n\rho_r}} +\left(\frac{t'-n\rho_r}{n(\frac{d' }{n}-\frac{1-\rho_r}{4})^2}\right)^{\frac{t'-n\rho_r}{2}}.
  \end{equation}

\item $n^{ow}_{\bar{S}_r}\in \left(\frac{|\bar{S}_r|}{2}, n-t'\right)$. 
 The distribution $\mathcal{D}^{f_{S_r,[n],g}}_\alpha$ will only output $\perp$, using LECSS decoder error detection property.
We study $\mathsf{Array}(g^\alpha(\mathbf{x})|\mathbf{x}\in C_\alpha)$, the list of tampered codewords, and bound the probability of LECSS decoder cannot detect the error.
The argument  is similar to above.
This corresponds to Case 4 in the proof of Theorem 4.1 in \cite{DzPiWi}.   Using the required adjustment as outlined above, we will have

  \begin{equation}
\mathsf{Pr}\left[\mathsf{LECSSdec}\left(g^\alpha(\mathbf{X})\right )\neq \bot|\mathbf{X}_{S_r}=\alpha\right] \leq \frac{1}{2^{t'-n\rho_r}} +\left(\frac{t'-n\rho_r}{n(\frac{d' }{n}-\frac{1-\rho_r}{4})^2}\right)^{\frac{t'-n\rho_r}{2}}.
  \end{equation}

(It is worth noting that the argument in the above two cases use two different sets ( $\mathsf{Array}(\Delta g^\alpha(\mathbf{x})|\mathbf{x}\in C_\alpha)$ and $\mathsf{Array}(g^\alpha(\mathbf{x})|\mathbf{x}\in C_\alpha)$) that have the property that overwrite components in one, corresponds to non-overwrite component in the other.  The choice of the list is to allow many overwrite components and minimise the $d'$-distance requirement.)

\item $n^{ow}_{\bar{S}_r}\in \left[n-t',|\bar{S}_r|\right]$. 

This is the case that most of the codeword is overwritten, and non-overwritten part  is uniformly distributed.
This is because less than $t'$ components are not overwritten, and the set of vectors $\widetilde{C_\alpha}$ is $t'-n\rho_r$-uniform.
Thus the distribution is independent of $m$, and the decoder output distribution will have the same property also.
The distribution $\mathcal{D}^{f_{S_r,[n],g}}_\alpha$ in this case is defined as follows:

$$
   \mathcal{D}^{f_{S_r,[n],g}}_\alpha \stackrel{def}= \left\{ 
    \begin{array}{l}
 \mathbf{z} \leftarrow    g^\alpha(\mathbf{Y})
    \\
 \mathrm{Output}\  \mathsf{Dec}(\mathbf{z})      \end{array}
    \right.
$$

Since the simulation and the tampering experiment are identical in this case, 
$$
\left (\mathrm{Tamper}_{\mathbf{m}}^{f_{S_r,[n],g}}|\mathbf{X}_{S_r}=\alpha\right )\equiv\mbox{Patch}(\mathcal{D}^{f_{S_r,[n],g}}_\alpha, \mathbf{m}).
$$

\end{enumerate}

\remove{

\subsection{Relation of $\rbtl$ to other classes of tampering functions} \label{apdx: discussions}

\subsubsection{v.s. $\mathcal{F}_\rho^{add}$ and AWTP adversary}

A tampering model called Limited View Algebraic Tampering (LVAT) was studied in \cite{ICITS}.
Let $ [n] =  \{1,2,\cdots n \}$.  Define ${\cal S}^{[n]}_\rho$ to be the set of subsets of size $\rho n$ of  $[n]$.
\begin{definition}[$\mathcal{F}^{add}_{\rho}$]\cite{ICITS}
The set $\mathcal{F}^{add}_{\rho}$ of limited view algebraic tampering functions are defined as follows.
\begin{equation}\label{eq: AMD with leakage}
\mathcal{F}^{add}_{\rho}=\left\{f_{S,g}\ |\ S\in{\cal S}^{[n]}_\rho,g:\mathbb{F}_q^{n\rho}\rightarrow\mathbb{F}_q^{n}\right\},
\end{equation}
where the tampering function $f_{S,g}:\mathbb{F}_q^{n}\rightarrow\mathbb{F}_q^{n}$ is given by
$$f_{S,g}(\mathbf{x})=\mathbf{x}+ g(\mathbf{x}_{S}).$$ 
\end{definition}
\textcolor{red}{LVAT and AWTP are both adding. $\rbtl$ includes overwrite type of writing.}
Note that the tampering defined by $\mathcal{F}_{\rho}^{add}$  is algebraic tampering (adding an offset) and the tampering can affect all components. We now restrict to the binary case, namely, let $q=2$. By definition, $\mathcal{F}_{\rho}^{add}$ is the subset of $\rbtl$ for $\rho_r=\rho$ and $\rho_w=1$ with the choice of tampering restricted to $\{\mathsf{Keep},\mathsf{Flip}\}$. On the other hand, $\rbtl$ is a subset of $\mathcal{F}_{\rho}^{add}$ for $\rho=\rho_r+\rho_w$, when $\rho_r+\rho_w<1$. See Appendix \ref{apdx: discussions} for a discussion of the relation of $\rbtl$ to other tampering families.

$ \mathcal{F}_{\rho,1}$ contains exactly the tampering functions in $\mathcal{F}_\rho^{add}$ and those that can additionally overwrite on the positions outside the reading set.

Let $\rho_r=\rho$ and $\rho_w=1$. This particular case of (\ref{eq: over rwset}) is
$$
\mathcal{F}_{\rho,1}=\bigcup_{S\in \mathcal{S}^{n\rho}}\mathcal{F}_{S,[n]}.
$$
Now according to Lemma \ref{lem: number count}, in particular (\ref{eq: inotinSr}) and (\ref{eq: iinSr}) in its proof, 
$$
|\mathcal{F}_{S,[n]}|=(2^{n\rho})^{2^{n\rho}}\times(4^{n(1-\rho)})^{2^{n\rho}}=(2^{n})^{2^{n\rho}}\times(2^{n(1-\rho)})^{2^{n\rho}}
$$
and 
$$
|\mathcal{F}_{\rho,1}|=\underline{{n\choose n\rho}\times(2^{n})^{2^{n\rho}}}\times(2^{n(1-\rho)})^{2^{n\rho}}=|\mathcal{F}_\rho^{add}|\times(2^{n(1-\rho)})^{2^{n\rho}}.
$$
If we can show that $\mathcal{F}_\rho^{add}\subset\mathcal{F}_{\rho,1}$, then the above number count already confirms the statement we made a few lines ago. Finally, the fact that $\mathcal{F}_\rho^{add}\subset\mathcal{F}_{\rho,1}$ can be seen by writing a tampering function $f_{S,g}\in\mathcal{F}_\rho^{add}$ explicitly as follows.
$$
f_{S,g}(\mathbf{x})=\mathbf{x}\oplus g(\mathbf{x}_{S})=(x_1\oplus\Delta_1(\mathbf{x}_{S}),\cdots,x_n\oplus\Delta_n(\mathbf{x}_{S})),
$$
where $\Delta_i(\mathbf{x}_{S})=g(\mathbf{x}_{S})_{\{i\}}$ is a Boolean function with $n\rho$-bit input. It can be seen that the function $g$ induces a metafunction $g':\{0,1\}^{n\rho}\rightarrow \mathcal{B}_{1\rightarrow 1}^n$ such that 
$$g'(\mathbf{x}_{S})=(f_1,\cdots,f_n),$$ 
where $f_i(x_i)=x_i\oplus\Delta_i(\mathbf{x}_{S})$.

\subsubsection{v.s. ``arbitrary tampering a set $S\subset [n]$'' \cite{ChGu0}} 
This class of tampering functions was studied in \cite{ChGu0} as an intermediate result towards finding a rate upper bound for the split state NMC. Let $S\subset [n]$ be an index set of size $n\rho$ for $\rho<1$. This class of ``arbitrary tampering of $S$'' tampering functions includes all functions from $n$-bit to $n$-bit such that only the components in $S$ are involved in the tampering (read and written to) while the rest of the components are neither read or written to. 

\textcolor{red}{The following not well justified: MAYBE SAY THE CAPACITY OF THIS CLASS IS THE SAME AS $\rbtlr$ when $|S|=n\rho_r$, LATER SAY THAT when $\rho_r$ is half, the capacity is same as split. Interesting: these functions classes seem so different, but turn out equally hard? have same capacity. 
We now show that this class of tampering functions is exactly $\mathcal{F}_{S,S}$ in our notation. Firstly, according to (\ref{eq: fixed Sr Sw}), the size of $\mathcal{F}_{S_r,S_w}$ with $S_r=S_w=S$ is 
$$
|\mathcal{F}_{S,S}|=(2^{n\rho})^{2^{n\rho}}. 
$$
We also need to show that ``any tampering of $S$'' can be written as a tampering function in $\mathcal{F}_{S,S}$. Let $g\in\mathcal{M}[\{0,1\}^{n\rho},\{0,1\}^{n\rho}]$ be an arbitrary function. Then the following tampering is a valid tampering function in the above class.
$$
\mathbf{x}_{S}\mapsto g(\mathbf{x}_{S}) \mbox{ and }\mathbf{x}_{\bar{S}}\mapsto \mathbf{x}_{\bar{S}}.
$$
We need to find a metafunction $g':\{0,1\}^{n\rho}\rightarrow \mathcal{B}_{1\rightarrow 1}^{n\rho}$ such that 
$$
f_{S,S,g'}(\mathbf{x})_{S}=g(\mathbf{x}_{S}) \mbox{ and }f_{S,S,g'}(\mathbf{x})_{\bar{S}}=\mathbf{x}_{\bar{S}}.
$$
This can be done by firstly writing down the function $g\in\mathcal{M}[\{0,1\}^{n\rho},\{0,1\}^{n\rho}]$ explicitly as follows.
$$
g(\mathbf{y})=(g_1(\mathbf{y}),\cdots,g_{n\rho}(\mathbf{y})),
$$
where $g_j(\mathbf{y})=g(\mathbf{y})_{\{j\}}$ is a Boolean function with $n\rho$-bit input. Define a metafunction $g':\{0,1\}^{n\rho}\rightarrow \mathcal{B}_{1\rightarrow 1}^{n\rho}$ such that 
$$g'(\mathbf{x}_{S})=(f_{w_1},\cdots,f_{w_{n\rho}}),$$ 
where $f_{w_i}(x_{w_i})=g_i(\mathbf{x}_{S})$. Then $f_{S,S,g'}\in \mathcal{F}_{S,S}$ is the function we are after.
}


\subsubsection{v.s. $\mathcal{F}_{split}$}
For an even integer $n$, let $L=[1,\cdots,\frac{n}{2}]$ and $R=[\frac{n}{2}+1,\cdots,n]$. Then
$$
\mathcal{F}_{split}=\mathcal{F}_{all}^{[\frac{n}{2}]}\times\mathcal{F}_{all}^{[\frac{n}{2}]}:=\{(f_L, f_R)|f_L,f_R\in\mathcal{F}_{all}^{[\frac{n}{2}]} \}.
$$
This gives 
$$
|\mathcal{F}_{split}|=(2^{\frac{n}{2}})^{2^{\frac{n}{2}}}\cdot(2^{\frac{n}{2}})^{2^{\frac{n}{2}}}.
$$

\textcolor{red}{The following not well justified:
It is interesting to compare $\mathcal{F}_{split}$ and, for example, $\mathcal{F}_{\mathrm{BIT}}^{[n],L,[n]}$, which is the subset of $\mathcal{F}_{\mathrm{BIT}}^{[n],\frac{1}{2},1}$ with $S_r$ fixed to $L$. For these two classes of functions, tampering of each bit all depends on $\frac{n}{2}$ bits of the total $n$ input bits. The difference between the two classes of functions lies in the tampering of the bits in $R$. In the case of $\mathcal{F}_{split}$, the tampering at $i\in R$ depends on the $\frac{n}{2}$ bits in $R$, while in the case of $\mathcal{F}_{\mathrm{BIT}}^{[n],L,[n]}$, the tampering at $i\in R$ depends on the $\frac{n}{2}$ bits in $L$. According to (\ref{eq: fixed Sr Sw}), the size of $\mathcal{F}_{\mathrm{BIT}}^{[n],L,[n]}$ is
$$
|\mathcal{F}_{\mathrm{BIT}}^{[n],L,[n]}|=(2^{\frac{n}{2}})^{2^{\frac{n}{2}}}\cdot(4^{\frac{n}{2}})^{2^{\frac{n}{2}}}.
$$
These two classes of functions have little in common, more precisely,
$$
|\mathcal{F}_{split}\bigcap\mathcal{F}_{\mathrm{BIT}}^{[n],L,[n]}|=(2^{\frac{n}{2}})^{2^{\frac{n}{2}}}\cdot(4^{\frac{n}{2}}).
$$
This is because the function $g: \{0,1\}^{\frac{n}{2}}\rightarrow\{\mathsf{Keep},f^{\uparrow\downarrow},f^0,f^1\}^{n}$ for $f_{L,[n],g}\in\mathcal{F}_{split}\bigcap\mathcal{F}_{\mathrm{BIT}}^{[n],L,[n]}$ should satisfy that $g(\alpha)_{R}=g(\beta)_{R}$ for any $\alpha,\beta\in\{0,1\}^{\frac{n}{2}}$ (this is to make the tampering at $i\in R$ independent of the values $\alpha,\beta$).
}

{
\subsubsection{v.s. $\mathsf{Local}^{\ell_o}$ and $\mathcal{L}ocal_{\ell_i}^{\ell_o}$}\label{apdx: vs circuit}
A class of tampering functions, denoted $\mathsf{Local}^{\ell_o}$, is recently considered for NMC construction in \cite{circuit paper}. This is the class of functions from $n$-bit strings to $n$-bit strings that can be described by bounded ($\leq \log_{b^{fi}}\ell_o$) depth circuits with bounded ($\leq b^{fi}$) fan-in and unbounded fan-out. 
These functions share an important feature with functions in $\rbtl$, which is each output bit of the function depends on at most $t$ input bits ($t=\ell_o$ and $t=n\rho_r$, respectively). While in the case of $\mathsf{Local}^{\ell_o}$, different output bits, for example $f(\mathbf{x})_{\{i\}}$ and $f(\mathbf{x})_{\{j\}}$, can depend on different set of input bits, for example $\mathbf{x}_{\{S^i\}}$  and $\mathbf{x}_{\{S^j\}}$, every output bit of a function $f_{S_r,S_w,g}\in\rbtl$ depends on the same set of input bits $\mathbf{x}_{S_r}$. From this perspective, tampering by $\mathsf{Local}^{\ell_o}$ is more powerful than tampering by $\mathcal{F}_{\rho_r,1}$ when $\ell_o=n\rho_r$. This can be seen by arguing that $\rbtl$ is a strict subset of $\mathsf{Local}^{\ell_o}$. The main result of \cite{circuit paper} is as follows. For any $\ell_o=o\left(\frac{n}{\log n}\right)$, there is an explicit NMC with respect to $\mathsf{Local}^{\ell_o}$, which encodes a $2k$-bit string into a string of length $n=\Theta(k\ell_o)$. The encoding and decoding run in time polynomial in $n$. 


}

}

\remove{

\subsection{Concrete constructions of (one-shot) NMC}\label{apdx: Material for writing Intro}

{

\begin{enumerate} 
\item Compartmentalized 
tampering functions: A $C$-split state tampering function $f$ can be written as $f=(f_1,\cdots,f_C)$,  where $f_i\in\mathcal{F}_{all}^{[\frac{n}{C}]}$.
\begin{enumerate}
\item $n$-split state (i.e. Bit-wise Independent Tampering (BIT) model): This is the first class of tampering functions considered in the NMC literature. 
                   
                   \begin{itemize}
                   \item \cite{DzPiWi} proposed the model and a construction using a $(d,t)$-$\mathrm{LECSS}$ and an AMD code. Unfortunately, $(d,t)$-$\mathrm{LECSS}$ with relative distance $\frac{d}{n}>\frac{1}{4}$ and $t=\Omega(\log\frac{1}{\varepsilon})$ can not be concretely instantiated. The construction can be probabilistically instantiated and achieve rate at least $0.18$, error $\varepsilon=2^{-\Omega(n)}$ with probability $1-2^{-\Omega(n)}$. 
                   
                    \item \cite{ChGu1} proposed the first capacity-achieving BIT NMC. \textcolor{blue}{rewrite from here} a compiler construction that uses a bit-wise independent NMC with optimal rate (shorter, guaranteed to exist by the lower bound in Theorem \ref{th: 1-rho rate}) and a concrete bit-wise independent NMC with suboptimal rate to construct a bit-wise independent NMC with optimal rate in any (longer) length. 
The suboptimal concrete code can be the one constructed in \cite{DzPiWi} or the split state NMC constructed in \cite{additivecombinatorics} (subsequent work \cite{10split} showed that the $10$-split state NMC they constructed is a better choice). This is the first capacity-achieving construction, achieving rate $1$.

                 \item 
                 \cite{MajiTCC} proposed another capacity-achieving bit-wise independent NMC. It is also a compiler construction and can tolerate permutation tampering as well. Concrete codes were instantiated. See below Hybrid model for more details.
                
                 \item 
                 \cite{linear time} proposed the first linear-time encode/decode capacity-achieving BIT NMC construction. say more...
                     \end{itemize}
\item $10$-split state: 
\cite{10split} constructed the first $10$-split state NMC. The construction relies on the design of an explicit seedless non-malleable extractor for $10$ independent sources. Seedless \footnote{Seeded non-malleable extractors \cite{SNMext} were proposed for a different application.} non-malleable extractors were proposed 
in \cite{ChGu1} for the construction of split state NMC. This construction of $10$-split state NMC is the first constant split state ($t$ is a constant) NMC with constant rate (subsequent work \cite{NMreduction} used it to obtain the first constant rate split state NMC). 

\item $2$-split state (aka split state): This structured class of tampering functions attracts a lot of attention because it captures very general and realistic attacks. 
                     \begin{itemize}
                     \item \cite{Liucomputational} constructed the first split state NMC with computational security. On top of being non-malleable with respect to $\mathcal{F}_{split}$, it is also leakage resilient with respect to a class of leakage functions. The construction uses a NIZK and a PKE and is in CRS model.
                    \item 
                    \cite{onebit} constructed the first split state NMC with information-theoretic security. But the code can only encode one bit message. The construction uses a seedless two-source extractor. For example, the inner product over finite field encodes the message $0$ into two orthogonal vectors and the message $1$ into two non-orthogonal vectors. 
                    \item 
                    \cite{additivecombinatorics} extended the construction in \cite{onebit} to a new construction that encodes more than one bit message. The extension is strictly non-trivial and uses recent results from additive combinatorics. But both constructions do not give constant rate split state NMC. 
                    \item 
                    \cite{NMreduction} constructed the first constant rate split state NMC. This was achieved by building non-malleable reductions (see Definition \ref{def: NMreduction}) to reduce construction of split state NMC to construction of $10$-split state NMC, for which constant rate construction was proposed in \cite{10split} mentioned above. 
                    \end{itemize}
\end{enumerate}

\item Hybrid of bit-wise functions and permutations ($\mathcal{F}_{\mathrm{BIT}}\times S_n$): 
                    \begin{itemize}
                    \item \cite{Majicrypto} proposed the model and the first construction. The model is motivated by string non-malleable commitment: a string non-malleable commitment can be constructed from bit non-malleable commitment in a black-box manner if a NMC with respect to $\mathcal{F}_{\mathrm{BIT}}\times S_n$ exists. 
The construction uses an (A)ECSS, a balanced randomized unary encoding and an additive secret sharing scheme as building blocks, all of which are easily instantiated. The construction achieves suboptimal rate.
                    \item \cite{MajiTCC} constructed a compiler that uses the above suboptimal code as building block and achieves rate $1$.
                    \end{itemize}
\item Local functions $\mathsf{Local}^{\ell_o}$: 
these are functions that can be described by bounded depth circuits with bounded fan-in and unbounded fan-out.

\cite{circuit paper} proposed the model and constructed the first code. The main step of the construction uses Reconstructable Probabilistic Encryption (RPE) to reduce the construction of local NMC to the construction of split state NMC.

\end{enumerate}


}

\subsection{Applications other than tamper resilient cryptography}
\begin{enumerate}
\item \cite{Majicrypto,MajiTCC}: From a non-malleable single-bit commitment scheme to a non-malleable multi-bit commitment scheme by encoding the value with a specific non-malleable code  before commit to the codeword bit by bit. The non-malleable code is with respect to the class $\bt\circ S_n$, which is consist of tampering functions of the form $f=(f_1,\cdots,f_n;\pi)$, where $f_i\in\{\mathsf{Set0},\mathsf{Set1},\mathsf{Keep},\mathsf{Flip}\}$ and $\pi\in S_n$ the symmetric group on $[n]$. The tampering function $f$ takes $\mathbf{x}\in\{0,1\}^n$ as input and outputs
$$
f(\mathbf{x})=\left(f_{\pi^{-1}(1)}(x_{\pi^{-1}(1)}),\cdots,f_{\pi^{-1}(n)}(x_{\pi^{-1}(n)})\right).
$$
The application to non-malleable commitment scheme and a construction of non-malleable codes with respect to $\bt\circ S_n$ was given in \cite{Majicrypto}. A rate-1 compiler construction of non-malleable codes with respect to $\bt\circ S_n$ was given in \cite{MajiTCC}.

\textcolor{blue}{COOL ARGUMENT: This application also brings out an important aspect of non-malleable codes: whether they are explicit or not. While there indeed is an efficient randomized construction of non-malleable codes that can resist permutations [FMVW14], it will not be suitable in this case, because neither the sender nor the receiver in a commitment scheme can be trusted to pick the code honestly (Bob could play either role), and non-malleable codes are not guaranteed to stay non-malleable if the description of the code itself can be tampered with.}

\item \cite{public key}: From single-bit CCA-secure PKE to multi-bit SD-CCA-secure PKE, where SD-CCA-secure PKE stands for Self Destruct Chosen Ciphertext Attack secure Public Key Encryption, which is a weaker variant of CCA-secure PKE, where the decryption becomes dysfunctional once the attacker submits an invalid ciphertext. The construction motivates the definition of \textit{adaptive continuous non-malleability}. The class $\mathcal{F}$ of tampering functions is actually a sequence $(\mathcal{F}^{(i)})_{i\geq 1}$ of function families with $\mathcal{F}^{(i)}\subset\{f|f:(\{0,1\}^n)^i\rightarrow\{0,1\}^n\}$, and after encoding $i$ messages, the adversary chooses functions from $\mathcal{F}^{(i)}$. Starting from the first message, the adversary chooses $q$ functions from $\mathcal{F}^{(1)}$ to perform non-persistent tampering. Then the second message is encoded and the adversary chooses $q$ functions from $\mathcal{F}^{(2)}$ to perform non-persistent tampering. Let $\ell$ be the number of messages encoded during the entire tampering experiment. The relation of $(\ell,q)$-adaptive continuous non-malleability to other previously defined non-malleability is as follows.  The $(1,q)$-adaptive continuous non-malleability is the $q$-continuous non-persistent self-destruct non-malleability (same as in \cite{CNMC}). The $(1,1)$-adaptive continuous non-malleability is the default non-malleability. In particular, the class of tamper functions required is $\mathcal{F}_{copy}\stackrel{def}{=}(\mathcal{F}^{(i)})_{i\geq 1}$, where $\mathcal{F}_{copy}^{(i)}\subset\{f|f:(\{0,1\}^n)^i\rightarrow\{0,1\}^n\}$ and each function $f\in\mathcal{F}_{copy}^{(i)}$ is characterised by a vector $f=(f_1,\cdots,f_n)$ where $f_j\in\{\mathsf{Set0},\mathsf{Set1},copy_1,\cdots,copy_i\}$, with the meaning that $f$ takes as input $i$ codewords $(\mathbf{c}^{(1)},\cdots,\mathbf{c}^{(i)})$ and outputs an $n$-bit string $\mathbf{c}'=(c_1',\cdots c_n')$ in which each bit $c_j'$ is either set to $0$, set to $1$, or copied from the $j^{\mbox{th}}$ bit in a codeword $\mathbf{c}^{(v)}$ ($copy_v$) for $v\in\{1,\cdots,i\}$.
It is also shown that the construction of \cite{DzPiWi}, in particular, the LECSS gives an adaptive continuous non-malleable code with respect to $\mathcal{F}_{copy}$ \cite[Theorem 4]{public key}.
\end{enumerate}

}

\end{document}